\newcommand\commentout[1]{}
\newtheorem{procedure}{Procedure}{\bfseries}{\itshape}
\newtheorem{definition}{Definition}{\bfseries}{\itshape}
\newtheorem{example}{Example}{\bfseries}{\itshape}
\newtheorem{theorem}{Theorem}{\bfseries}{\itshape}
\newtheorem{lemma}{Lemma}{\bfseries}{\itshape}
\newenvironment{varlem}[1][Lemma]{\begin{trivlist}
\item[\hskip \labelsep {\bfseries #1}]}{\end{trivlist}}
\newcommand{\bbN}{\mathbb{N}}
\newcommand{\calA}{\mathcal{A}}
\newcommand{\calB}{\mathcal{B}}
\newcommand{\calC}{\mathcal{C}}
\newcommand{\calD}{\mathcal{D}}
\newcommand{\calF}{\mathcal{F}}
\newcommand{\calL}{\mathcal{L}}
\newcommand{\calM}{\mathcal{M}}
\newcommand{\calR}{\mathcal{R}}
\newcommand{\calT}{\mathcal{T}}
\newcommand{\scrG}{\mathscr{G}}
\newcommand{\scrL}{\mathscr{L}}
\DeclareMathOperator{\Projh}{\mathrm{Proj_{h}}}
\DeclareMathOperator{\Projr}{\mathrm{Proj_{r}}}
\DeclareMathOperator{\Succ}{\mathit{Succ}}
\DeclareMathOperator{\range}{\mathrm{range}}
\DeclareMathOperator{\GC}{\mathrm{GC}}
\DeclareMathOperator{\CB}{\mathrm{CB}}
\DeclareMathOperator{\GB}{\mathrm{GB}}
\DeclareMathOperator{\B}{\mathrm{B}}
\DeclareMathOperator{\Parity}{\mathrm{P}}
\DeclareMathOperator{\Rabin}{\mathrm{R}}
\DeclareMathOperator{\muR}{\mathrm{\mu R}}
\DeclareMathOperator{\muGC}{\mathrm{\mu GC}}
\DeclareMathOperator{\Streett}{\mathrm{S}}
\DeclareMathOperator{\CA}{\mathcal{CA}}
\DeclareMathOperator{\gc}{\mathrm{gc}}
\DeclareMathOperator{\cb}{\mathrm{cb}}
\DeclareMathOperator{\even}{\mathrm{even}}
\DeclareMathOperator{\odd}{\mathrm{odd}}
\DeclareMathOperator{\Inf}{\mathrm{Inf}}
\DeclareMathOperator{\Cover}{\mathrm{Cover}}
\DeclareMathOperator{\Mini}{\mathrm{Mini}}
\DeclareMathOperator{\width}{\mathit{width}}
\DeclareMathOperator{\TOP}{\mathrm{TOP}}
\DeclareMathOperator{\TOPs}{\mathrm{TOPs}}
\DeclareMathOperator{\ITS}{\mathrm{ITS}}
\DeclareMathOperator{\MWP}{\mathrm{MWP}} 
\begin{document}

\title{Tight Upper Bounds for Streett and Parity Complementation}

\author{Yang Cai\\
MIT CSAIL \\
The Stata Center, 32-G696 \\
Cambridge, MA 02139 USA \\
ycai@csail.mit.edu \\
\and
Ting Zhang \\
Iowa State University \\
226 Atanasoff Hall \\
Ames, IA 50011 USA\\
tingz@iastate.edu
}

\date{}
\maketitle
\thispagestyle{empty}

\begin{abstract}
Complementation of finite automata on infinite words is not only a fundamental problem in automata theory, but also serves as a cornerstone for solving numerous decision problems in mathematical logic, model-checking, program analysis and verification. For Streett complementation, a significant gap exists between the current lower bound $2^{\Omega(n\lg nk)}$ and upper bound $2^{O(nk\lg nk)}$, where $n$ is the state size, $k$ is the number of Streett pairs, and $k$ can be as large as $2^{n}$. Determining the complexity of Streett complementation has been an open question since the late '80s. In this paper show a complementation construction with upper bound $2^{O(n \lg n+nk \lg k)}$ for $k = O(n)$ and $2^{O(n^{2} \lg n)}$ for $k = \omega(n)$, which matches well the lower bound obtained in~\cite{CZ11a}. We also obtain a tight upper bound $2^{O(n \lg n)}$ for parity complementation.
\end{abstract}

\setlength{\columnsep}{-10pt}

\section{Introduction}
\label{sec:introduction}

Automata on infinite words ($\omega$-automata) have wide applications in synthesis and verification of reactive concurrent systems. Complementation plays a fundamental role in many of these applications, especially in solving the language containment problem: whether a language recognized by automaton $\calA$ is contained by another language represented by automaton $\calB$, which is equivalent to whether the language of $\calA$ and the complementary language of $\calB$ intersect. In automata-theoretic model checking~\cite{Kur94,VW86}, both system behaviors and logical specifications are represented as formal languages, and model checking by and large amounts to solving the corresponding language containment problem. As both language intersection and emptiness test are rather easy, the efficiency of complementation becomes crucial to practical deployment of model-checking tools. For this reason and many others, determining the state complexity of the complementation problem has been extensively studied in the last four decades~\cite{Var07}.

\paragraph{Related Work.} $\omega$-automata were invented by B\"{u}chi in 1962 as a method of attack on definability and decision problems for monadic second order logic on arithmetics (S1S)~\cite{Buc62}. That type of $\omega$-automata are nowadays called B\"{u}chi automata. The initial B\"{u}chi complementation was not explicitly constructive and required double exponential blow-up~\cite{Buc62}. But since then B\"{u}chi complementation has been extensively studied. The upper bound was continuously improved to $2^{O(n^{2})}$~\cite{SVW87}, $2^{O(n\lg n)}$~\cite{Saf88}, $O((6n)^{n})$~\cite{KV01}, $O((0.97n)^{n})$~\cite{FKV06} and finally to $O(n^{2}L(n))$ where $L(n)\approx(0.76n)^{n}$~\cite{Sch09}, which matched well the lower bound $\Omega(L(n))$~\cite{Yan06}.

Complementation for automata with rich acceptance conditions, such as Rabin automata and Streett automata, is much more sophisticated. Kupferman and Vardi showed a $2^{O(nk \lg n)}$ complementation construction for Rabin automata~\cite{KV05a}, and we showed this construction is essentially optimal~\cite{CZL09}. This leaves Streett complementation the last classical problem where the gap between the lower and upper bounds is substantial. Besides that, Streett complementation has an importance of its own. Streett automata share identical algebraic structures with B\"{u}chi automata, except being equipped with richer acceptance conditions. A Streett acceptance condition comprises a finite list of indexed pairs of sets of states. Each pair consists of an enabling set and a fulfilling set. A run is accepting if for each pair, if the run visits states in the enabling set infinitely often, then it also visits states in the fulfilling set infinitely often. This naturally corresponds to the \emph{strong fairness} condition that infinitely many requests are responded infinitely often, a necessary requirement for meaningful computations~\cite{FK84,Fra86}. Another advantage of Streett automata is that they are much more succinct than B\"{u}chi automata; it is unavoidable in the worst case to have $2^{n}$ state blow-up to translate a Streett automaton with $O(n)$ states and $O(n)$ index pairs to an equivalent B\"{u}chi automaton~\cite{SV89}. An interesting question is: to what extent does the gain from the succinctness have to be paid back at the time of complementation?

The first construction for Streett complementation was given by Safra and Vardi, and that construction required $2^{O((nk)^{5})}$ state blow-up~\cite{SV89}. Klarlund improved this bound to $2^{O(nk \lg nk)}$~\cite{Kla91}. The same bound was achieved by Safra via determinization~\cite{Saf92}, by Piterman with an improved determinization construction~\cite{Pit06}, and by Kupferman and Vardi~\cite{KV05a}. However, so far no construction has been proved to cost less than $2^{O(nk \lg nk)}$ states. The question of whether Streett complementation can be further improved from $2^{O(nk \lg nk)}$ has been constantly raised in the recent literature~\cite{KV05a,Yan06,Var07}. In this paper we answer this question affirmatively.

\paragraph{Ranking-based Complementation.}
A Ranking-based complementation was first proposed by Klarlund~\cite{Kla91}. Klarlund's B\"{u}chi complementation (resp. Streett complementation) relies on \emph{quasi co-B\"{u}chi measure} (resp. \emph{quasi Rabin measure}), which is a ranking function on states in a run graph, measuring the progress of a run toward being accepted. By this complementation scheme, Klarlund gave a $2^{O(n\lg n)}$ B\"{u}chi complementation and a $2^{O(nk\lg nk)}$ Streett complementation~\cite{Kla91}. Kupferman and Vardi developed a similar idea into an elegant and comprehensive framework~\cite{KV05b,Kup06}, obtaining complementation constructions for B\"{u}chi~\cite{KV01}, generalized B\"{u}chi~\cite{KV04}, Rabin and Streett~\cite{KV05a}.

\paragraph{Our Results.}
Our Streett complementation is obtained by improving Kupferman and Vardi's construction in~\cite{KV05a}. We show that the larger the Rabin index size $k$, the higher the correlation between infinite paths in a run graph satisfying a universal Rabin condition (the dual of an existential Streett condition), and characterize the correlation using two tree structures: $\ITS$ (\emph{Increasing Tree of Sets}) and $\TOP$ (\emph{Tree of Ordered Partitions}), both with elegant combinatorial properties. We show that our construction renders a upper bound $U(n,k)$, which is $2^{O(n \lg n+nk \lg k)}$ for $k = O(n)$ and $2^{O(n^{2} \lg n)}$ for $k = \omega(n)$. $U(n,k)$ is a significant improvement from the previous best bound when $k=\omega(n)$. Speaking loosely, we gain succinctness without paying a dramatically higher price for complementation. $U(n,k)$ also matches the lower bound $L(n,k)$, which is $2^{\Omega(n \lg n+nk \lg k)}$ for $k = O(n)$ and $2^{\Omega(n^{2} \lg n)}$ for $k = \omega(n)$~\cite{CZ11a}. By a similar technique, we also obtain a $2^{O(n \lg n)}$ upper bound for parity complementation, which is essentially optimal, as parity automata generalizes B\"{u}chi automata, whose complementation lower bound is $2^{\Omega(n \lg n)}$~\cite{Mic88,Lod99}. This is surprising as the index size $k$ (though small as $k \le \lfloor (n+1)/2\rfloor$) has no appearance in the asymptotical bound. We believe this is of practical interest as well, because it tells us that parity automata provide a richer acceptance condition without incurring an asymptotically higher cost on complementation. Combining the result with the one in~\cite{CZ11a} and previous findings in the literature, we now have a complete characterization of complementation complexity for $\omega$-automata of common types. Figure~\ref{fig:bound-summary} summarizes these results.

\begin{figure}[th!]
\begin{center}
\begin{tabular}{|l||ll|l|l|}
\hline
Type        & Bound                  &                       & Lower            & Upper        \\ \hline
B\"{u}chi        & $2^{\Theta(n \lg n)}$  &                  & \cite{Mic88}     & \cite{Saf88} \\ \hline
Generalized B\"{u}chi      & $2^{\Theta(n \lg nk)}$ & $k = O(2^{n})$   & \cite{Yan06}     & \cite{KV05a} \\ \hline
\multirow{2}{*}{Streett}  & $2^{\Theta(n \lg n + nk \lg k)}$ & $k = O(n)$      & \multirow{2}{*}{\cite{CZ11a}}  & \multirow{2}{*}{this} \\
                             & $2^{\Theta(n^{2} \lg n)}$ & $k=\omega(n)$   &      &       \\ \hline
Rabin    & $2^{\Theta(nk \lg n)}$ & $k = O(2^{n})$   & \cite{CZL09}     & \cite{KV05a} \\ \hline
Parity   & $2^{\Theta(n \lg n)}$  & $k = O(n)$       & \cite{Mic88}     & this         \\ \hline
\end{tabular}
\end{center}
\caption{\textrm{The complementation complexities for $\omega$-automata of common types. Note that co-B\"{u}chi and generalized co-B\"{u}chi automata are not listed here because they are not complete for $\omega$-regular languages, i.e., the class of $\omega$-languages that can be recognized by co-B\"{u}chi automata (or generalized co-B\"{u}chi automata) is a proper subclass of $\omega$-regular languages.}}
\label{fig:bound-summary}
\end{figure}

\paragraph{Paper Organization.}
Section~\ref{sec:preliminaries} introduces basic notations and terminology in automata theory. Section~\ref{sec:ranking-based-complementation} presents the framework of ranking based complementation; it introduces B\"{u}chi complementation~\cite{KV01}, generalized B\"{u}chi complementation~\cite{KV04} and Streett complementation~\cite{KV05a}. Section~\ref{sec:streett-complementation} presents our Streett complementation construction and Section~\ref{sec:complexity} proves its complexity. Section~\ref{sec:parity-complementation} establishes a tight upper bound for parity complementation. Section~\ref{sec:conclusion} concludes with a discussion of future work. All proofs are placed in the appendix. 
\section{Preliminaries}
\label{sec:preliminaries}

\paragraph{Basic Notations.}
Let $\bbN$ denote the set of natural numbers. We write $[i..j]$ for $\{ k \in \bbN \, \mid \, i \le k \le j\}$, $[i..j)$ for $[i..j-1]$, $[n]$ for $[0..n)$, and $[n]^{\even}$ and $[n]^{\odd}$ for even numbers and odd numbers in $[n]$, respectively. For an infinite sequence $\varrho$, we use $\varrho(i)$ to denote the $i$-th component for $i \in \bbN$. For a finite sequence $\alpha$, we use $|\alpha|$ to denote the length of $\alpha$, $\alpha[i]$ ($i \in [1..|\alpha|]$) to denote the object at the $i$-th position, and $\alpha[i..j]$ (resp. $\alpha[i..j)$) to denote the subsequence of $\alpha$ from position $i$ to position $j$ (resp. $j-1$). When we compare finite sequences of numbers, $>_{m}, \ge_{m}, =_{m}, <_{m}, \le_{m}$ mean the corresponding standard lexicographical orderings up to position $m$. We reserve $n$ and $k$ as parameters of complementation instances ($n$ for state size and $k$ for index size), and define $\mu=\min(n,k)$ and $I=[1..k]$.


\paragraph{Automata and Runs.}
A finite automaton on infinite words ($\omega$-automaton) is a tuple $\calA=(\Sigma, Q, Q_{0}, \Delta, \calF)$ where $\Sigma$ is an alphabet, $Q$ is a finite set of states, $Q_{0} \subseteq Q$ is a set of initial states, $\Delta \subseteq Q \times \Sigma \times Q$ is a set of transitions, and $\calF$ is an acceptance condition.

An infinite word ($\omega$-word) over $\Sigma$ is an infinite sequence of letters in $\Sigma$. A \emph{run} $\varrho$ of $\calA$ over an $\omega$-word $w$ is an infinite sequence of states in $Q$ such that $\varrho (0)\in Q_{0}$
and, $\langle\varrho(i),w(i),\varrho(i\!+\!1)\rangle \in \Delta$ for $i \in \bbN$. Let $\Inf(\varrho)$ be the set of states that occur infinitely many times in $\varrho$. An automaton accepts $w$ if a run $\varrho$ over $w$ exists that satisfies $\calF$, which is usually defined as a predicate on $\Inf(\varrho)$. The language of $\calA$, written $\scrL(\calA)$, is the set of $\omega$-words accepted by $\calA$.

\paragraph{Acceptance Conditions and Types.}
$\omega$-automata are classified according to their acceptance conditions. Below we list automata of common types. Let $G$ and $B$ be functions from $I$ to $2^{Q}$.
\begin{itemize}
\item \emph{Generalized B\"{u}chi}: $\langle B \rangle_{I}$: $\forall i \in I, \Inf(\varrho) \cap B(i) \neq \emptyset$.
\item \emph{B\"{u}chi}: $\langle B \rangle_{I}$ with $I=\{1\}$ (i.e., $k=1$).
\item \emph{Streett}: $\langle G,B\rangle_{I}$: $\forall i \in I$, $\Inf(\varrho) \cap G(i)\neq\emptyset \to \Inf(\varrho)\cap B(i)\neq\emptyset$.
\item \emph{Parity}: $\langle G,B\rangle_{I}$ with $B(1) \subset G(1) \subset \cdots \subset B(k) \subset G(k)$.
\item \emph{Generalized co-B\"{u}chi}: $[B]_{I}$: $\exists i \in I$, $\Inf(\varrho) \cap B(i) = \emptyset$.
\item \emph{Co-B\"{u}chi}: $[B]_{I}$ with $I=\{1\}$ (i.e., $k=1$).
\item \emph{Rabin}: $[G,B]_{I}$: $\exists i \in I$, $\Inf(\varrho)\cap G(i)\neq\emptyset \wedge \Inf(\varrho)\cap B(i)=\emptyset$.
\end{itemize}
We use $\GB$, $\B$, $\Streett$, $\Parity$, $\GC$, $\CB$, and $\Rabin$, respectively, to denote the above acceptance conditions. By $T$-automata we mean the $\omega$-automata with $T$-condition. Note that $\B$ and $\CB$, $\GB$ and $\GC$, and $\Streett$ and $\Rabin$ are dual to each other, respectively. Also note that generalized B\"{u}chi and parity automata are both subclasses of Streett automata, and so are generalized co-B\"{u}chi and parity automata to Rabin automata. Let $J \subseteq I$. We use $[G,B]_{J}$ to denote the Rabin condition with respect to only indices in $J$. When $J$ is a singleton, say $J=\{j\}$, we simply write $[G(j),B(j)]$ for $[G,B]_{J}$. The same convention is used for other conditions. For a Streett condition $\langle G,B\rangle_{I}$, we can assume that $B$ is injective, because if $B(i)=B(i')$ for two different $i,i' \in I$, then we can replace $\langle G,B\rangle_{\{i,i'\}}$ by $\langle G(i)\cup G(i'), B(i) \rangle$. The same assumption is made for any Rabin condition $[G, B]_{I}$.

\paragraph{$\Delta$-Graphs.}
A \emph{$\Delta$-graph} of an $\omega$-word $w$ under $\calA$ is a directed graph $\scrG_{w}=(V,E)$ where $V=Q \times \bbN$ and $E=\{\langle\langle q,l\rangle,\langle q',l+1\rangle\rangle \in V \times V \, \mid \, q, q' \in Q,\ i \in \bbN, \langle q,w(i),q'\rangle\in\Delta \, \}$. By the \emph{$i$-th level}, we mean the vertex set $Q \times \{i\}$. Let $S$ be a subset of $Q$. We call a vertex $v=\langle q, l\rangle$ \emph{$S$-vertex} if $q \in S$. When level index is of no importance in the context, we use $q$ and $v$ interchangeably. In particular, by an abuse of notation we write $v \in S$ to mean $v = \langle q, l\rangle$ for some $l \in \bbN$ and $q \in S$. $\Delta$-graphs for finite words are similarly defined. The length of a finite $\Delta$-graph is the number of levels minus $1$. By unit $\Delta$-graphs we mean $\Delta$-graphs of length $1$. A unit $\Delta$-graph encodes all possible transitions upon reading a letter. By \emph{width} of $\scrG_{w}$ (written $\width(\scrG_{w}$)) we mean the maximum number of pairwise non-intersecting infinite paths in $\scrG_{w}$. Clearly, for any $w$, $\width(\scrG_{w}) \le |Q|$. 
\section{Ranking-based Complementation}
\label{sec:ranking-based-complementation}

In this section we introduce ranking-based complementation constructions developed by Klarlund, Kupferman and Vardi~\cite{Kla91,KV01,KV05b,Kup06}. Note that all complexity related notions are parameterized with $n$ and $k$, but we do not list them explicitly unless required for clarity. We adopt the following naming convention: when we talk about behaviors of a source automaton, a $T$-condition means an \emph{existential} one (i.e., a path in a $\Delta$-graph that satisfies $T$), while in the context of complementation, a $T$-condition means a \emph{universal} one (i.e., every path in a $\Delta$-graph satisfies $T$).

\paragraph{Ranking-based Complementation Scheme.}
Let $\calA$ be a $T$-automaton and $\CA$ a purported B\"{u}chi automaton that complements $\calA$. An $\omega$-word $w$ is accepted by $\calA$ if and only if the $\Delta$-graph $\scrG_{w}$ contains an infinite path that satisfies the $T$-condition. Consequently, $w$ is accepted by $\CA$ if and only if \emph{all} paths in $\scrG_{w}$ satisfy the dual co-$T$ condition (for short, $\scrG_{w}$ is co-$T$ accepting). Complementation essentially amounts to transforming a \emph{universal} co-$T$ condition into an \emph{existential} B\"{u}chi condition. Rankings on $\Delta$-graphs provide a solution; $\scrG_{w}$ satisfying a universal co-$T$ condition is precisely captured by the existence of a so-called odd co-$T$ ranking on $\scrG_{w}$. Complementation then reduces to recognition of $\Delta$-graphs that admit odd co-$T$ rankings.

The general scheme goes as follows. Vertices of $\scrG_{w}$ are associated with certain values. The association at a level can be viewed as a function with domain $Q$ (with level indices dropped), called \emph{co-$T$ level ranking}. The values in the range of a co-$T$ level ranking are called \emph{co-$T$ ranks}, and the $n$-tuple of co-$T$ ranks at a level is called a \emph{co-$T$ level rank}. By a \emph{co-$T$ ranking} we mean an $\omega$-sequence of co-$T$ level rankings, each of which is associated with a level in $\scrG_{w}$. Co-$T$ rankings are required to satisfy a \emph{local} property, which holds between every two adjacent levels and is \emph{solely} defined with respect to the unit $\Delta$-graph of the two levels. The local property therefore can be enforced in a step-by-step check by the transitions of $\CA$. But the local property itself is not enough to ensure that a co-$T$ condition holds universally. A special kind of co-$T$ ranking, called \emph{odd} co-$T$ ranking, is singled out. A co-$T$ ranking is \emph{odd} if and only if every path visits certain vertices (called \emph{odd} vertices) infinitely many times. This \emph{global} property can be captured by a B\"{u}chi condition, using the Miyano-Hayashi breakpoint technique for universality (alternation) elimination~\cite{MH84}.

Let $\calA = \langle Q, Q_{0}, \Sigma, \Delta, \calF^{T} \rangle$ be a source $T$-automaton. The complementation algorithm produces a target B\"{u}chi automaton $\CA=\langle Q', Q'_{0}, \Sigma, \Delta', \langle F'\rangle \rangle$. The state set $Q'$ is $2^{Q} \times 2^{Q} \times \calR$, where for $\langle S, O, g \rangle \in Q'$, $S$ records the reachable states, $O \subseteq S$ records the reachable states that have an obligation to visit \emph{odd} vertices in the future, and $g$ is a guessed co-$T$ level ranking, all at the current level. The transition function $\Delta': Q' \to 2^{Q'}$ is defined such that $\Delta'(\langle S, O, g\rangle)$ is
\begin{align}
&\{\ \langle \Delta(S, \sigma), \Delta(O,\sigma) \setminus odd(g'), g' \rangle: g' \in \Succ(g, S, \sigma) \ \} & (O & \not = \emptyset), \label{eq:non-empty} \\
&\{\ \langle \Delta(S, \sigma), \Delta(S,\sigma) \setminus odd(g'), g' \rangle: g' \in \Succ(g, S, \sigma) \ \} & (O &= \emptyset), \label{eq:empty}
\end{align}
where $\Succ(g, S, \sigma)$ returns the set of legitimate level rankings provided that the current level ranking is $g$ and the current letter is $\sigma$, and $odd(g')$ gives the set of odd vertices at the level ranked by $g'$. When $\CA$ reads the letter $w(i)$ at level $i$ with a level ranking $f_{i}$ and reachable state set $S_{i}$, it nondeterministically guesses a level ranking $f_{i+1}$ such that $f_{i+1} \in \Succ(f_{i}, S_{i}, w(i))$. The evolvement of both $S$ and $O$ are done by the classic subset construction~\cite{RS59} with the exception that odd vertices are excluded from $O$ (see ``$\setminus odd(g')$'' in~\eqref{eq:non-empty} and~\eqref{eq:empty}). Once $O$ becomes empty, it takes the value of the current $S$ in the next stage (see the second $S$ in~\eqref{eq:empty}). The final state set $F'$ is $2^{Q} \times \{\emptyset\} \times \calR$. This B\"{u}chi condition $\langle F'\rangle$ requires that $O$ be cleared infinitely often, which in turn enforces that every path visits odd vertices infinitely many times~\cite{MH84}. It is now clear that $\Succ$ represents the local property (being co-$T$) and $\calF'$ captures the global property (being odd).

\begin{procedure}[Generic Complementation] \mbox{}\\
\label{pro:generic-complementation}
\!\!Input: $T$-automaton $\calA = \langle \Sigma, Q, Q_{0}, \Delta, \calF^{T} \rangle$. Output: B\"{u}chi automaton $\CA=\langle \Sigma, Q', Q'_{0}, \Delta', \langle F' \rangle \rangle$:
\begin{align*}
Q' &= 2^{Q} \times 2^{Q} \times \calR, &
Q'_{0} &= Q_{0} \times \{ \emptyset \} \times \calR, \\
\Delta' &: Q' \to 2^{Q'} \textit{ defined as in~\eqref{eq:non-empty} and~\eqref{eq:empty}}, &
F' &= 2^{Q} \times \{\emptyset\} \times \calR .
\end{align*}
\end{procedure}
The state complexity of complementation is $|Q'|$. For every instantiation shown below, $|\calR|$ dominates $2^{|Q|}$ and hence the complexity is $O(|\calR|)$.

We now show complementation constructions for B\"{u}chi, $\GB$ and Streett, with the corresponding co-$T$ rankings being co-B\"{u}chi, $\GC$ and Rabin, respectively. We use $\calD^{T}$ to denote the set of $T$-ranks, $\calR^{T}$ the set of $T$ level rankings and $\calL^{T}$ the set of level ranks. Clearly, $|\calR^{T}|=|\calL^{T}|$.

\paragraph{B\"{u}chi Complementation}
Let $\calA = \langle Q, Q_{0}, \Sigma, \Delta, \langle F \rangle \rangle$ be a B\"{u}chi automaton. $\scrG_{w}$ is co-B\"{u}chi accepting if \emph{every} path in $\scrG_{w}$ visits $F$-vertices finitely often. Let $\calD^{\CB}$ (the set of co-B\"{u}chi ranks) be $[2n+1]$.
\begin{definition}[Co-B\"{u}chi Ranking]
\label{def:CB-ranking}
A co-B\"{u}chi ranking on $\scrG_{w}$ is a function $f: V \to \calD^{\CB}$ such that:
\begin{enumerate}[label=\ref{def:CB-ranking}.\arabic*,ref=\ref{def:CB-ranking}.\arabic*]
\item \label{en:CB-ranking-1} for all vertices $v \in V$, if $f(v) \in [2n]^{\odd}$, then $v \not \in F$;
\item \label{en:CB-ranking-2} for all edges $\langle v, v'\rangle \in E$, $f(v) \ge f(v')$.
\end{enumerate}
\end{definition}
A vertex $v \in V$ is \emph{odd} if $f(v) \in [2n]^{\odd}$. A co-B\"{u}chi ranking $f$ is \emph{odd} if every path in $\scrG_{w}$ visits infinitely many odd vertices. A path $\varrho$ stabilizes at a rank $r$ if $(\exists i \in \bbN)(\forall j \ge i), f(\varrho(j))=f(\varrho(i))=r$ and the smallest such $i$ is called the \emph{stabilization point} of $\varrho$. If $\scrG_{w}$ admits an odd co-B\"{u}chi ranking $f$, then by~\eqref{en:CB-ranking-2} every path eventually stabilizes at an odd rank. Then by~\eqref{en:CB-ranking-1}, every path eventually does not visit $F$-vertices; that is, $\scrG_{w}$ is co-B\"{u}chi accepting.

Conversely, if $\scrG_{w}$ is co-B\"{u}chi accepting, then an odd co-B\"{u}chi ranking can be constructed through a series of graph transformations. Let $\scrG_{0} = \scrG_{w}$. Vertices with only a finite number of descendants are called \emph{finite}. Vertices that are not $F$-vertices and have no $F$-vertices as their descendants are called \emph{$F$-free}. At stage $0$, we assign all finite vertices rank $0$ and remove them, obtaining $\scrG_{1}$, in which there is no finite vertices. Because $\scrG_{0}$ is co-B\"{u}chi accepting, there must exist in $\scrG_{1}$ an $F$-free vertex; otherwise we can select a path on which $F$-vertices occur infinitely often. We assign all $F$-free vertices rank $1$ and remove them too, obtaining $\scrG_{2}$. Now some vertices in $\scrG_{2}$ are finite due to the removal of $F$-free vertices in stage $0$. We repeat this process in the following manner: at the first phase of stage $i$, we assign even rank $2i$ to finite vertices and remove them; at the second phase, we assign $F$-free vertices odd rank $2i+1$ and remove them. By $F$-freeness, removing $F$-free vertices from $\scrG_{2i+1}$ gets rid of at least one infinite path, and hence $\width(\scrG_{2i+2}) < \width(\scrG_{2i})$. Therefore, this process terminates at a stage $j \le n$. In the following summary, by $\scrG \setminus V$ we mean removing from $\scrG$ all vertices in $V$ and their incoming and outgoing edges.
\begin{procedure}[Co-B\"{u}chi Ranking Assignment] \mbox{}\\
\label{pro:CB-assignment}
\!\!Input: a co-B\"{u}chi accepting $\scrG_{0}$.
Output: a co-B\"{u}chi ranking $f$.
Repeat for $i \in [0..n]$ if $\scrG_{2i} \not = \emptyset$.
\begin{multicols}{2}
\begin{enumerate}[label=\ref{pro:CB-assignment}.\arabic*,ref=\ref{pro:CB-assignment}.\arabic*]
\item \label{en:CB-surgery-1}
(a) $V_{2i} = \{v \in V \mid v \textit{ is finite in } \scrG_{2i} \}$; \\
(b) $f(v) = 2i$ for $v \in V_{2i}$; \\
(c) $\scrG_{2i+1} = \scrG_{2i} \setminus V_{2i}$.
\item \label{en:CB-surgery-2}
(a) $V_{2i+1} = \{v \in V \mid v \textit{ is $F$-free in } \scrG_{2i+1} \}$; \\
(b) $f(v) = 2i+1$ for $v \in V_{2i+1}$; \\
(c) $\scrG_{2i+2} = \scrG_{2i+1} \setminus V_{2i+1}$.
\end{enumerate}
\end{multicols}
\end{procedure}

\begin{lemma}[\cite{KV01}]
\label{lem:CB-ranking}
$\scrG_{w}$ is co-B\"{u}chi accepting if and only if $\scrG_{w}$ admits an odd co-B\"{u}chi ranking.
\end{lemma}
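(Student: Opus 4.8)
The plan is to prove both directions. The forward implication ($\Leftarrow$) is essentially the argument already sketched above, while the reverse implication ($\Rightarrow$) will be carried out by analyzing Procedure~\ref{pro:CB-assignment} in detail.

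For $\Leftarrow$: assuming $\scrG_{w}$ carries an odd co-B\"{u}chi ranking $f$, I would fix an arbitrary infinite path $\varrho$ and note that condition~\ref{en:CB-ranking-2} makes $i \mapsto f(\varrho(i))$ non-increasing in the finite set $\calD^{\CB}=[2n+1]$, hence eventually constant at some rank $r$ past a stabilization point. Oddness of $f$ forces infinitely many of the stabilized vertices to be odd, so $r$ is odd; then condition~\ref{en:CB-ranking-1} tells us none of those vertices lies in $F$, so $\varrho$ meets $F$ only finitely often. Since $\varrho$ is arbitrary, $\scrG_{w}$ is co-B\"{u}chi accepting.

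For $\Rightarrow$: assuming $\scrG_{w}$ is co-B\"{u}chi accepting, I would run Procedure~\ref{pro:CB-assignment} with $\scrG_{0}=\scrG_{w}$ and establish three things. \textbf{(a) Progress:} whenever $\scrG_{2i+1}\neq\emptyset$ it contains an $F$-free vertex. Step~\ref{en:CB-surgery-1} leaves $\scrG_{2i+1}$ with no finite vertex, so every vertex has infinitely many descendants and, since each level holds at most $n$ vertices, K\"{o}nig's lemma yields an infinite forward path from every vertex; if there were no $F$-free vertex, then alternately hopping to a reachable $F$-vertex and stepping one level forward would produce an infinite path through infinitely many $F$-vertices, contradicting co-B\"{u}chi acceptance of the ambient $\scrG_{w}$. \textbf{(b) Termination within $n$ stages:} the key structural fact is that $F$-freeness is inherited by descendants, so $V_{2i+1}$, together with the edges among its vertices, contains an entire infinite path (start at any $v \in V_{2i+1}$ and use the K\"{o}nig extraction from (a)); that path is disjoint from every infinite path surviving in $\scrG_{2i+2}$, so $\width(\scrG_{2i+2}) < \width(\scrG_{2i+1}) = \width(\scrG_{2i})$, the equality holding because finite vertices lie on no infinite path. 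Since $\width(\scrG_{0}) \le n$, we get $\width(\scrG_{2i}) \le n-i$, hence $\scrG_{2n}$ has no infinite path, hence (K\"{o}nig again) only finite vertices, hence $\scrG_{2n+1}=\emptyset$; consequently every vertex is removed at some stage, so $f$ is total with range in $[0..2n]=\calD^{\CB}$. \textbf{(c) Validity and oddness:} condition~\ref{en:CB-ranking-1} holds since a vertex of odd rank $2i+1$ was $F$-free when removed from $\scrG_{2i+1}$, hence not in $F$; condition~\ref{en:CB-ranking-2} is checked by contradiction --- if $f(v')>f(v)$ for an edge $\langle v,v'\rangle$ then $v'$ is still present in the graph where $v$ is removed and, being a descendant of $v$ there, shares $v$'s removal property (finite, or $F$-free), forcing $f(v')=f(v)$; oddness follows because along any infinite path the ranks stabilize at some $r$, and an even $r=2i$ would make a tail of the path an infinite chain of descendants, inside $\scrG_{2i}$, of a vertex declared finite there --- a contradiction.

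The main obstacle I anticipate is part~(b): making rigorous that deleting all $F$-free vertices of $\scrG_{2i+1}$ strips away at least one complete infinite path --- this rests on the closure of $F$-freeness under taking descendants together with the K\"{o}nig's-lemma extraction of infinite paths in graphs with finite levels --- and then converting this observation into the strict drop in $\width$ that bounds the number of stages by $n$ and thereby keeps all assigned ranks inside $\calD^{\CB}=[2n+1]$.
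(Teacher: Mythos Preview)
Your proposal is correct and follows essentially the same approach as the paper. The paper does not give a formal proof of this lemma---it cites \cite{KV01} and only sketches the two directions in the surrounding text and in Procedure~\ref{pro:CB-assignment}---and your write-up is a faithful and accurate expansion of that sketch: the $\Leftarrow$ direction via stabilization at an odd rank, and the $\Rightarrow$ direction via the layered peeling of finite and $F$-free vertices, with termination guaranteed by the strict drop in $\width$ and Properties~\ref{en:CB-ranking-1}--\ref{en:CB-ranking-2} verified exactly as one would expect.
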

We have $|\calD^{\CB}|=O(n)$, and hence $|\calR^{\CB}|=(O(n))^{n}=2^{O(n \lg n)}$.

\paragraph{GB Complementation}
Let $\calA = \langle Q, Q_{0}, \Sigma, \Delta, \langle B\rangle_{I} \rangle$ be a generalized B\"{u}chi automaton. $\GC$ ranking is meant to be used for $\GB$ complementation. A $\scrG_{w}$ is $\GC$ accepting if for \emph{every} path $\varrho$ in $\scrG_{w}$ there exists $j \in I$ such that $\varrho$ only visits $B(j)$-vertices finitely often. Let $\calD^{\GC} = ({[2n]}^{\odd} \times I) \cup {[2n+1]}^{\even}$ be the set of $\GC$ ranks. We refer to values in ${[2n]}^{\odd} \times I$ as \emph{odd} ranks, and values in ${[2n+1]}^{\even}$ as \emph{even} ranks. For an odd $\GC$ rank $\langle t, u \rangle$, we call $t$ numeric rank ($r$-rank) and $u$ index rank ($h$-rank). Even $\GC$ ranks are just numeric ranks. The greater-than and less-than orders on $\GC$ ranks are solely defined on $r$-ranks. For example, $\langle t, u \rangle > \langle t', u' \rangle$ (or $\langle t, u \rangle > t'$, $t > \langle t', u' \rangle$) if and only if $t > t'$. This definition is sound with respect to its usage in this paper; as shown below, we never need to compare two odd $\GC$ ranks having the same $r$-rank but different $h$-ranks.
\begin{definition}[$\GC$ Ranking]
\label{def:GC-ranking}
A $\GC$ ranking on $\scrG_{w}$ is a function $f: V \to \calD^{GC}$ such that:
\begin{enumerate}[label=\ref{def:GC-ranking}.\arabic*,ref=\ref{def:GC-ranking}.\arabic*]
\item\label{en:GC-ranking-1} for every vertex $v \in V$, if $f(v) = \langle 2i+1, j\rangle$ for some $j \in I$, then $v \not \in B(j)$;
\item\label{en:GC-ranking-2} for every edge $\langle v, v'\rangle \in E$, $f(v) \ge f(v')$.
\end{enumerate}
\end{definition}
A vertex $v$ is called \emph{odd} (resp. \emph{even}) if $f(v) \in {[2n]}^{\odd} \times I$ (resp. $f(v) \in {[2n+1]}^{\even}$). A $\GC$ ranking $f$ is \emph{odd} if every path in $\scrG_{w}$ visits infinitely many odd vertices. Note that~\eqref{en:GC-ranking-2} implies that if two adjacent odd vertices have the same $r$-rank, then they have the same $h$-rank. As in B\"{u}chi complementation, if $\scrG_{w}$ admits an odd $\GC$ ranking, then every path eventually stabilizes at an odd $\GC$ rank $\langle t, j \rangle$, and from the stabilization point on never visits $B(j)$-vertices. Therefore, $\scrG_{w}$ is $\GC$ accepting.

Conversely, if $\scrG_{w}$ is $\GC$ accepting, then we can find a $\GC$ ranking by a series of graph transformations as in B\"{u}chi complementation. Each stage has two phases. We begin stage $i$ with $\scrG_{2i}$ ($\scrG_{0}=\scrG_{w}$). In the first phase, finite vertices receive even rank $2i$ and are removed, resulting in $\scrG_{2i+1}$. Thanks to $\GC$ condition, if $\scrG_{2i+1}$ is not empty, then for some $j \in I$ and $v \in V$, $v$ is $B(j)$-free. In the second phase, those $B(j)$-free vertices receive odd rank $\langle 2i+1, j\rangle$ and are removed, producing $\scrG_{2i+2}$. This procedure repeats for all $i \in [0..n]$ unless $\scrG_{2i}$ is empty. The termination condition is justified by $\width(\scrG_{2i+2}) < \width(\scrG_{2i})$, just as before.

\begin{procedure}[$\GC$ Ranking Assignment] \mbox{}\\
\label{pro:GC-assignment}
\!\!Input: a $\GC$ accepting $\scrG_{0}$.
Output: a $\GC$ ranking $f$.
Repeat for $i \in [0..n]$ if $\scrG_{2i} \not = \emptyset$.
\begin{multicols}{2}
\begin{enumerate}[label=\ref{pro:GC-assignment}.\arabic*,ref=\ref{pro:GC-assignment}.\arabic*]
\item\label{en:GC-surgery-1}
(a) $V_{2i} = \{v \in V \mid \ v \textit{ is finite in } \scrG_{2i} \}$; \\
(b) $f(v) = 2i$ for $v \in V_{2i}$; \\
(c) $\scrG_{2i+1} = \scrG_{2i} \setminus V_{2i}$.\columnbreak
\item\label{en:GC-surgery-2}
(a) $V_{2i+1} = \{v \in V \mid$ $v$ is $B(j)$-free in $\scrG_{2i+1} \}$ for a $j \in I$ such that $B(j)$-free vertices exist; \\
(b) $f(v) = \langle 2i+1, j\rangle$ for $v \in V_{2i+1}$; \\
(c) $\scrG_{2i+2} = \scrG_{2i+1} \setminus V_{2i+1}$.
\end{enumerate}
\end{multicols}
\end{procedure}
In~\eqref{en:GC-surgery-2}, it does not matter which $j \in I$ is chosen. But this flexibility plays an important role in our Streett complementation construction (see Procedure~\ref{pro:new-GC-assignment}).
\begin{lemma}[\cite{KV04}]
\label{lem:GC-ranking}
$\scrG_{w}$ is $\GC$ accepting if and only if $\scrG_{w}$ admits an odd $\GC$ ranking.
\end{lemma}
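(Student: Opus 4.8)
The plan is to prove the two directions of the biconditional separately, mirroring the B\"uchi case (Lemma~\ref{lem:CB-ranking}): the ``if'' direction is a direct stabilization argument using monotonicity, and the ``only if'' direction is a correctness analysis of Procedure~\ref{pro:GC-assignment}.

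For the \emph{if} direction, suppose $\scrG_w$ admits an odd $\GC$ ranking $f$ and let $\varrho$ be any infinite path. By~\eqref{en:GC-ranking-2} the $r$-ranks along $\varrho$ form a non-increasing sequence of naturals, hence are eventually constant, equal to some $m$ from a level $N$ on. Every vertex with $r$-rank $m$ is either odd with rank $\langle m,\cdot\rangle$ or even with numeric rank $m$; since $f$ is odd, $\varrho$ meets odd vertices infinitely often, which forces $m \in {[2n]}^{\odd}$, and then the tail $\varrho(N),\varrho(N{+}1),\dots$ consists entirely of odd vertices (no even rank equals the odd value $m$). By the remark after Definition~\ref{def:GC-ranking}, adjacent odd vertices with equal $r$-rank share the same $h$-rank, so the $h$-rank is constant along this tail, say $j$; then~\eqref{en:GC-ranking-1} gives $\varrho(j') \notin B(j)$ for all $j' \ge N$, so $\varrho$ visits $B(j)$-vertices finitely often, i.e. $\scrG_w$ is $\GC$ accepting.

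For the \emph{only if} direction, assume $\scrG_w$ is $\GC$ accepting and run Procedure~\ref{pro:GC-assignment}; I would verify four points. (i) \emph{Well-definedness}: whenever $\scrG_{2i+1} \ne \emptyset$ there is some $j \in I$ admitting a $B(j)$-free vertex. First observe that a vertex surviving phase~1 of stage $i$ has infinitely many descendants and hence, the graph being finitely branching, an infinite forward path, all of whose vertices are themselves non-finite and therefore also survive. Now if no vertex of $\scrG_{2i+1}$ were $B(j)$-free for any $j \in I$, then from every vertex one could reach (via a possibly trivial path) a $B(j)$-vertex for each $j$; chaining such segments cyclically through $j = 1,2,\dots,k$ and advancing at least one level between successive rounds (possible since every vertex of $\scrG_{2i+1}$ is non-finite, and edges strictly increase the level) yields an infinite path of $\scrG_w$ that visits every $B(j)$ infinitely often, contradicting $\GC$ acceptance. (ii) \emph{Termination}: deleting finite vertices does not change the width (an infinite path uses only non-finite vertices, in either direction of the comparison), whereas deleting the $B(j)$-free set $V_{2i+1}$ strictly lowers it, because a $B(j)$-free vertex's infinite forward path lies entirely inside $V_{2i+1}$ and is disjoint from every family of infinite paths of $\scrG_{2i+2}$; since $\width(\scrG_0) \le n$, the process halts within $n$ stages. (iii) \emph{$f$ is a $\GC$ ranking}: condition~\eqref{en:GC-ranking-1} is immediate from the definition of $B(j)$-freeness, and for~\eqref{en:GC-ranking-2} a child $v'$ of $v$ is a descendant of $v$, hence removed no earlier than $v$; a short case split on the phase at which $v$ is removed shows $f(v) \ge f(v')$ (equal $r$-ranks when $v'$ leaves in the same phase, a strict drop otherwise). (iv) \emph{$f$ is odd}: along any infinite path the $r$-ranks stabilize, and the stabilized value cannot be even, since a tail of constant even rank $2i$ would be an infinite path made of vertices declared finite in $\scrG_{2i}$; so the path's tail consists of odd vertices.

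The main obstacle is step~(i): it is the sole place where the $\GC$-acceptance hypothesis is used, and the cyclic path-building must be arranged so that the resulting path is genuinely infinite (guaranteed progress between rounds) and is a path of $\scrG_w$. The width bookkeeping in step~(ii) — that deleting finite vertices preserves width while deleting $B(j)$-free vertices strictly decreases it — is the second point requiring care, though it runs parallel to the B\"uchi argument.
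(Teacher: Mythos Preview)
Your proposal is correct and follows precisely the approach the paper sketches before stating the lemma (which it attributes to~\cite{KV04} without giving a full proof): the ``if'' direction is the stabilization argument via~\eqref{en:GC-ranking-2} and the paper's remark that equal $r$-ranks force equal $h$-ranks, and the ``only if'' direction is the correctness of Procedure~\ref{pro:GC-assignment}, with termination by the width-drop $\width(\scrG_{2i+2}) < \width(\scrG_{2i})$. Your treatment is in fact more detailed than the paper's---in particular, your cyclic path-building for step~(i) spells out what the paper leaves as the one-line ``Thanks to $\GC$ condition, if $\scrG_{2i+1}$ is not empty, then for some $j \in I$ \ldots''---but the underlying ideas coincide.
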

We have $|\calD^{\GC}|=O(nk)$, and hence $|\calR^{\GC}|=(O(nk))^{n}=2^{O(n\lg nk)}$.

\paragraph{Streett Complementation.}
Let $\calA=\langle Q, Q_{0}, \Sigma, \Delta, \langle G, B \rangle_{I} \rangle$ be a Streett automaton. Rabin ranking is meant for Streett complementation. Let us first examine the simple case where $k=1$, i.e., every path satisfies $[G(1),B(1)]$. Easily seen, $\scrG_{w}$ admits a co-B\"{u}chi ranking, and hence we can instantiate Procedure~\ref{pro:generic-complementation} with $\calR$ being co-B\"{u}chi level rankings (which are also $\GC$ level rankings with index size $1$). The only modification needed is to enforce that every path visits $G(1)$-vertices, which can be easily realized by a B\"{u}chi accepting condition (see the definition of $\langle F' \rangle$ in Procedure~\ref{pro:generic-complementation}). This simple procedure fails for $k > 1$, because a path visiting a finite number of $B(j)$-vertices may not have to visit infinitely many $G(j)$-vertices; it just satisfies $[G(j'),B(j')]$ for $j' \not = j$. Nevertheless, if we could find a way to reduce the number of Rabin pairs one by one, eventually the simple scenario has to occur. The idea in~\cite{KV05a} is to use $\GC$ rankings to approximate Rabin accepting behaviors step by step until finally obtaining the precise characterization. As a result, Rabin ranks are tuples of $\GC$ ranks, considerably more sophisticated than $\GC$ ranks. We first put aside the formal definition of Rabin rankings and show how a Rabin ranking can be obtained provided $\scrG_{w}$ is Rabin accepting. Once again, this is done through a series of graph transformations.
\begin{procedure}[Rabin Ranking Assignment] \mbox{}\\
\label{pro:Rabin-assignment}
\!\!Input: a Rabin accepting $\scrG_{0}$.
Output: a Rabin ranking $f$.
Repeat for $i \in [0..k]$ if $\scrG_{i} \not = \emptyset$.
\begin{enumerate}[label=\ref{pro:Rabin-assignment}.\arabic*,ref=\ref{pro:Rabin-assignment}.\arabic*]
\item\label{en:Rabin-surgery-1} Assign $\scrG_{i}$ a $\GC$ ranking $\gc_{i+1}$.
\item\label{en:Rabin-surgery-2} Remove all vertices $v$ if $\gc_{i+1}(v)$ is even.
\item\label{en:Rabin-surgery-3} Remove all edges $\langle v, v' \rangle$ if $\gc_{i+1}(v) > \gc_{i+1}(v')$.
\item\label{en:Rabin-surgery-4} Remove all edges $\langle v, v' \rangle$ if $\gc_{i+1}(v)$ is odd with index $j$ and $v$ is a $G(j)$-vertex.
\item\label{en:Rabin-surgery-5} $f(v)=\langle \gc_{1}(v), \ldots \gc_{i+1}(v) \rangle$ iff $v$ is removed from $\scrG_{i}$.
\end{enumerate}
\end{procedure}
Obviously, if $\scrG_{w}$ is Rabin accepting for a Rabin condition $[G,B]_{I}$, then it is also $\GC$ accepting for the $\GC$ condition $[B]_{I}$. By Lemma~\ref{lem:GC-ranking}, a $\GC$ ranking $\gc_{1}$ exists for $\scrG_{0}$, which justifies Step~\eqref{en:Rabin-surgery-1} at stage $0$. Steps~\eqref{en:Rabin-surgery-2}-\eqref{en:Rabin-surgery-3} may break up $\scrG_{0}$ into a collection of graph components (in the undirected sense). Let $\calC$ be such a component. Steps~\eqref{en:Rabin-surgery-2}-\eqref{en:Rabin-surgery-3} ensure that vertices in $\calC$ have the same odd rank with some index $j \in I$, and hence all are $B(j)$-free. Step~\eqref{en:Rabin-surgery-4}, deleting all outgoing edges from $G(j)$-vertices, may further break up $\calC$ into more components. In particular, any infinite path is destroyed (i.e., broken into a collection of finite paths) if the path satisfies $[G(j), B(j)]$ (i.e., visiting infinitely many $G(j)$-vertices but only finitely many $B(j)$-vertices). Let $\calC' \subseteq \calC$ be a resulting component after Step~\eqref{en:Rabin-surgery-4}. As a result, $\calC'$ should satisfy the \emph{reduced} Rabin condition $[G, B]_{I \setminus \{j\}}$, and hence the \emph{reduced} $\GC$ condition $[B]_{I \setminus \{j\}}$. So after stage $0$, $\scrG_{1}$ is composed of a collection of pairwise disjoint components, each of which satisfies a Rabin condition whose cardinality is at most $k-1$. Precisely speaking, at the beginning of each stage $i \ge 1$, $\scrG_{i}$ is composed of a collection of pairwise disjoint components, and at Step~\eqref{en:Rabin-surgery-1}, $\gc_{i+1}$ is obtained by \emph{independently} assigning each component in $\scrG_{i}$ a $\GC$ ranking according to the \emph{reduced} $\GC$ condition the component satisfies. By induction, at stage $i \ge 1$, vertices in each component in $\scrG_{i}$ have been assigned the same tuple of odd $\GC$ ranks of length $i$ and each component satisfies a Rabin condition whose cardinality is at most $k-i$.  It follows that the procedure terminates and each vertex in $\scrG_{w}$ eventually gets a tuple of $\GC$ ranks of length at most $k+1$. Note that the last $\GC$ rank in a tuple is always an even $\GC$ rank ($r$-rank).

Let $\calD^{\Rabin}$ denote the set of Rabin ranks of the form $\langle \langle r_{1}, i_{1} \rangle, \ldots, \langle r_{m}, i_{m} \rangle, r_{m+1} \rangle$ ($m \le k$). Ordering relations ($<_{m}, \le_{m}, >_{m}, \ge_{m}, =_{m}$) on Rabin ranks are defined to be the standard lexicographical extension (up to $m$-th component) of orderings on $\GC$ ranks. For a Rabin rank $\gamma$ of the above form, the \emph{index projection} (or the \emph{$h$-projection}) of $\gamma$, written $\Projh \gamma$, is $\langle i_{1}, \ldots, i_{m} \rangle$ and the \emph{numeric projection} (or the \emph{$r$-projection}) of $\gamma$, written $\Projr \gamma$, is $\langle r_{1}, \ldots, r_{m+1} \rangle$. With respect to a given function $f: V \to \calD^{\Rabin}$, the \emph{width} of $v \in V$ is the length of $f(v)$, denoted by $|v|_{f}$ (or $|v|$, when $f$ is clear from the context). We say that $v$ is \emph{odd} (called \emph{happy} in~\cite{KV05a}) if $|v|>1$ and $v$ is a $G(|v|-1)$-vertex. We arrive at the formal definition of Rabin rankings.
\begin{definition}[Rabin Ranking]
\label{def:Rabin-ranking}
A Rabin ranking is a function $f: V \to \calD^{\Rabin}$ satisfying the following conditions.
\begin{enumerate}[label=\ref{def:Rabin-ranking}.\arabic*,ref=\ref{def:Rabin-ranking}.\arabic*]
\item\label{en:Rabin-ranking-1} For every vertex $v \in V$ with $|v|=m+1 \ge 2$ and $\alpha=\Projh f(v)$, we have
\begin{multicols}{2}
\begin{enumerate}
\item\label{en:Rabin-ranking-1a} for $i \in [1..m)$, $v \not \in G(\alpha[i])$;
\item\label{en:Rabin-ranking-1b} for $i \in [1..m]$, $v \not \in B(\alpha[i])$.
\end{enumerate}
\end{multicols}
\item \label{en:Rabin-ranking-2} For every edge $\langle v, v' \rangle \in E$ with $|v|=m+1$, $|v'|=m'+1$ and $m''=\min(m, m')$, we have
\begin{multicols}{2}
\begin{enumerate}
\item \label{en:Rabin-ranking-2a} $f(v) \ge_{m''} f(v')$;
\item \label{en:Rabin-ranking-2b} $f(v) \ge_{m''+1} f(v')$, or $v$ is odd.
\end{enumerate}
\end{multicols}
\end{enumerate}
\end{definition}
A Rabin ranking is \emph{odd} if every path in $\scrG_{w}$ visits infinitely many odd vertices.

\begin{lemma}[\cite{KV05a}]
\label{lem:Rabin-ranking}
$\scrG_{w}$ is Rabin accepting if and only if $\scrG_{w}$ admits an odd Rabin ranking.
\end{lemma}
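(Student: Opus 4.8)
The plan is to prove the two implications separately, following the scheme underlying Procedure~\ref{pro:Rabin-assignment}.

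For soundness ($\Leftarrow$), suppose $f$ is an odd Rabin ranking and fix an arbitrary infinite path $\varrho$ in $\scrG_{w}$; the goal is to exhibit an index $j\in I$ with $\varrho$ visiting $G(j)$-vertices infinitely often and $B(j)$-vertices only finitely often. The heart of the matter is a \emph{stabilization lemma}: there is a position $p$, an integer $m\ge 1$, and a length-$m$ tuple $\tau$ of odd $\GC$-ranks such that for every $i\ge p$ we have $|\varrho(i)|\ge m+1$ and $f(\varrho(i))=_{m}\tau$, and moreover $|\varrho(i)|=m+1$ for infinitely many $i$. To prove it I would track along $\varrho$ the coordinates of the tuples $f(\varrho(i))$: clause~\eqref{en:Rabin-ranking-2} forces the truncated prefixes to be lexicographically non-increasing across each edge (up to the shared width), while the part of~\eqref{en:Rabin-ranking-2} permitting a strict increase at the next coordinate applies only when the source is odd; since $\GC$-ranks are well-founded and widths are bounded by $k+1$, only finitely many such perturbations are possible before the prefix freezes, and the narrowest width reattained past that point recurs infinitely often. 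Once stabilization is established, oddness of $f$ gives infinitely many odd vertices on $\varrho$; at the frozen width these are $G$-vertices of the index associated with the stabilized level, whereas clause~\eqref{en:Rabin-ranking-1} guarantees that past $p$ the path contains no $B$-vertex of any index occurring in $\tau$. Taking $j$ to be that index yields the required Rabin pair, so $\varrho$ satisfies $[G,B]_{I}$. The stabilization lemma — in particular arguing that the increase-permitting part of~\eqref{en:Rabin-ranking-2} cannot disturb a frozen prefix infinitely often — is the step I expect to be the main obstacle.

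For completeness ($\Rightarrow$), assume $\scrG_{w}$ is Rabin accepting, run Procedure~\ref{pro:Rabin-assignment}, and let $f$ be its output. Three things need checking. First, the procedure is well defined and terminates: by induction on the stage, $\scrG_{i}$ decomposes into pairwise disjoint components, each satisfying a reduced Rabin condition $[G,B]_{I'}$ with $|I'|\le k-i$, since Steps~\eqref{en:Rabin-surgery-2}--\eqref{en:Rabin-surgery-4} peel an index $j$ off each odd-ranked component and, by destroying every infinite path satisfying $[G(j),B(j)]$, leave components satisfying $[G,B]_{I\setminus\{j\}}$; each such component is therefore $\GC$ accepting for the corresponding $[B]_{I'}$, so Lemma~\ref{lem:GC-ranking} legitimates Step~\eqref{en:Rabin-surgery-1}, and termination by stage $k+1$ follows since the index set strictly shrinks. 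Second, $f$ meets Definition~\ref{def:Rabin-ranking}: the $B$-part of~\eqref{en:Rabin-ranking-1} holds because the index $\alpha[i]$ attached to a coordinate is chosen only at $B(\alpha[i])$-free vertices; the $G$-part holds because a $G(\alpha[i])$-vertex with $i$ below the vertex's full width would lose all its outgoing edges in Step~\eqref{en:Rabin-surgery-4} at that stage, become finite, and hence receive an even $\GC$-rank there, forcing its width to equal $i+1$; and clause~\eqref{en:Rabin-ranking-2} holds because any edge surviving into a later stage escaped Steps~\eqref{en:Rabin-surgery-3}--\eqref{en:Rabin-surgery-4}, so it respects the $\GC$-rank order up to the shared width, and an edge whose source is narrower than its target is the residue of a stage at which that source was odd. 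Third, $f$ is odd: given a path $\varrho$, Rabin acceptance supplies an index $j^{\ast}$ with $\varrho$ hitting $G(j^{\ast})$ infinitely often and $B(j^{\ast})$ finitely often; following $\varrho$ through the (at most $k+1$) stages, its vertices cannot keep growing in width forever, so there is a smallest width $m+1$ attained infinitely often, and at the stage where $\varrho$'s residual pieces first reach that width the surviving edges are precisely those not leaving $G$-vertices of the relevant index, whence infinitely many of $\varrho$'s width-$(m+1)$ vertices are odd.

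Combining the two directions gives the stated equivalence. I would additionally remark, as a byproduct of the bookkeeping in the completeness direction, that the ranking produced by Procedure~\ref{pro:Rabin-assignment} uses tuples of length at most $k+1$ whose last coordinate is always an even $\GC$-rank, a fact the subsequent complexity analysis of the construction relies on.
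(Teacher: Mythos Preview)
The paper does not supply its own proof of Lemma~\ref{lem:Rabin-ranking}; the result is quoted from~\cite{KV05a} and left unproven here. So there is no in-paper argument to compare your proposal against directly.

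That said, your outline is sound and is essentially the same argument the paper \emph{does} give in the appendix for the analogous Lemma~\ref{lem:new-Rabin-ranking} (the $\muR$ version). Your soundness direction is exactly the template of Lemmas~\ref{lem:first-nonincreasing}, \ref{lem:finitely-even} and~\ref{lem:new-Rabin-ranking-1}: freeze the first $m$ coordinates via Property~\eqref{en:Rabin-ranking-2a}, identify the minimum recurring width $m+1$, and read off a Rabin pair from the stabilized $h$-projection using~\eqref{en:Rabin-ranking-1}. Your completeness direction mirrors the proof of Lemma~\ref{lem:new-Rabin-ranking-2}: inductive reduction of the Rabin index set justifies Step~\eqref{en:Rabin-surgery-1} and termination, the ranking clauses are verified from the procedure steps exactly as you describe, and oddness is extracted from the minimum recurring width. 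The one place your sketch is thinner than the appendix argument is the step you yourself flag: when you write ``at the frozen width these are $G$-vertices of the index associated with the stabilized level,'' you still need to argue that infinitely many of the odd vertices on $\varrho$ have width \emph{exactly} $m+1$ (a priori they could all eventually have larger width). The paper's proof of Lemma~\ref{lem:new-Rabin-ranking-1} handles this by a short contradiction using Property~\eqref{en:new-Rabin-ranking-2b} and the observation that coordinate $m+1$ would then alternate infinitely between odd and even $\GC$ ranks along a non-increasing sequence; the identical argument with~\eqref{en:Rabin-ranking-2b} closes the gap here.
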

We have $|\calD^{\Rabin}|=(O(nk))^{k+1}$ and hence $|\calR^{\Rabin}|=((O(nk))^{k+1})^{n}=(nk)^{O(nk)}=2^{O(nk \lg nk)}$. 
\section{Improved Streett Complementation}
\label{sec:streett-complementation}

The above construction requires $2^{O(nk \lg nk)}$ state blow-up~\cite{KV05a}, which is substantially larger than the lower bound in~\cite{CZ11a}. In the extreme case of $k=O(2^{n})$, the construction is double exponential in $n$. Intuitively, the larger the $k$, the more overlaps between $B(i)$'s and between $G(i)$'s ($i \in I$). A natural question is: can all Rabin pairs $[G(i), B(i)]$ independently impose behaviors on a Rabin accepting $\scrG_{w}$? We showed in~\cite{CZL09} that in Rabin complementation we can build a Streett accepting $\scrG_{w}$ for which no Streett pair $\langle G(i), B(i)\rangle$ is redundant. We observed the opposite in Streett complementation; the larger the $k$, the higher the correlation between infinite paths that satisfy $[G,B]_{I}$. By exploiting this correlation, we can walk in \emph{big} steps in approximating Rabin accepting behaviors using $\GC$ rankings. As a result, our Rabin ranks are tuples of $\GC$ ranks of length at most $\mu=\min(n,k)$. This simple but crucial observation leads to a significant improvement on the construction complexity. We elaborate on this below.

The first idea is that at Step~\eqref{en:Rabin-surgery-4} in stage $i$, in a component $\calC$ of $\scrG_{i}$, instead of removing all outgoing edges from $G(j)$-vertices, we can remove all outgoing edges from $G(j')$-vertices for all $j'$ such that $B(j') \subseteq B(j)$. Let $J=\{j' \in I \mid B(j') \subseteq B(j)\}$. Since vertices in $\calC$ are $B(j)$-free, they are also $B(j')$-free for any $j' \in J$. Recall that Step~\eqref{en:Rabin-surgery-4} is to break all infinite paths that satisfy $[G(j),B(j)]$ so that in each resulting component we have a simpler Rabin condition to satisfy. If an infinite path in $\calC$ satisfies $[G,B]_{J}$, then removing all outgoing edges from $G(j')$-vertices (for $j' \in J$) certainly serves the same purpose, and moreover, any resulting component only needs to satisfy a Rabin condition whose cardinality is $|J|$ less than at the beginning of stage $i$.

The second idea is that at Step~\eqref{en:Rabin-surgery-1} in stage $i$, we can assign special $\GC$ rankings to components in $\scrG_{i}$. Recall that a $\GC$ ranking is obtained by a series of graph transformations too. In Step~\eqref{en:GC-surgery-2}, we assign and remove $B(j)$-free vertices for some $j \in I$. In fact any fixed $j \in I$ is sufficient as long as $B(j)$-free vertices exist. Therefore, we can choose a $j$ such that not only $B(j)$-free vertices exist, but also for any other $j' \in I$, $B(j') \not \subset B(j)$, if $B(j')$-free vertices also exist. Intuitively, we prefer a $j$ such that $B(j)$ is minimal (with respect to set inclusion) because more vertices would be $B(j)$-free and subject to removal.

We refine those ideas by taking into account the history of $\GC$ rankings. In stage $i$, right before Step~\eqref{en:Rabin-surgery-1}, vertices in $\scrG_{i}$ were assigned a tuple of $\GC$ ranks of length $i$. Consider a component $\calC \subseteq \scrG_{i}$. No vertices in $\calC$ received an even $\GC$ rank in stage $i' \in [0..i)$, because otherwise they were already removed by Step~\eqref{en:Rabin-surgery-2} in that stage. Also all vertices in $\scrG_{i}$ received the same odd $\GC$ rank in each stage $i' \in [0..i)$, for otherwise Step~\eqref{en:Rabin-surgery-3} in stage $i'$ would have broken the component. Now let $\langle \langle r_{1}, j_{1} \rangle, \ldots, \langle r_{i}, j_{i} \rangle \rangle$ be the tuple that has been assigned to all vertices in $\calC$. Let $B'=\cup_{t \in [1..i]} B(j_{t})$ and $J'=\{j' \in I \mid B(j') \subseteq B' \}$. So all vertices in $\calC$ are $B'$-free. When we assign $\GC$ rankings for $\calC$, in each stage at Step~\eqref{en:GC-surgery-2} (in Procedure~\ref{pro:GC-assignment}), we choose a $j \in I \setminus J'$ such that (1) $B(j)$-free vertices exist, (2) $B(j) \not \subseteq B'$ (we say $B(j)$ is not \emph{covered} by $B'$), and (3) no $B(j')$-free vertices exist for any other $j' \in I \setminus J'$ with $B' \cup B(j') \subset B' \cup B(j)$. In other words, we choose a $j$ such that not only we can find $B(j)$-free vertices, but also $B' \cup B(j)$ minimally extends $B'$. Now let $\calC'$ be a component right before Step~\eqref{en:Rabin-surgery-4} is taken and let $\langle \langle r_{1}, j_{1} \rangle, \ldots, \langle r_{i+1}, j_{i+1} \rangle \rangle$ be the tuple of ranks that has been assigned to all vertices in $\calC'$ (for the same reason as before, all vertices in $\calC'$ have received the same sequence of odd $\GC$ ranks). Let $B''=B' \cup B(j_{i+1})$ and $J''=\{j'' \in I \mid B(j'') \subseteq B'' \}$. In Step~\eqref{en:Rabin-surgery-4} we remove all outgoing edges of $G(j'')$-vertices if $B(j'') \subseteq B''$. This deletion destroys all infinite paths that satisfy $[G,B]_{J''}$ because all vertices in $\calC'$ are $B''$-free. Let $\calC''$ be a resulting component and $\varrho$ an infinite path in $\calC''$. Then $\varrho$ only needs to satisfy $[G,B]_{I \setminus J''}$.

Now let us assume that we have incorporated the above ideas into Procedure~\ref{pro:Rabin-assignment} and obtained a new Rabin rank $\gamma = \langle \langle r_{1}, j_{1} \rangle, \ldots, \langle r_{m}, j_{m} \rangle, r_{m+1} \rangle \rangle$. Let $\alpha=\Projh \gamma$. The \emph{non-covering} condition stated above requires $\alpha$ to satisfy:
\begin{align}
\label{eq:non-covering}
\forall i \in [1..m]\ B(\alpha[i]) \not \subseteq \cup_{j=1}^{i-1} B(\alpha[j]) ,
\end{align}
which implies $|\alpha| \le n$. By definition, $|\alpha| \le k$ and so we have $|\alpha| \le \mu$ and $|\gamma| \le \mu+1$. From now on we switch to terms \emph{$\muR$ ranks} (i.e., \emph{minimal Rabin ranks}), \emph{$\muR$ rankings}, \emph{$\muR$ level rankings} and \emph{$\muR$ level ranks}. Their precise definitions are to be given below. We define two functions $\Cover: I^{*} \to 2^{I}$ and $\Mini: I^{*} \to 2^{I}$ to formalize the intuition of \emph{minimal extension}. $\Cover$ maps tuples of indices to subsets of $I$ such that
\begin{align*}
\Cover(\alpha) = \{\, j \in I \ \mid \ B(j) \subseteq \cup_{i=1}^{|\alpha|} B(\alpha[i]) \, \} .
\end{align*}
Note that $\Cover(\epsilon)=\emptyset$. $\Mini$ maps tuples of indices to subsets of $I$ such that $j \in \Mini(\alpha)$ if and only if $j \in I \setminus \Cover(\alpha)$ and
\begin{align}
& \forall j' \in I \setminus \Cover(\alpha) \, \big(j' \not = j \to
 B(j') \cup \Cover(\alpha) \not \subset B(j) \cup \Cover(\alpha) \big) \, , \label{eq:mini-1} \\
& \forall j' \in I \setminus \Cover(\alpha) \, \big(j' < j \to
 B(j') \cup \Cover(\alpha) \not = B(j) \cup \Cover(\alpha) \big) \, .\label{eq:mini-2}
\end{align}
$\Mini(\alpha)$ consists of choices of indices to \emph{minimally} enlarge $\Cover(\alpha)$; ties (with respect to set inclusion) are broken by numeric minimality (Condition~\eqref{eq:mini-2}). Before introducing $\muR$ ranking assignment, we need a new $\GC$ ranking assignment which takes a tuple of $I$-indices as an additional input. We call so obtained $\GC$ rankings (resp. $\GC$ ranks) $\muGC$ rankings (resp. $\muGC$ ranks).

\begin{procedure}[$\muGC$ Ranking Assignment] \mbox{}\\
\label{pro:new-GC-assignment}
\!\!Input: a $\GC$ accepting $\scrG_{0}$, a tuple of $I$-indices $\alpha$.
Output: a $\muGC$ ranking $f$. \\
Repeat for $i \in [0..n]$ if $\scrG_{2i} \not = \emptyset$.
\setlength{\columnsep}{0pt}
\begin{multicols}{2}
\begin{enumerate}[label=\ref{pro:new-GC-assignment}.\arabic*,ref=\ref{pro:new-GC-assignment}.\arabic*]
\item\label{en:new-GC-surgery-1}
(a) $V_{2i} = \{v \in V \mid \ v \textit{ is finite in } \scrG_{2i} \}$; \\
(b) $f(v) = 2i$ for $v \in V_{2i}$; \\
(c) $\scrG_{2i+1} = \scrG_{2i} \setminus V_{2i}$.\columnbreak
\item\label{en:new-GC-surgery-2}
(a) $V_{2i+1} = \{v \in V \mid$ $v$ is $B(j)$-free in $\scrG_{2i+1} \}$ for a $j \in \Mini(\alpha)$ such that $B(j)$-free vertices exist; \\
(b) $f(v) = \langle 2i+1,j \rangle$ for $v \in V_{2i+1}$; \\
(c) $\scrG_{2i+2} = \scrG_{2i+1} \setminus V_{2i+1}$.
\end{enumerate}
\end{multicols}
\end{procedure}
As in $\GC$ ranking assignment, in~\eqref{en:new-GC-surgery-2} there maybe more than one $j$ such that $B(j)$-free vertices exist, and it does not matter which one we choose. But in case $\Mini(\alpha)$ is a singleton, we have a unique $j$ at all stages, essentially synchronizing all $h$-ranks in the $\muGC$ ranks obtained. This synchronization is crucial in our construction for parity complementation (see Section~\ref{sec:parity-complementation}).

\begin{procedure}[$\muR$ Ranking Assignment] \mbox{}\\
\label{pro:new-Rabin-assignment}
\!\!Input: a Rabin accepting $\scrG_{0}$.
Output: a $\muR$ ranking $f$.
Repeat for $i \in [0..\mu]$ if $\scrG_{i} \not = \emptyset$.
\begin{enumerate}[label=\ref{pro:new-Rabin-assignment}.\arabic*,ref=\ref{pro:new-Rabin-assignment}.\arabic*]
\item\label{en:new-Rabin-surgery-1} Assign $\scrG_{i}$ a $\muGC$ ranking $\gc_{i+1}$.
\item\label{en:new-Rabin-surgery-2} Remove all vertices $v \in V$ if $\gc_{i+1}(v)$ is even.
\item\label{en:new-Rabin-surgery-3} Remove all edges $\langle v, v' \rangle \in E$ if $\gc_{i+1}(v) > \gc_{i+1}(v')$.
\item\label{en:new-Rabin-surgery-4} Remove all edges $\langle v, v' \rangle \in E$ if $v \in G(t)$ for some
$t \in \Cover(\Projh(\langle \gc_{1}, \ldots, \gc_{i+1}\rangle))$.
\item\label{en:new-Rabin-surgery-5} $f(v)=\langle \gc_{1}(v), \ldots \gc_{i+1}(v) \rangle$ iff $v$ is removed from $\scrG_{i}$.
\end{enumerate}
\end{procedure}
Note that Step~\eqref{en:new-Rabin-surgery-1} actually means that Procedure~\ref{pro:new-GC-assignment} is called upon for every component $\calC \subset \scrG_{i}$, with the corresponding $\alpha$ being $\Projh (\langle \gc_{1}(v), \ldots, \gc_{i}(v)\rangle)$ for some $v \in \calC$ ($\alpha$ is well-defined since all vertices in $\calC$ have received the same sequence of $\muGC$ ranks). It is time to formally define $\muR$ ranking. Let $f$ be a function $V \to (\calD^{\GC})^{\mu+1}$. We say that $v$ is \emph{odd} if $|v|>1$ and $v$ is a $G(t)$-vertex for some $t \in \Cover(\alpha[1..|v|-1])$ where $\alpha=\Projh f(v)$.
\begin{definition}[$\muR$ Ranking]
\label{def:new-Rabin-ranking}
A $\muR$ ranking is a function $f: V \to (\calD^{\GC})^{\mu+1}$ satisfying the following conditions.
\begin{enumerate}[label=\ref{def:new-Rabin-ranking}.\arabic*,ref=\ref{def:new-Rabin-ranking}.\arabic*]
\item\label{en:new-Rabin-ranking-1} For every vertex $v \in V$ with $|v|=m+1 \ge 2$ and $\alpha=\Projh f(v)$, we have
\begin{enumerate}
\item\label{en:new-Rabin-ranking-1a} for $i \in [1..m)$, $v \not \in G(t)$ for $t \in \Cover(\alpha[1..i])$;
\item\label{en:new-Rabin-ranking-1b} for $i \in [1..m]$, $v \not \in B(t)$ for $t \in \Cover(\alpha[1..i])$;
\item\label{en:new-Rabin-ranking-1c} for $i \in [1..m]$, $\alpha[i] \in \Mini(\alpha[1..i))$.
\end{enumerate}
\item \label{en:new-Rabin-ranking-2} For every edge $\langle v, v' \rangle \in E$ with $|v|=m+1$, $|v'|=m'+1$ and $m''=\min(m, m')$, we have
\begin{multicols}{2}
\begin{enumerate}
\item \label{en:new-Rabin-ranking-2a} $f(v) \ge_{m''} f(v')$;
\item \label{en:new-Rabin-ranking-2b} $f(v) \ge_{m''+1} f(v')$, or $v$ is odd.
\end{enumerate}
\end{multicols}
\end{enumerate}
\end{definition}
A $\muR$ ranking is \emph{odd} if every infinite path in $\scrG_{w}$ visits infinitely many odd vertices.
\begin{lemma}
\label{lem:new-Rabin-ranking}
$\scrG_{w}$ is Rabin accepting if and only if $\scrG_{w}$ admits an odd $\muR$ ranking.
\end{lemma}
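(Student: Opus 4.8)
The plan is to establish Lemma~\ref{lem:new-Rabin-ranking} by the same two-directional template used for Lemmas~\ref{lem:CB-ranking}, \ref{lem:GC-ranking} and \ref{lem:Rabin-ranking}, but with the combinatorial bookkeeping of $\Cover$ and $\Mini$ replacing the one-index-at-a-time reduction of Procedure~\ref{pro:Rabin-assignment}.

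\textbf{($\Leftarrow$) Soundness.} Suppose $\scrG_{w}$ admits an odd $\muR$ ranking $f$. Fix an infinite path $\varrho$. First I would argue that the width sequence $|\varrho(0)|, |\varrho(1)|, \ldots$ is eventually constant: by~\eqref{en:new-Rabin-ranking-2a} together with the fact that $\GC$ ranks along $\varrho$ are non-increasing on each coordinate up to $m''$, the prefix of $f$-values stabilizes, and an infinite strictly-alternating behaviour of widths would contradict well-foundedness of the lexicographic order on $(\calD^{\GC})^{*}$. So let $m+1$ be the eventual width and let $\langle\langle r_1,j_1\rangle,\ldots,\langle r_m,j_m\rangle,r_{m+1}\rangle$ be the eventual common value of $f$ along $\varrho$ (common because once the width is stable, \eqref{en:new-Rabin-ranking-2a}--\eqref{en:new-Rabin-ranking-2b} force the full tuple to be non-increasing and hence eventually constant, using that only finitely many odd visits past the stabilization point are compatible with a strictly decreasing $(m''\!+\!1)$-prefix). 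Now let $\alpha=\langle j_1,\ldots,j_m\rangle$ and $J=\Cover(\alpha)$. By~\eqref{en:new-Rabin-ranking-1b} applied with $i=m$, past the stabilization point $\varrho$ visits no $B(t)$-vertex for $t\in J$; by oddness, $\varrho$ visits infinitely many odd vertices, and since the width is now fixed at $m+1$, each such odd vertex is a $G(t)$-vertex for some $t\in\Cover(\alpha[1..m])=J$. Hence for that particular $t$, $\varrho$ visits $G(t)$ infinitely often and $B(t)$ only finitely often, so $\varrho$ satisfies $[G(t),B(t)]$, i.e.\ $\scrG_{w}$ is Rabin accepting. (One subtlety to handle carefully: if $m=0$, i.e.\ width stabilizes at $1$, oddness cannot hold since width-$1$ vertices are never odd; so this case cannot occur under the hypothesis, which is exactly what we want.)

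\textbf{($\Rightarrow$) Completeness.} Suppose $\scrG_{w}$ is Rabin accepting. I would run Procedure~\ref{pro:new-Rabin-assignment} and verify that the output $f$ is an odd $\muR$ ranking. The backbone is an induction on the stage $i$ with the invariant already sketched in the text: at the start of stage $i$, $\scrG_{i}$ is a disjoint union of components, each component $\calC$ has all its vertices carrying the same length-$i$ tuple of odd $\muGC$ ranks with index part $\alpha_{\calC}$, all vertices of $\calC$ are $B(t)$-free for every $t\in\Cover(\alpha_{\calC})$, and every infinite path remaining in $\calC$ satisfies the reduced Rabin condition $[G,B]_{I\setminus\Cover(\alpha_{\calC})}$. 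The inductive step uses Lemma~\ref{lem:GC-ranking} to justify Step~\eqref{en:new-Rabin-surgery-1} (the reduced $\GC$ condition $[B]_{I\setminus\Cover(\alpha_{\calC})}$ holds because $[G,B]_{I\setminus\Cover(\alpha_{\calC})}$ does), checks that Procedure~\ref{pro:new-GC-assignment} with input $\alpha_{\calC}$ is legitimate — here one must confirm that whenever $B(j)$-free vertices exist at all in Step~\eqref{en:new-GC-surgery-2}, some such $j$ lies in $\Mini(\alpha_{\calC})$, which follows from the definition of $\Mini$ as the inclusion-minimal (then numerically minimal) enlargements of $\Cover(\alpha_{\calC})$ — and then verifies that Steps~\eqref{en:new-Rabin-surgery-2}--\eqref{en:new-Rabin-surgery-4} restore the invariant one level deeper: \eqref{en:new-Rabin-surgery-2}--\eqref{en:new-Rabin-surgery-3} force a common odd $\muGC$ rank on each surviving component, and \eqref{en:new-Rabin-surgery-4} deletes outgoing edges of $G(t)$-vertices for all $t\in\Cover(\langle\alpha_{\calC},j_{i+1}\rangle)$, which destroys every infinite path satisfying $[G,B]_{\Cover(\langle\alpha_{\calC},j_{i+1}\rangle)\setminus\Cover(\alpha_{\calC})}$ (such a path has finitely many $B$-visits for those indices automatically, by $B''$-freeness). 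Termination at stage $\le\mu$ follows because $\width(\scrG_{i+1})<\width(\scrG_{i})$ whenever Step~\eqref{en:new-Rabin-surgery-4} is nontrivial, giving $\le n$ stages, while the non-covering property~\eqref{eq:non-covering} built into $\Mini$ forces each $\alpha_{\calC}$ to have length $\le n$ and by construction $\le k$, hence $\le\mu$; so every vertex ends with a tuple of length $\le\mu+1$, landing in $(\calD^{\GC})^{\mu+1}$. Finally one reads off Definition~\ref{def:new-Rabin-ranking}: \eqref{en:new-Rabin-ranking-1a}--\eqref{en:new-Rabin-ranking-1b} are exactly the $B$-free/$G$-edge-removed invariants recorded during the surgery, \eqref{en:new-Rabin-ranking-1c} is the $\Mini$ choice in Step~\eqref{en:new-GC-surgery-2}, and the edge conditions \eqref{en:new-Rabin-ranking-2a}--\eqref{en:new-Rabin-ranking-2b} come from the monotonicity of $\muGC$ rankings within a stage combined with the fact that the only way an edge can increase its $(m''\!+\!1)$-prefix across a stage boundary is if its source was turned odd (became a $G(t)$-vertex for $t$ in the relevant $\Cover$) and thus had its outgoing edge cut in Step~\eqref{en:new-Rabin-surgery-4} of that stage or survived precisely because it is odd. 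Oddness of $f$: any infinite path $\varrho$ in $\scrG_{w}$ is never fully removed, so it stabilizes in some final component with fixed width $m+1\ge2$, and within that component the corresponding $\muGC$ ranking is an \emph{odd} $\GC$ ranking, so $\varrho$ visits infinitely many $G(t)$-vertices with $t\in\Cover(\alpha[1..m])$ — i.e.\ infinitely many odd vertices.

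\textbf{Main obstacle.} The delicate point is the completeness direction's treatment of the \emph{merged} reduction: unlike Procedure~\ref{pro:Rabin-assignment}, where Step~\eqref{en:Rabin-surgery-4} removes $G(j)$-edges for a single $j$, here Step~\eqref{en:new-Rabin-surgery-4} removes $G(t)$-edges for the whole set $\Cover(\Projh\langle\gc_1,\ldots,\gc_{i+1}\rangle)$ at once, and one must show (i) this is sound — every infinite path satisfying $[G,B]_J$ for the newly-covered indices $J$ really is destroyed, which needs the $B''$-freeness invariant so that ``infinitely many $G(t)$, finitely many $B(t)$'' collapses to just ``infinitely many $G(t)$'' — and (ii) no infinite path that should survive is destroyed, i.e.\ a path satisfying $[G,B]_{I\setminus J}$ is not broken, which requires that such a path, in the tail, avoids all $G(t)$-vertices for $t\in J$; this last fact is not immediate and I expect it to need the observation that on a surviving infinite path the width is already its final value, so any $G(t)$-vertex with $t\in\Cover(\alpha[1..m])$ on it would make it odd with no outgoing surviving edge, contradicting that it is an infinite path — precisely the interplay between \eqref{en:new-Rabin-ranking-2b} and the definition of odd. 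Getting the quantifier order right in this argument (``eventually the width is $m+1$'' must be established before ``$G(t)$-vertices for $t\in J$ are avoided'') is the crux, and it is the same phenomenon that makes the analogous step in Lemma~\ref{lem:Rabin-ranking} work, just with $\Cover/\Mini$ in place of single indices.
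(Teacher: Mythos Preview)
Your overall architecture matches the paper's: soundness via stabilization along a path and completeness via Procedure~\ref{pro:new-Rabin-assignment}. The completeness half is essentially what the paper does (its Lemma~\ref{lem:new-Rabin-ranking-2}), modulo some loose phrasing --- e.g.\ your point (ii) in the ``main obstacle'' is stated backwards: one does \emph{not} need that every path satisfying $[G,B]_{I\setminus J}$ survives the surgery, only that every surviving infinite path satisfies $[G,B]_{I\setminus J}$, which is immediate from the invariant.

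The soundness direction, however, has a genuine gap. You assert that the width sequence $|\varrho(0)|,|\varrho(1)|,\ldots$ is eventually constant, and then that the full tuple $f(\varrho(i))$ stabilizes. Neither claim is justified by your argument, and the paper does \emph{not} prove (or use) either. Your appeal to ``well-foundedness of the lexicographic order on $(\calD^{\GC})^{*}$'' fails precisely because Property~\eqref{en:new-Rabin-ranking-2b} has the escape clause ``or $v$ is odd'': since $f$ is an odd ranking, there are infinitely many odd vertices, and each one can break the $\ge_{m''+1}$ chain. So widths can in principle keep oscillating above the minimum infinitely-recurring value. The paper's fix (Lemma~\ref{lem:new-Rabin-ranking-1}, using the auxiliary Lemmas~\ref{lem:first-nonincreasing} and~\ref{lem:finitely-even}) is to let $m''+1$ be the \emph{minimum} width occurring infinitely often, stabilize only the first $m''$ coordinates via~\eqref{en:new-Rabin-ranking-2a}, and then argue separately --- by a well-foundedness contradiction on the $(m''{+}1)$-th coordinate --- that infinitely many of the odd vertices must have width exactly $m''+1$. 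That last step is the piece you are missing; without it you cannot conclude that the infinitely many odd vertices witness $G(t)$ for $t$ in the \emph{stabilized} cover $\Cover(\alpha[1..m''])$ rather than in some larger, fluctuating cover. (A minor additional omission: even once you have infinitely many $G(t)$-visits with $t\in\Cover(\alpha[1..m''])$, you need the pigeonhole step --- the cover is finite, so some fixed $t'$ recurs --- which you elide with ``for that particular $t$''.)
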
 
\section{Complexity}
\label{sec:complexity}

In this section we analyze the complexity of our construction. As shown below, all complexity related notions $X$ are also parameterized with $B$ (besides $n$ and $k$). Still, we choose to list some or all of them when clarity is needed. In particular, we use $|X|$ to abbreviate $|X(n,k)|=\max_{B} |X(B,n,k)|$.

Let $\calD^{\muR}$ be the set of $\muR$ ranks that can be produced by Procedure~\ref{pro:new-Rabin-assignment}. Formally, $\calD^{\muR} = \cup_{f} \range(f)$ where $f$ ranges over all possible outputs of Procedure~\ref{pro:new-Rabin-assignment}. In a similar manner, we define $\calR^{\muR}$ and $\calL^{\muR}$. We have $|\calR^{\muR}| = |\calL^{\muR}|$. We note that these notions are defined differently from their counterparts in Section~\ref{sec:ranking-based-complementation}; due to two kinds of correlations (which we refer to as \emph{horizontal} and \emph{vertical} correlation), $|\calL^{\muR}|$ is much smaller than $|\calD^{\muGC}|^{n(\mu+1)}$. We view $\calL^{\muR}$ as a set of $n \times (\mu+1)$ matrices of $\muGC$ ranks and carry out a further simplification. Let $\calM^{\muR}$ be a set of $n \times \mu$ matrices obtained from $\calL^{\muR}$ by the following mapping: each $n \times (\mu+1)$ matrix $M$ is mapped to an $n \times \mu$ matrix $M'$ by (a) deleting from $M$ even ranks at the end of each row, (b) changing odds rank of the form $\langle 2i-1, j \rangle$ to $\langle i, j \rangle$, and (c) aligning each row to length $\mu$ by filling $\langle 1, 0 \rangle$'s. Clearly, $|\calL^{\muR}| \le n^{n} \cdot |\calM^{\muR}|$; the factor $n^{n}$ suffices to compensate (a), the deletion of even ranks, and (b) and (c) have no effect to the cardinality (for (b), $i \to 2i-1$ is one-to-one from $[1..n]$ onto $[2n]^{\odd}$). Let $M \in \calM^{\muR}$ be called $\muR$-matrices. We write $\Projr M$ and $\Projh M$ to mean, respectively, the projection of $M$ on numeric ranks (called an $r$-matrix) and on index ranks (called an $h$-matrix). Let $\calM^{r}=\Projr \calM^{\muR}$, $\calM^{h}=\Projh \calM^{\muR}$, and $\calD^{r}$ and $\calD^{h}$ be the sets of rows occurring in matrices in $\calM^{r}$ and $\calM^{h}$, respectively. Obviously, we have $|\calM^{\muR}| \le |\calM^{r}| \cdot |\calM^{h}|$ and $|\calM^{h}| \le (|\calD^{h}|)^{n}$.

\begin{example}[$\muR$-Matrix]
\label{ex:muR-matrix}
Let us consider a case where $n=3$, $k=3$, and $Q=\{q_{0}, q_{1}, q_{2}\}$. Below we show that a $\muR$ level rank $f$
corresponds to a $\muR$-matrix $M$, which projects to $M_{r}$ and $M_{h}$.
\begin{align*}
\begin{array}{ccccc}
\begin{bmatrix}
q_{0} \\
q_{1} \\
q_{2} \\
\end{bmatrix}
&
\begin{vmatrix}
\langle 1, 2 \rangle & \langle 1, 3 \rangle & 4 &   \\
\langle 1, 2 \rangle & \langle 3, 1 \rangle & 2 &   \\
\langle 1, 2 \rangle & \langle 3, 1 \rangle & \langle 3, 3 \rangle & 0
\end{vmatrix}
&
\begin{vmatrix}
\langle 1, 2 \rangle & \langle 1, 3 \rangle & \langle 1, 0 \rangle \\
\langle 1, 2 \rangle & \langle 2, 1 \rangle & \langle 1, 0 \rangle \\
\langle 1, 2 \rangle & \langle 2, 1 \rangle & \langle 2, 3 \rangle
\end{vmatrix}
&
\begin{vmatrix}
1 & 1 & 1 \\
1 & 2 & 1 \\
1 & 2 & 2
\end{vmatrix}
&
\begin{vmatrix}
2 & 3 & 0 \\
2 & 1 & 0 \\
2 & 1 & 3
\end{vmatrix} \\
Q & f & M & M_{r} & M_{h}
\end{array}
\end{align*}
\end{example}

\paragraph{Bounding $|\calM^{h}|$.}
It turns out that we only need to exploit a horizontal correlation to bound $|\calM^{h}|$. Recall that each $\alpha \in I^{*}$ names a subset of $Q$, namely $\Cover(\alpha)$. The idea is to order all $\alpha$ that could occur in $\calD^{h}$ into a tree structure. Consider an unordered tree where the root is labeled by $\epsilon$ and each non-root node is labeled by an index in $I$. With little confusion, we identify a node $\alpha$ with the path from the root to $\alpha$ and represent $\alpha$ by the sequence of indices on the path. So a non-root node $\alpha$ has $\alpha[|\alpha|]$ as its label and names $\Cover(\alpha)$. We arrive at the following important notion.

\begin{definition}[Increasing Tree of Sets ($\ITS$)]
\label{def:ITS} An $\ITS$ $T(n,k,B)$ is an unordered $I$-labeled tree (except the root which is labeled by $\epsilon$) such that
\begin{enumerate}[label=\ref{def:ITS}.\arabic*,ref=\ref{def:ITS}.\arabic*]
\item \label{en:ITS} A non-root node $\alpha$ exists in $T(n,k,B)$ iff $\forall i \in [1..|\alpha|]$, $\alpha[i] \in \Mini(\alpha[1..i))$.
\end{enumerate}
\end{definition}

Property~\eqref{en:ITS} succinctly encodes three important features of $\ITS$. First, an $\ITS$ is maximal in the sense that no node can be added. Second, if $\beta$ is a direct child of $\alpha$, then $\beta$ must name at least one new state that has not been named by $\alpha$. Third, the new contributions by $\beta$ cannot be covered by contributions made by any another sibling $\beta'$. In particular, if more than one sibling can make the same contribution, then the one with the smallest index is selected. It follows that each tuple of $n$, $k$ and $B$ uniquely determines $T(n,k,B)$ (in the unordered sense). Note that the height of $T(n,k,B)$ (the length of the longest path in $T(n,k,B)$) is bounded by $\mu$.

\begin{example}[$\ITS$]
\label{ex:ITS}
Consider $n=3$, $k=4$, $Q=\{q_{0},q_{1},q_{2}\}$, $B:[1..4] \to 2^{Q}$ and $B':[1..5] \to 2^{Q}$,
\begin{align*}
B(1)=\{q_{0}, q_{1}\}, && B(2)=\{q_{0}\}, && B(3)=\{q_{1},q_{2}\}, && B(4)=\{q_{2}\},
\end{align*}
and $B'$ extends $B$ with $B'(5)=\{q_{1}\}$. $T(3,4,B)$ and $T(3,5,B')$ are given in Figure~\ref{fig:ITS}.
For clarity, for each non-root node $\alpha$, we also list $B(\alpha[|\alpha|])$ as the set label of $\alpha$. In $T(3,4,B)$, neither $\{q_{0},q_{1}\}$ nor $\{q_{1},q_{2}\}$ can appear at height $1$, because $\{q_{0},q_{1}\}$ covers $\{q_{0}\}$ and $\{q_{1},q_{2}\}$ covers $\{q_{2}\}$. The leftmost node at the bottom level is labeled by $\{q_{1},q_{2}\}$ instead of by $\{q_{2}\}$ due to the index minimality requirement. For the same reason, in $T(3,5,B')$, we have nodes $\langle 2, 1\rangle$, $\langle 2,4,1\rangle$ and $\langle 4,2,1\rangle$ all labeled with $\{q_{0},q_{1}\}$, and nodes $\langle 2, 1, 3 \rangle$ and $\langle 4,3\rangle$ all labeled with $\{q_{1},q_{2}\}$.
\end{example}

\begin{figure}[t!]
\begin{center}
\begin{align*}
\begin{array}{ccc}
\begin{tikzpicture}[scale=0.7]
\Tree [.$\epsilon:\emptyset$
        [.$2:\{q_{0}\}$
          [.$1:\{q_{0},q_{1}\}$
            [.$3:\{q_{1},q_{2}\}$ ]
          ]
          [.$4:\{q_{2}\}$
            [.$1:\{q_{0},q_{1}\}$ ]
          ]
        ]
        [.$4:\{q_{2}\}$
          [.$3:\{q_{1},q_{2}\}$
            [.$2:\{q_{0}\}$ ]
          ]
          [.$2:\{q_{0}\}$
            [.$1:\{q_{0},q_{1}\}$ ]
          ]
        ]
      ]
\end{tikzpicture}
& &
\begin{tikzpicture}[scale=0.7]
\Tree [.$\epsilon:\emptyset$
        [.$5:\{q_{1}\}$
          [.$2:\{q_{0}\}$
            [.$4:\{q_{2}\}$ ]
          ]
          [.$4:\{q_{2}\}$
            [.$2:\{q_{0}\}$ ]
          ]
        ]
        [.$2:\{q_{0}\}$
          [.$1:\{q_{0},q_{1}\}$
            [.$3:\{q_{1},q_{2}\}$ ]
          ]
          [.$4:\{q_{2}\}$
            [.$1:\{q_{0},q_{1}\}$ ]
          ]
        ]
        [.$4:\{q_{2}\}$
          [.$3:\{q_{1},q_{2}\}$
            [.$2:\{q_{0}\}$ ]
          ]
          [.$2:\{q_{0}\}$
            [.$1:\{q_{0},q_{1}\}$ ]
          ]
        ]
      ]
\end{tikzpicture} \\
T(3,4,B) & & T(3,5,B')
\end{array}
\end{align*}
\end{center}
\caption{\textrm{Two $\ITS$ in Example~\ref{ex:ITS}.}}
\label{fig:ITS}
\end{figure}

It is easily seen that Property~\eqref{en:new-Rabin-ranking-1c} corresponds exactly to Property~\eqref{en:ITS}. So a one-to-one correspondence exists between non-root nodes in $T(n,k,B)$ and elements in $\calD^{h}(n,k,B)$. Let $|T(n,k,B)|$ denote the number of non-root nodes in $T(n,k,B)$ and $H(n,k)=\max_{B} |T(n,k,B)|$. Clearly, we have $|\calM^{h}| \le (H(n,k))^{n}$.

\begin{lemma}
\label{lem:bound-of-TH}
$H(n,k) = 2^{O(k \lg k)}$ for $k = O(n)$ and $H(n,k) = 2^{O(n \lg n)}$ for $k = \omega(n)$.
\end{lemma}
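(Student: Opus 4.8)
\textbf{Proof proposal for Lemma~\ref{lem:bound-of-TH}.}

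The plan is to bound $H(n,k)=\max_B |T(n,k,B)|$ by separately controlling the \emph{height} of the $\ITS$ and the \emph{branching factor} at each level, then multiplying these together. The height is already known to be at most $\mu=\min(n,k)$, by the non-covering condition~\eqref{eq:non-covering}: along any root-to-node path $\alpha$, each step $\alpha[i]\in\Mini(\alpha[1..i))$ forces $\Cover$ to strictly grow, and $\Cover$ is a subset of $Q$ so it can grow at most $n$ times; combined with $|\alpha|\le k$ this gives height $\le\mu$. For the branching factor, the key observation is that the children of a node $\alpha$ are indexed by $\Mini(\alpha)$, and by Condition~\eqref{eq:mini-1}--\eqref{eq:mini-2} the sets $B(j)\cup\Cover(\alpha)$ for $j\in\Mini(\alpha)$ are \emph{distinct} (ties broken by numeric minimality) and \emph{incomparable} minimal extensions of $\Cover(\alpha)$. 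The first crude bound is that $|\Mini(\alpha)|\le k$, which already yields $|T(n,k,B)|\le k^{\mu}\cdot\mu$ (summing a geometric-like series over the $\le\mu$ levels), i.e. $H(n,k)=2^{O(\mu\lg k)}$. For $k=O(n)$ this is exactly $2^{O(k\lg k)}$, settling that case.

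The harder case is $k=\omega(n)$, where we must improve $2^{O(\mu\lg k)}=2^{O(n\lg k)}$ down to $2^{O(n\lg n)}$ — we cannot afford the full $k$ choices per node when $k$ is superpolynomial in $n$. Here I would exploit that the distinct sets $\{B(j)\cup\Cover(\alpha) : j\in\Mini(\alpha)\}$ all lie strictly between $\Cover(\alpha)$ and $Q$, so as \emph{subsets of $Q$} there are at most $2^n$ of them; but more sharply, since they are \emph{minimal} proper extensions of the fixed set $\Cover(\alpha)$ and pairwise incomparable (an antichain of extensions), one should argue the number of genuinely distinct new-contribution sets a node can have among its children is polynomially bounded in $n$ — intuitively each child must contribute at least one new state of $Q\setminus\Cover(\alpha)$, and minimality plus the antichain condition limits how these can overlap. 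I expect the clean bound to be that the number of distinct labels $\Cover(\beta)$ among children $\beta$ of $\alpha$ is $O(n)$ (or at worst $2^{O(n)}$ summed over the whole tree by a charging argument against subsets of $Q$), and then index-minimality~\eqref{eq:mini-2} picks exactly one representative index per distinct extension, so $|\Mini(\alpha)|$ is itself $O(n)$. This gives $|T(n,k,B)| \le \sum_{i=0}^{\mu} (O(n))^{i} = 2^{O(\mu\lg n)} = 2^{O(n\lg n)}$, as desired.

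The main obstacle I anticipate is making the second bullet precise: showing that the number of children of a node — equivalently $|\Mini(\alpha)|$ — is bounded by a polynomial in $n$ rather than by $k$. The subtlety is that two indices $j\neq j'$ can have $B(j)\neq B(j')$ yet $B(j)\cup\Cover(\alpha)=B(j')\cup\Cover(\alpha)$, and Condition~\eqref{eq:mini-2} keeps only the numerically smallest such index, so the count of children is really the number of \emph{distinct minimal extension sets}, which is an antichain in the lattice $2^Q$ above $\Cover(\alpha)$; bounding an antichain of minimal extensions is where the combinatorial work lies. One safe fallback, if a per-node $O(n)$ bound is awkward, is a global counting argument: the total number of non-root nodes is bounded by $\mu$ times the number of distinct $(\Cover(\alpha[1..i-1]),\Cover(\alpha[1..i]))$ transitions, and since each such transition is an ordered pair of subsets of $Q$ with the first strictly contained in the second, there are at most $3^n$ of them, already giving $H(n,k)\le \mu\cdot 3^n=2^{O(n)}$ — but to land exactly $2^{O(n\lg n)}$ (and, importantly, to not \emph{beat} the lower bound, which would signal an error) one wants the level-by-level branching bound. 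I would therefore carry out: (1) height $\le\mu$ via non-covering; (2) the easy $k^{\mu}$ bound to dispatch $k=O(n)$; (3) the antichain/charging argument bounding children by $\mathrm{poly}(n)$; (4) multiply through the $\le\mu$ levels to conclude $2^{O(n\lg n)}$ for $k=\omega(n)$.
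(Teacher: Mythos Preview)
Your treatment of the case $k=O(n)$ is correct and coincides with the paper's: height $\le\mu$ and branching $\le k$ give $|T(n,k,B)|\le \mu k^{\mu}=2^{O(k\lg k)}$.

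For $k=\omega(n)$, however, your main line does not go through. The per-node bound $|\Mini(\alpha)|=O(n)$ (or even $\mathrm{poly}(n)$) is false. Take $n$ even, let $k=\binom{n}{n/2}$, and let $B$ enumerate all $(n/2)$-element subsets of $Q$. At the root $\alpha=\epsilon$ the sets $B(j)$ are pairwise incomparable, hence all minimal, and distinct, so $|\Mini(\epsilon)|=\binom{n}{n/2}=\Theta(2^{n}/\sqrt{n})$. Large branching is possible; what saves the bound is that large branching forces short height (each such child already covers $n/2$ states), a trade-off your level-by-level product does not capture.

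Your fallback is also wrong as stated --- and you already sensed this, since it yields $2^{O(n)}$, which is strictly below the true growth (the all-singletons example gives $|T|\ge n!$). The flaw is that a node is \emph{not} determined by its last transition $(\text{States}(\text{parent}),\text{States}(\text{node}))$: with $B(i)=\{q_{i-1}\}$ for $i\in[1..3]$, the nodes $\langle 1,2,3\rangle$ and $\langle 2,1,3\rangle$ share the transition $(\{q_0,q_1\},\{q_0,q_1,q_2\})$. What \emph{does} determine a node is its entire chain of state-sets $\emptyset\subsetneq S_1\subsetneq\cdots\subsetneq S_m\subseteq Q$: siblings have distinct state-sets by the tie-breaking clause~\eqref{eq:mini-2}, and induction on depth then gives injectivity. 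Counting such chains crudely by $\sum_{m\le n}(m{+}1)^n\le (n{+}1)^{n+1}=2^{O(n\lg n)}$ would repair your fallback into a correct proof.

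The paper takes a different route for $k=\omega(n)$. It extends $B$ to $B'$ by adjoining all singletons $\{q\}$ not already in $\range(B)$, proves a monotonicity lemma ($|T(n,k,B)|\le |T(n,k',B')|$ whenever $B'$ adds a singleton), and observes that once all singletons are present, every minimal extension is by a single state, so $\Mini(\alpha)\subseteq[1..n]$ and $T(n,k',B')$ is the tree of prefixes of permutations of $[1..n]$, with at most $3\cdot n!$ nodes. The chain-counting fix above is more direct and avoids the monotonicity lemma; the paper's approach, on the other hand, makes the extremal structure explicit.
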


\paragraph{Bounding $|\calM^{r}|$.}
Here we need to exploit both horizontal and vertical correlations. We show that every $n \times \mu$ $r$-matrix induces a $2^{Q}$-labeled ordered tree with at most $n$ leaves and with height at most $\mu$. Such a tree is called an $n \times \mu$ tree. We bound $|\calM^{r}|$ by counting the number of $n \times \mu$ trees.

Let $M$ be an $r$-matrix. Since $M$ comes from a $\muR$ level rank, $M$ is associated with vertices at a level. To facilitate the discussion below, we use term \emph{states} to specifically mean those vertices at the level where $M$ is associated with, and use term \emph{vertices} just as before. By rank $i$ we simply mean a number $i$ in $M$, which corresponds to the numeric $\muGC$ rank $2i-1$.

Let us first consider ranks in column $1$ of $M$. A state $q$ being ranked with an odd $\muGC$ rank means that at certain stage of $\muGC$ ranking assignment, $q$ becomes $B(j)$-free for some $j \in I$, which implies that there exists an infinite path starting from $q$ (recall that all finite vertices have been removed before this odd $\muGC$ rank is assigned). If two states $q$, $q'$ are ranked with different odd $\muGC$ ranks, say $q$ with $\langle 2i-1,j\rangle$ and $q'$ with $\langle 2i'-1,j'\rangle$ where $i > i'$, then there exist two infinite paths $\varrho$ and $\varrho'$ such that $\varrho$ starts from $q$, $\varrho'$ starts from $q'$, and $\varrho$ and $\varrho'$ never intersect. This is due to the nature of $\muGC$ ranking assignment; $\langle 2i-1,j\rangle$ is assigned to some $B(j)$-free vertices only after those $B(j')$-free vertices with odd $\muGC$ rank $\langle 2i'-1,j'\rangle$ have been removed.

Note that it is perfectly possible that an infinite path starting from $q$ intersects another infinite path starting from $q'$. But a maximal subset $S^{(1)}$ of states, all with the same rank, called a \emph{cell} at column $1$, should ``own'' at least one \emph{private} infinite path that does not intersect the private paths owned by any other cells at column $1$. We call a path \emph{named} if it is owned by a cell. Let $m^{(1)}$ be the maximum rank in column $1$, and note that not all ranks in $[1..m^{(1)}]$ necessarily appear in column $1$. But again, by the way $\muGC$ ranking assignment is carried out, for each non-occurring rank, at least one private infinite path exists, which is called \emph{hidden} and viewed as being owned by $\emptyset$. Easily seen now, each rank in $[1..m^{(1)}]$ corresponds to a non-empty set of private infinite paths.

In general, a \emph{cell} at column $l$ is a maximal subset of states, each of which is assigned the same tuple of ranks up to column $l$. Consider a cell $S^{(l)}=\{q_{i_{1}}, \ldots, q_{i_{j}}\}$ at column $l$. Let $m^{(l+1)}=\max\{M[i_{1},l+1], \ldots, M[i_{j},l+1]\}$. By the same reasoning as before, each rank in $[1..m^{(l+1)}]$ corresponds to a non-empty set of private infinite paths. The private paths associated with rank $M[i_{j'},l+1]$ are owned by $S^{(l+1)}_{j'} \subseteq S^{(l)}$ which is a cell at column $l+1$ with rank $M[i_{j'},l+1]$ ($S^{(l+1)}_{j'}=\emptyset$ if no states in $S^{(l)}$ is mapped to $M[i_{j'},l+1]$). Moreover, none of these paths, hidden or named, should intersect private paths owned by any cell at column $l$ that is a subset of $Q \setminus S^{(l)}$, because states in $Q \setminus S^{(l)}$ and states in $S^{(l)}$ are not in the same component at stage $l$ (the $l+1$-th stage) in Procedure~\ref{pro:new-Rabin-assignment}. Now we are ready to show how to build an $n \times \mu$ tree from $M$.

Each node in the tree is associated with a label which is a subset of $Q$. The root is labeled with set $Q$. For each rank $i \in [1..m^{(1)}]$, we add a child to the root and we order those children increasingly by the ranks associated with them. If $i$ does not appear in column $1$, the $i$-th child (from left to right) is labeled with $\emptyset$ and is a terminal node (leaf). Otherwise, the child is labeled with the cell at column $1$ with rank $i$ and the child is non-terminal if its height is less than $\mu$. We repeat the process column by column. Each maximal $S^{(l)}$ at column $l < \mu$ corresponds to a non-terminal node at height $l$, from which we spawn a child for each rank in $i \in [1..m^{(l)}]$, and we order and label the children using the rule stated above. After processing column $\mu$, we obtain an $n \times \mu$ tree, because the number of leaves in the tree cannot exceed $\width(\scrG_{w})\le n$, which we refer to as the \emph{maximum width property} ($\MWP$). Now we call an $n \times \mu$ tree a $\TOP$ (\emph{Tree of Ordered Partitions}) and let $\calT^{r}(n,k)$ denote the set of $\TOPs$. We have $|\calM^{r}| \le |\calT^{r}(n,k)|$.

\begin{figure}[t!]
\begin{center}
\begin{tabular}{cccc}
$
\begin{vmatrix}
1 & 1 & 2 \\
1 & 1 & 1 \\
1 & 1 & 1
\end{vmatrix}
$ &
$
\begin{vmatrix}
1 & 1 & 3 \\
1 & 1 & 1 \\
1 & 1 & 1
\end{vmatrix}
$ &
$
\begin{vmatrix}
  2 & 1 & 2 \\
  2 & 1 & 2 \\
  1 & 2 & 1
\end{vmatrix}
$ &
$
\begin{vmatrix}
2 & 1 & 2 \\
2 & 1 & 2 \\
1 & 2 & 3
\end{vmatrix}
$ \\
$M_{1}$ & $M_{2}$ & $M_{3}$ & $M_{4}$ \\
\begin{tikzpicture}[scale=0.7]
\Tree [.$\{q_{0},q_{1},q_{2}\}$
        [.$\{q_{0},q_{1},q_{2}\}$
          [.$\{q_{0},q_{1},q_{2}\}$
            [.$\{q_{1},q_{2}\}$ ]
            [.$\{q_{0}\}$ ]
          ]
        ]
      ]
\end{tikzpicture}
&
\begin{tikzpicture}[scale=0.7]
\Tree [.$\{q_{0},q_{1},q_{2}\}$
        [.$\{q_{0},q_{1},q_{2}\}$
          [.$\{q_{0},q_{1},q_{2}\}$
            [.$\{q_{1},q_{2}\}$ ]
            \edge[dashed];
            [.$\emptyset$ ]
            [.$\{q_{0}\}$ ]
          ]
        ]
      ]
\end{tikzpicture}
&
\begin{tikzpicture}[scale=0.7]
\Tree [.$\{q_{0},q_{1},q_{2}\}$
        [.$\{q_{2}\}$
          [.$\{q_{2}\}$
            [.$\{q_{2}\}$ ]
          ]
        ]
        [.$\{q_{0},q_{1}\}$
          [.$\{q_{0},q_{1}\}$
            \edge[dashed];
            [.$\emptyset$ ]
            [.$\{q_{0},q_{1}\}$ ]
          ]
        ]
      ]
\end{tikzpicture}
&
\begin{tikzpicture}[scale=0.7]
\Tree [.$\{q_{0},q_{1},q_{2}\}$
        [.$\{q_{2}\}$
          \edge[dashed];
          [.$\emptyset$ ]
          [.$\{q_{2}\}$
            [.$\{q_{2}\}$ ]
          ]
        ]
        [.$\{q_{0},q_{1}\}$
          [.$\{q_{0},q_{1}\}$
            \edge[dashed];
            [.$\emptyset$ ]
            [.$\{q_{0},q_{1}\}$ ]
          ]
        ]
      ]
\end{tikzpicture} \\
$T_{1}$ & $T_{2}$ & $T_{3}$ & $T_{4}$
\end{tabular}
\end{center}
\caption{\textrm{Four $3 \times 3$ matrices and their corresponding tree representations in Example~\ref{ex:TOP}.}}
\label{fig:TOP}
\end{figure}

\begin{example}[$\TOP$]
\label{ex:TOP}
Four $3 \times 3$ matrices $M_{1}$-$M_{4}$ and their corresponding tree representations $T_{1}$-$T_{4}$ are given in Figure~\ref{fig:TOP}. $M_{1}$-$M_{3}$ obey $\MWP$ and hence $T_{1}$-$T{3}$ are $\TOPs$. $T_{4}$ is not a $\TOP$ because it has more than $3$ leaves.
\end{example}

\begin{lemma}[Numeric Bound]
\label{lem:numeric-bound}
$|\calT^{r}(n,k)| = 2^{O(n \lg n)}$.
\end{lemma}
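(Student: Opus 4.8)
The plan is to bound the number of $\TOPs$ by a pure counting argument that forgets the index structure entirely and relies only on the combinatorial shape constraints: a $\TOP$ is a $2^{Q}$-labeled ordered tree with at most $n$ leaves and height at most $\mu \le n$. First I would observe that the key scarce resource is leaves: by the maximum width property ($\MWP$) a $\TOP$ has at most $n$ leaves, so every root-to-leaf path has length $\le \mu \le n$, and the total number of \emph{branching} nodes (nodes with $\ge 2$ children) is at most $n-1$. The only way the tree can be large is through long non-branching chains, but each such chain sits on the path from the root toward some leaf and therefore has length $\le \mu$. Hence the total number of nodes of any $\TOP$ is at most $O(n\mu) = O(n^2)$, and in fact one can do better by noting the tree decomposes into at most $n$ root-to-leaf "spines" of length $\le \mu$ each.

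Next I would separate the count into (i) the number of possible \emph{shapes} (unlabeled ordered trees with $\le n$ leaves and height $\le \mu$) and (ii) for each shape, the number of ways to place the $2^{Q}$-labels on the nodes. For (i): an ordered tree with at most $n$ leaves is determined by, e.g., its sequence of leaf-depths together with the branching structure; a crude bound is that the number of ordered forests with $\le n$ leaves and height $\le \mu \le n$ is at most $2^{O(n\lg n)}$ — for instance, encode the tree by a balanced-parenthesis / DFS traversal string of length $O(n^2)$ over a constant alphabet, but then tighten using that there are only $O(n)$ branching nodes and the "straight" segments between them are fully described by their lengths in $[1..\mu]$, giving at most $\mu^{O(n)} = 2^{O(n\lg n)}$ shapes. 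For (ii): the crucial point is that the labels are \emph{not} free subsets of $Q$. Along any root-to-leaf path the labels are nested and strictly decreasing in size (each non-root node names at least one new state relative to its parent in the $r$-matrix semantics — this is the "private infinite path" argument preceding the lemma, which forces $S^{(l+1)} \subsetneq S^{(l)}$ whenever we descend, together with the fact that distinct children of a node carry pairwise disjoint "new contributions"). So the labels on the children of a fixed node, together with their parent's label, form (a refinement of) an \emph{ordered partition} of a subset of $Q$. The number of ordered set partitions of an $n$-element set is the ordered Bell number, which is $2^{O(n\lg n)}$; and since across the whole tree the leaf-labels refine a single partition of $Q$ into at most $n$ blocks, the total assignment of labels over the whole tree is captured by choosing, at each of the $O(n)$ branching nodes, an ordered partition of that node's label, which multiplies out to $(2^{O(n\lg n)})^{O(n)} = 2^{O(n^2 \lg n)}$ — too weak. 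I would therefore sharpen: the correct accounting is that the \emph{entire} set of leaf labels is an ordered partition of $Q$ (after discarding $\emptyset$-leaves), which is one ordered Bell number $= 2^{O(n\lg n)}$ choices, and every internal label is then forced to be the union of the leaf-labels in its subtree — so once the shape and the ordered leaf-partition are fixed, the labeling is \emph{determined}. (The $\emptyset$-leaves, i.e. the "hidden paths", only affect the shape, already counted.) Multiplying, $|\calT^{r}(n,k)| \le (\text{number of shapes}) \times (\text{ordered Bell}(n)) \le 2^{O(n\lg n)} \cdot 2^{O(n\lg n)} = 2^{O(n\lg n)}$.

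The main obstacle I anticipate is the bookkeeping needed to make "the labeling is determined by the shape plus the ordered leaf-partition" fully rigorous: one must verify that an internal node's label really is the union of its descendant leaves' labels (equivalently, that $S^{(l)} = \bigcup_{j} S^{(l+1)}_j \cup (\text{states that stopped refining})$ is handled correctly — a state may "stabilize" at a node without reaching depth $\mu$, so not every label-carrying node has its label equal to a union of \emph{leaf} labels unless we are careful about where chains terminate), and that the ordering of children is consistent with a canonical reading of the partition. I expect this to be dischargeable by the "private path" / $\MWP$ reasoning already developed in the paragraphs preceding the lemma: each cell owns a private infinite path, distinct cells' private paths are pairwise disjoint, there are $\le n$ such paths total, and this simultaneously yields the leaf bound, the height bound, and the fact that descending strictly shrinks the label. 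Once that structural picture is pinned down, the remaining estimates are just the ordered Bell number bound $\sum_{j\le n} j!\,S(n,j) \le n^n = 2^{O(n\lg n)}$ and the shape count $\mu^{O(n)} = 2^{O(n\lg n)}$, both routine. A secondary subtlety is confirming $H(n,k)$-type index data plays no role here — indeed it does not, since $\calM^r = \Projr\calM^{\muR}$ has already discarded all $h$-rank information, so the bound is genuinely independent of $k$.
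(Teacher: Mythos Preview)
Your approach is essentially the paper's: separate the count into (tree shape) $\times$ (leaf labeling), use that internal labels are forced by leaf labels, and bound each factor by $2^{O(n\lg n)}$. The paper's execution is cleaner on the shape side: instead of your ad-hoc decomposition into branching nodes plus chain lengths, it invokes the Narayana numbers $T(e,l)=\tfrac{1}{e}\binom{e}{l}\binom{e}{l-1}$ for ordered trees with $e$ edges and $l$ leaves, then sums over $e\le n\mu\le n^{2}$ and $l\le n$ to get $n\binom{n^{2}}{n}\binom{n^{2}}{n-1}\cdot n^{n}=2^{O(n\lg n)}$ directly; for the labeling it uses the same observation you do (internal labels determined by leaves) and bounds the leaf assignment by $l^{n}$ rather than ordered Bell numbers. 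Your ``main obstacle'' about states stabilizing before depth $\mu$ is a non-issue: recall that in passing from $\calL^{\muR}$ to $\calM^{\muR}$ every row was padded to length exactly $\mu$ with $\langle 1,0\rangle$'s, so in the $\TOP$ every non-$\emptyset$ branch runs all the way to depth $\mu$ and each internal label is literally the disjoint union of its children's labels.
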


Since $|\calR^{\muR}| \le n^{n} \cdot |\calM^{\muR}|$, $|\calM^{\muR}| \le |\calM^{r}| \cdot |\calM^{h}|$, $|\calM^{r}| \le |\calT^{r}(n,k)|$, and $|\calM^{h}| \le (H(n,k))^{n}$, by Lemmas~\ref{lem:bound-of-TH} and~\ref{lem:numeric-bound}, we have

\begin{theorem}[Streett Upper Bound]
\label{thm:Streett-upper-bound}
Streett complementation is $2^{O(n \lg n + nk \lg k)}$ for $k = O(n)$ and $2^{O(n^{2} \lg n)}$ for $k=\omega(n)$.
\end{theorem}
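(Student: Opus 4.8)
The plan is to assemble the theorem directly from the chain of cardinality estimates already laid out in the section. The state complexity of the complement automaton is $|Q'| = |2^Q \times 2^Q \times \calR^{\muR}|$, and since $|\calR^{\muR}|$ dominates $2^{|Q|}$, it suffices to bound $|\calR^{\muR}| = |\calL^{\muR}|$. First I would recall the reduction $|\calR^{\muR}| \le n^n \cdot |\calM^{\muR}|$, which was justified earlier: stripping the trailing even ranks costs at most a factor $n^n$, and renormalizing odd ranks $\langle 2i-1, j\rangle \mapsto \langle i, j\rangle$ together with right-padding by $\langle 1,0\rangle$'s is injective. Then I would split $|\calM^{\muR}| \le |\calM^{r}| \cdot |\calM^{h}|$ via the projections onto numeric and index matrices.

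The two factors are then controlled by the two hard lemmas. For the index part, I would invoke the correspondence between elements of $\calD^h(n,k,B)$ and non-root nodes of the $\ITS$ $T(n,k,B)$ (Property~\eqref{en:new-Rabin-ranking-1c} matching Property~\eqref{en:ITS}), giving $|\calM^h| \le (H(n,k))^n$, and then apply Lemma~\ref{lem:bound-of-TH}: $H(n,k) = 2^{O(k \lg k)}$ when $k = O(n)$ and $2^{O(n \lg n)}$ when $k = \omega(n)$. For the numeric part, I would use the $\TOP$ encoding of an $r$-matrix, so $|\calM^r| \le |\calT^{r}(n,k)|$, and apply Lemma~\ref{lem:numeric-bound}: $|\calT^{r}(n,k)| = 2^{O(n \lg n)}$.

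It remains to do the arithmetic in the two regimes. When $k = O(n)$: $|\calM^h| \le (2^{O(k\lg k)})^n = 2^{O(nk\lg k)}$, $|\calM^r| \le 2^{O(n\lg n)}$, the leading $n^n = 2^{O(n\lg n)}$, so $|\calR^{\muR}| \le 2^{O(n\lg n)} \cdot 2^{O(n\lg n)} \cdot 2^{O(nk\lg k)} = 2^{O(n\lg n + nk\lg k)}$. When $k = \omega(n)$: $|\calM^h| \le (2^{O(n\lg n)})^n = 2^{O(n^2 \lg n)}$, which dominates both $2^{O(n\lg n)}$ factors, so $|\calR^{\muR}| = 2^{O(n^2\lg n)}$. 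Multiplying by $|2^Q \times 2^Q| = 2^{2n}$ is absorbed into the exponents, yielding the stated bounds.

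The genuinely hard content of the theorem lies entirely in Lemmas~\ref{lem:bound-of-TH} and~\ref{lem:numeric-bound} — i.e., that the horizontal correlation keeps the $\ITS$ small and that the combined horizontal/vertical correlation plus the maximum width property ($\MWP$, at most $n$ leaves) keeps the number of $\TOPs$ down to $2^{O(n\lg n)}$ rather than the naive $(nk)^{O(nk)}$. Granting those lemmas, the theorem itself is just bookkeeping: the only point requiring a little care is confirming that none of the polynomial-in-$n,k$ side factors (the subset-construction components, the $n^n$ normalization factor, and the $2^{O(n\lg n)}$ numeric bound) ever escape the claimed exponent — in the $k=O(n)$ case $n^n$ is subsumed by $nk\lg k$ only loosely, so I would note it is in any case $2^{O(n\lg n)}$ and simply added to the exponent; in the $k=\omega(n)$ case everything is swallowed by $2^{O(n^2\lg n)}$.
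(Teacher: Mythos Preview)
Your proposal is correct and follows exactly the paper's own approach: the theorem is obtained by chaining the inequalities $|\calR^{\muR}| \le n^{n} \cdot |\calM^{\muR}|$, $|\calM^{\muR}| \le |\calM^{r}| \cdot |\calM^{h}|$, $|\calM^{r}| \le |\calT^{r}(n,k)|$, and $|\calM^{h}| \le (H(n,k))^{n}$, and then invoking Lemmas~\ref{lem:bound-of-TH} and~\ref{lem:numeric-bound}. You even spell out the arithmetic in the two regimes that the paper leaves implicit, and you correctly identify that all the substantive work resides in those two lemmas.
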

Note that we put bounds in the form $2^{O(\cdot)}$ just for simplicity. Even for a small $k$ (i.e. $k=O(n)$), our upper bound is substantially smaller than the current best one $(nk)^{O(nk)}$, established by respective constructions in~\cite{Kla91,Saf92,KV05a,Pit06}. Easily seen from the proofs of Lemmas~\ref{lem:bound-of-TH} and ~\ref{lem:numeric-bound}, our upper bound is in fact $n^{O(n)} \cdot k^{O(nk)}$ when $k=O(n)$.
Also note that Lemma~\ref{lem:numeric-bound} is crucial in tightening parity complementation. 
\section{Parity Complementation}
\label{sec:parity-complementation}

Parity automata is a special kind of Streett automata where a Streett condition $\langle G, B\rangle_{I}$ is augmented with the so-called \emph{Rabin chain condition} $B(1) \subset G(1) \subset \cdots \subset B(k) \subset G(k)$. Now the short length of $\muR$ ranks is not enough to give us a better bound, because we already have $k \le \lfloor (n+1)/2\rfloor$. Nevertheless, the Rabin chain condition makes the $\GC$ condition $[B]_{I}$ degenerate to the $\CB$ condition $[B(1)]$, because being $B(1)$-free is equivalent to being $B(i)$-free for some $i \in I$. This coincides with the way $\Mini$ works. Intuitively, $\Mini$ synchronizes all components at a stage of $\muR$ ranking assignment. In the first stage of $\muR$ ranking assignment, in Step~\eqref{en:Rabin-surgery-1}, $\Mini$ makes every vertex get the $h$-rank $1$, though vertices may get different $r$-ranks. After disabling $G(1)$ vertices (by deleting all outgoing edges from them in Step~\eqref{en:Rabin-surgery-4}), we have a collection of components satisfying parity condition $\langle G, B\rangle_{[2..k]}$. Then in the second stage of $\muR$ ranking assignment, $\Mini$ gives every vertex the $h$-rank $2$. Repeating this process, the $h$-projection of a final $\muR$ rank is just $\langle 1, \ldots, m \rangle$ for some $m \in [1..k]$, which is completely redundant, because the only useful information (having length $m$) is already encoded by the corresponding $r$-projection. As a consequence, $h$-matrices contribute nothing to the complexity. A customized construction for parity complementation is given in the appendix.

\begin{theorem}[Parity Upper Bound]
\label{thm:parity-upper-bound}
Parity complementation is in $2^{O(n\lg n)}$.
\end{theorem}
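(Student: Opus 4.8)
The plan is to adapt the Streett complementation construction from Section~\ref{sec:streett-complementation}, specializing it to the parity case, and to show that in this specialization the $h$-matrices collapse to a single canonical value, so that the only contribution to $|\calR^{\muR}|$ comes from the $r$-matrices, which are already bounded by $2^{O(n \lg n)}$ via Lemma~\ref{lem:numeric-bound}. Concretely, I would instantiate Procedure~\ref{pro:generic-complementation} with $\calR$ being (the analogue of) $\muR$ level rankings, but exploit the Rabin chain condition $B(1) \subset G(1) \subset \cdots \subset B(k) \subset G(k)$ in the $\muGC$ ranking assignment (Procedure~\ref{pro:new-GC-assignment}): because being $B(1)$-free is equivalent to being $B(i)$-free for some $i$, the set $\Mini(\alpha)$ is forced to be the singleton $\{|\alpha|+1\}$ at every stage, so every vertex receives the $h$-rank sequence $\langle 1, 2, \ldots, m\rangle$ where $m$ is its width minus one. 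I would make this precise as a lemma: in any run of the parity-customized $\muR$ ranking assignment, $\Projh f(v) = \langle 1, 2, \ldots, |v|-1\rangle$ for every vertex $v$, and conversely any function satisfying the parity analogue of Definition~\ref{def:new-Rabin-ranking} has this form.

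Next I would verify the correctness of the customized construction, i.e., the parity analogue of Lemma~\ref{lem:new-Rabin-ranking}: $\scrG_w$ is parity accepting (viewing parity as Streett) if and only if $\scrG_w$ admits an odd customized ranking. The forward direction follows the graph-surgery argument of Procedure~\ref{pro:new-Rabin-assignment}, noting that after Step~\eqref{en:new-Rabin-surgery-4} disables all $G(t)$-vertices for $t \in \Cover(\langle 1, \ldots, i+1\rangle) = [1..i+1]$ (using the chain inclusions, $\Cover$ of $\langle 1,\ldots,i+1\rangle$ is exactly $[1..i+1]$), every surviving component satisfies the reduced parity condition $\langle G,B\rangle_{[i+2..k]}$; the backward direction is the standard stabilization argument as in Lemma~\ref{lem:CB-ranking} and Lemma~\ref{lem:GC-ranking}. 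Since correctness of the generic scheme (Procedure~\ref{pro:generic-complementation}) only needs the local property ($\Succ$) and the global Miyano–Hayashi breakpoint, I mainly need to check that the ``$\Succ$'' relation on customized level rankings — adjacency constraints coming from Definition~\ref{def:new-Rabin-ranking} conditions~\eqref{en:new-Rabin-ranking-2a} and~\eqref{en:new-Rabin-ranking-2b} — is still a purely local check.

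For the complexity bound, I would run the same reduction as in Section~\ref{sec:complexity}: view customized level ranks as $n \times (\mu+1)$ matrices of $\muGC$ ranks, strip trailing even ranks (paying a factor $n^n$) and normalize, obtaining $\muR$-matrices, then split $|\calM^{\muR}| \le |\calM^r| \cdot |\calM^h|$. By the lemma above, every row of every $h$-matrix is a prefix $\langle 1, 2, \ldots, m\rangle$, so an $h$-matrix is completely determined by the vector of row-lengths; but the row-lengths are already recoverable from the corresponding $r$-matrix (the length of row $i$ equals the number of columns before the trailing even rank), hence $|\calM^h| = 1$ once we condition on $\calM^r$, giving $|\calM^{\muR}| \le |\calM^r| \le |\calT^r(n,k)| = 2^{O(n \lg n)}$ by Lemma~\ref{lem:numeric-bound}. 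Combining with the $n^n$ factor and $|\calR^{\muR}| = |\calL^{\muR}|$, and noting $|\calR^{\muR}|$ dominates $2^{|Q|}$, yields $|Q'| = 2^{O(n \lg n)}$.

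The main obstacle I anticipate is making the claim ``$h$-matrices contribute nothing'' fully rigorous in a way that meshes with the matrix-stripping bookkeeping of Section~\ref{sec:complexity}: one must be careful that the normalization steps (a)–(c) defining $\calM^{\muR}$ interact correctly with the forced $h$-ranks — in particular that the $\langle 1,0\rangle$-padding and the deletion of trailing even ranks don't accidentally lose or duplicate information — and that the synchronization claim $\Mini(\alpha) = \{|\alpha|+1\}$ genuinely holds at \emph{every} call to Procedure~\ref{pro:new-GC-assignment} inside the customized $\muR$ assignment, including the recursive calls on components, which requires tracking that each component at stage $i$ satisfies precisely $\langle G,B\rangle_{[i+1..k]}$ and hence $\Cover$ of its accumulated index tuple is exactly an initial segment $[1..i]$. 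Once these are pinned down, the bound is immediate from Lemma~\ref{lem:numeric-bound}, which is why the excerpt emphasizes that that lemma is the crucial ingredient.
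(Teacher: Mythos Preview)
Your proposal is correct and follows essentially the same approach as the paper: Section~\ref{sec:parity-complementation} gives exactly this informal argument (the chain condition forces $\Mini(\alpha)=\{|\alpha|+1\}$, so $h$-projections are the redundant prefixes $\langle 1,\ldots,m\rangle$), and the appendix formalizes it by defining parity rankings directly as tuples of co-B\"uchi ranks (Procedure~\ref{pro:parity-assignment}, Definition~\ref{def:parity-ranking}), dropping $h$-ranks entirely and invoking Lemma~\ref{lem:numeric-bound}. One small quibble: your claim that row-lengths are recoverable from the $r$-matrix alone is not literally true after the $\langle 1,0\rangle$-padding (a genuine $r$-rank $1$ is indistinguishable from padding in the $r$-projection), but this is harmless since even unconditionally $|\calM^{h}|\le(\mu+1)^{n}=2^{O(n\lg n)}$, which is absorbed into the bound.
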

This bound matches the lower bound of B\"{u}chi complementation, and hence it is tight as B\"{u}chi automata are a subclass of parity automata. To the best of our knowledge, the previous best upper bound is $2^{O(nk \lg n)}$, which can be easily inferred from~\cite{KV05a} by treating parity automata as Rabin automata. 
\section{Concluding Remarks}
\label{sec:conclusion}
In this paper we improved Kupferman and Vardi's construction and obtained tight upper bounds for Streett and parity complementation (with respect to the $2^{\Theta(X)}$ asymptotic notation). Figure~\ref{fig:bound-summary} in the appendix rounds up the complementation complexities for $\omega$-automata of common types.

Our inquiry also leads to some unexpected outcomes, which we believe, would help understand the strength and weakness of different types of $\omega$-automata in modeling and specifying system behaviors.
\begin{enumerate}
\item Parity complementation has the same asymptotical bound as B\"{u}chi complementation while parity automata have richer and more elegant acceptance conditions than B\"{u}chi automata.
\item Streett automata are exponentially more succinct than B\"{u}chi automata while Rabin automata are not. On the other hand, Streett complementation is much easier than Rabin complementation when $k$ is large (i.e., $k=\omega(n)$). In the extreme case where $k=\Theta(2^{n})$ and $N=\Theta(nk)$ (the automata size), Streett complementation is in $O(N^{\lg^{2} N})=O(2^{\lg^{3} N})$ while Rabin complementation is still in $2^{\Omega(N)}$.
\end{enumerate}

Further investigation on Streett and parity complementation is desired as exponential gaps can hide in the asymptotical notations of the form $2^{\Theta(X)}$. The situation is different from that of B\"{u}chi where the best lower and upper bounds have been shown polynomially close.

We think that $\ITS$ and $\TOP$ characterize intrinsic combinatorial properties on run graphs with universal Rabin conditions. Interesting questions remain for further investigation. What would be the counterparts for run graphs with existential Streett conditions? The discovery of such combinatorial properties might help us understand the complexity of Streett determinization, for which there exists a huge gap between the current lower bound $2^{\Omega(n^{2} \lg n)}$~\cite{CZ11a} and upper bound $2^{O(nk\lg nk)}$~\cite{Pit06} when $k=\omega(n)$. Also of theoretical interest is whether there exists a type of $\omega$-automata whose determinization is considerably harder than complementation. In the case of B\"{u}chi, the two operations were both proved to be in $2^{\Theta(n \lg n)}$.

\newpage
\appendix
\section{Parity Complementation Construction}

Parity ranking is used for parity complementation. Let $\scrG_{0}$ be a parity accepting $\Delta$-graph. The following procedure assign a parity ranking to $\scrG_{0}$.
\begin{procedure}[Parity Ranking Assignment]\mbox{}\\
\label{pro:parity-assignment}
\!\!Input: a parity accepting $\scrG_{0}$.
Output: a parity ranking $f$.
Repeat for $i \in [0..k]$ if $\scrG_{i} \not = \emptyset$.
\begin{enumerate}[label=\ref{pro:parity-assignment}.\arabic*,ref=\ref{pro:parity-assignment}.\arabic*]
\item\label{en:parity-surgery-1} Assign $\scrG_{i}$ a co-B\"{u}chi ranking $\cb_{i+1}$ with respect to the co-B\"{u}chi condition $[B(i+1)]$.
\item\label{en:parity-surgery-2} Remove all vertices $v \in V$ if $\cb_{i+1}(v)$ is even.
\item\label{en:parity-surgery-3} Remove all edges $\langle v, v' \rangle \in E$ if $\cb_{i+1}(v) > \cb_{i+1}(v')$.
\item\label{en:parity-surgery-4} Remove all edges $\langle v, v' \rangle \in E$ if $v \in G(i+1)$.
\item\label{en:parity-surgery-5} $f(v)=\langle \cb_{1}(v), \ldots \cb_{i+1}(v) \rangle$ iff $v$ is removed from $\scrG_{i}$.
\end{enumerate}
\end{procedure}

Let $\calD^{\Parity}$ denote the set of parity ranks, a set of tuples of co-B\"{u}chi ranks of length at most $k+1$, which can be produced by Procedure~\ref{pro:parity-assignment}. Similar as before, let $|v|$ be the width of $v$ with respect to a given function $f: V \to (\calD^{\CB})^{k+1}$.  We say that $v$ is \emph{odd} if $|v|>1$ and $v \in G(|v|-1)$.
\begin{definition}[Parity Ranking]
\label{def:parity-ranking}
A parity ranking is a function $f: V \to (\calD^{\CB})^{k+1}$ satisfying the following conditions.
\begin{enumerate}[label=\ref{def:parity-ranking}.\arabic*,ref=\ref{def:parity-ranking}.\arabic*]
\item\label{en:parity-ranking-1} For every vertex $v \in V$ with $|v|=m+1 \ge 2$, we have $v \not \in B(m)$.
\item\label{en:parity-ranking-2} For every edge $\langle v, v' \rangle \in E$ with $|v|=m+1$, $|v'|=m'+1$ and $m''=\min(m, m')$, we have
\begin{enumerate}
\item \label{en:parity-ranking-2a} $f(v) \ge_{m''} f(v')$.
\item \label{en:parity-ranking-2b} $f(v) \ge_{m''+1} f(v')$, or $v$ is odd.
\end{enumerate}
\end{enumerate}
\end{definition}
A parity ranking is \emph{odd} with respect to $\scrG_{w}$ if every infinite path in $\scrG_{w}$ visits infinitely many odd vertices.

\begin{lemma}
\label{lem:parity-ranking}
$\scrG_{w}$ is parity accepting if and only if $\scrG_{w}$ admits a parity ranking.
\end{lemma}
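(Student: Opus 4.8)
The statement is proved (in the form: $\scrG_w$ is parity accepting iff it admits an \emph{odd} parity ranking) by establishing the two implications separately, following the pattern of Lemmas~\ref{lem:CB-ranking}, \ref{lem:Rabin-ranking} and~\ref{lem:new-Rabin-ranking}. Here ``parity accepting'' means every infinite path of $\scrG_w$ satisfies the universal Rabin condition $[G,B]_I$. The single structural fact I lean on throughout is the collapse induced by the chain $B(1)\subset G(1)\subset\cdots\subset B(k)\subset G(k)$: if $\Inf(\varrho)\cap B(j)=\emptyset$ then $\Inf(\varrho)\cap B(i)=\emptyset$ for all $i\le j$ and $\Inf(\varrho)\cap G(i)=\emptyset$ for all $i<j$; consequently $[B]_I$ degenerates to $[B(1)]$, and a path satisfying $[G,B]_I$ in fact satisfies $[G(j),B(j)]$ for the unique largest such $j$.

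\textbf{Soundness ($\Leftarrow$).} Let $f$ be an odd parity ranking and $\varrho$ an infinite path. As in the Rabin case, Conditions~\ref{en:parity-ranking-2a}--\ref{en:parity-ranking-2b} together with finiteness of $V$ imply that along $\varrho$ the width $|v|$ and the ranks on the first $|v|-1$ coordinates are eventually constant; call the eventual width $\bar m+1$. The new point is that $\bar m\ge 1$: a width-$1$ vertex is one removed at stage $0$ of Procedure~\ref{pro:parity-assignment}, hence carries an \emph{even} $\cb_1$-rank, whereas by Condition~\ref{en:CB-ranking-2} and the fact that a vertex of even co-B\"uchi rank is finite in the residual graph, $\varrho$ must stabilize at an \emph{odd} first coordinate; so the tail of $\varrho$ has width $\ge 2$. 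On that tail the odd vertices are exactly the $G(\bar m)$-vertices (there $|v|-1=\bar m$), so oddness of $f$ gives $\Inf(\varrho)\cap G(\bar m)\ne\emptyset$, while Condition~\ref{en:parity-ranking-1} gives $\Inf(\varrho)\cap B(\bar m)=\emptyset$. Hence $\varrho$ satisfies $[G(\bar m),B(\bar m)]$, and $\scrG_w$ is parity accepting.

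\textbf{Completeness ($\Rightarrow$).} Assume $\scrG_w$ is parity accepting and run Procedure~\ref{pro:parity-assignment}. First, by induction on $i$ each component of $\scrG_i$ satisfies $[G,B]_{[i+1..k]}$, hence by the chain collapse is co-B\"uchi accepting for $[B(i+1)]$, so Lemma~\ref{lem:CB-ranking} legitimizes Step~\ref{en:parity-surgery-1}; Steps~\ref{en:parity-surgery-2}--\ref{en:parity-surgery-4} then split each component into pieces satisfying $[G,B]_{[i+2..k]}$ (deleting finite vertices, cutting $\cb_{i+1}$-decreasing edges, and disabling $G(i+1)$-vertices destroys exactly the paths satisfying $[G(i+1),B(i+1)]$), so at stage $k$ the residual condition is $[G,B]_\emptyset$, which is unsatisfiable, forcing $\scrG_{k+1}=\emptyset$; thus $f$ is total with all widths $\le k+1$. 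Second, $f$ is a parity ranking: a width-$(m+1)$ vertex with $m\ge 1$ has odd $\cb_m$-rank and so is $B(m)$-free by Condition~\ref{en:CB-ranking-1}, giving~\ref{en:parity-ranking-1}; Conditions~\ref{en:parity-ranking-2a}--\ref{en:parity-ranking-2b} are read off from Steps~\ref{en:parity-surgery-3}--\ref{en:parity-surgery-4} as in Lemma~\ref{lem:Rabin-ranking}, the only surviving edge that can violate $\ge_{m''+1}$ being one out of a $G$-vertex of the matching width, i.e. an odd vertex. Third, $f$ is odd: given an infinite path $\varrho$, the eventual-width analysis applied to this $f$ yields an eventual width $\bar m+1$; the tail of $\varrho$ then lives in $\scrG_{\bar m}$ with all $\cb_{\bar m+1}$-ranks even, so if $\varrho$ visited $G(\bar m)$ only finitely often its tail would persist as an infinite path inside $\scrG_{\bar m}$ (its edges escaping every earlier Step~\ref{en:parity-surgery-3}/\ref{en:parity-surgery-4} cut, since the relevant $\cb$-ranks are eventually constant and the tail vertices, being $B(\bar m)$-free, avoid $G(1),\ldots,G(\bar m-1)$), contradicting that an infinite path cannot stabilize at an even co-B\"uchi rank; hence $\varrho$ meets $G(\bar m)$, the odd vertices on that tail, infinitely often.

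\textbf{Main obstacle.} The routine direction is soundness; the real work is in the third item of completeness, where one must combine the ``eventual width stabilizes'' lemma with a careful account of how an infinite path of $\scrG_w$ is successively severed across stages and why the stage at which it is finally severed is exactly $\bar m$. This is the parity specialization of the analogous argument in~\cite{KV05a}, and the chain condition makes it cleaner because each set $\Mini(\cdot)$ is a singleton, so no index bookkeeping is needed. Alternatively, since a parity condition is a Streett condition and, on a parity instance, every $\muR$ ranking produced by Procedure~\ref{pro:new-Rabin-assignment} has $h$-projection $\langle 1,2,\ldots,m\rangle$ and thus collapses to a parity ranking, the lemma can be deduced from Lemma~\ref{lem:new-Rabin-ranking}; the direct argument via Procedure~\ref{pro:parity-assignment} is nonetheless shorter and self-contained.
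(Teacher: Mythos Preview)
The paper does not supply a standalone proof of this lemma. Section~\ref{sec:parity-complementation} argues that under the Rabin chain condition every $\Mini(\alpha)$ is a singleton, so on a parity instance a $\muR$ ranking has $h$-projection $\langle 1,\ldots,m\rangle$ and is nothing more than a parity ranking; the lemma is therefore the specialization of Lemma~\ref{lem:new-Rabin-ranking} that you describe in your final paragraph, and that is the route the paper relies on.

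Your direct argument tracks Lemmas~\ref{lem:first-nonincreasing}--\ref{lem:new-Rabin-ranking-2} and is correct in outline, but one step in the soundness half is mis-argued. To show $\bar m\ge 1$ you write that ``a width-$1$ vertex is one removed at stage~$0$ of Procedure~\ref{pro:parity-assignment}, hence carries an even $\cb_1$-rank'', and you then cite Condition~\ref{en:CB-ranking-2}. In the $\Leftarrow$ direction $f$ is an \emph{arbitrary} odd parity ranking in the sense of Definition~\ref{def:parity-ranking}, not one output by Procedure~\ref{pro:parity-assignment}, so you may appeal neither to the procedure nor to Definition~\ref{def:CB-ranking}. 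The correct replacement is the parity analogue of Lemmas~\ref{lem:first-nonincreasing} and~\ref{lem:finitely-even}, argued purely from Conditions~\ref{en:parity-ranking-2a}--\ref{en:parity-ranking-2b} together with the (odd,\,\ldots,\,odd,\,even) shape of ranks in $\calD^{\Parity}$: oddness of $f$ yields infinitely many width-$>1$ vertices on $\varrho$, and infinitely many transitions down to width~$1$ would force infinitely many strict $>_1$ drops in the first coordinate, contradicting well-foundedness. With this correction both directions go through as you outline.
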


\section{Proofs}
\label{sec:proof}

We split Lemma~\ref{lem:new-Rabin-ranking} to Lemmas~\ref{lem:new-Rabin-ranking-1} and~\ref{lem:new-Rabin-ranking-2}. Lemma~\ref{lem:new-Rabin-ranking-1} requires the following two additional lemmas.

\begin{lemma}
\label{lem:first-nonincreasing}
Let $\varrho$ be an infinite path in $\scrG$ that admits a $\muR$ ranking $f$. Then for any $i \in \bbN$, $f(\varrho(i)) \ge_{1} f(\varrho(i+1))$.
\end{lemma}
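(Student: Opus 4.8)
The plan is to reduce the statement to a purely local fact about single edges: I will show that $f(v)\ge_{1}f(v')$ for \emph{every} edge $\langle v,v'\rangle$ of $\scrG$, and then specialize to $v=\varrho(i)$ and $v'=\varrho(i+1)$. Fix an edge $\langle v,v'\rangle$, write $|v|=m+1$, $|v'|=m'+1$, $m''=\min(m,m')$, and recall that Definition~\ref{def:new-Rabin-ranking}.\ref{en:new-Rabin-ranking-2} supplies \ref{en:new-Rabin-ranking-2a} $f(v)\ge_{m''}f(v')$ together with \ref{en:new-Rabin-ranking-2b} $f(v)\ge_{m''+1}f(v')$ or $v$ is odd. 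Since $\ge_{p}$ implies $\ge_{1}$ for every $p\ge 1$, both clauses are more than enough once the prefixes being compared include position $1$.

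First I would dispatch the generic situation $m''\ge 1$: clause \ref{en:new-Rabin-ranking-2a} already gives $f(v)\ge_{m''}f(v')$, hence $f(v)\ge_{1}f(v')$. If instead $m''=0$, then $|v|=1$ or $|v'|=1$. When $|v|=1$ the vertex $v$ cannot be odd (oddness requires $|v|>1$), so the disjunction in \ref{en:new-Rabin-ranking-2b} collapses to $f(v)\ge_{m''+1}f(v')$, hence $f(v)\ge_{1}f(v')$. The only genuinely delicate configuration is $|v|\ge 2$, $v$ odd, and $|v'|=1$: then $m''=0$, clause \ref{en:new-Rabin-ranking-2a} compares the empty prefix and is vacuous, and clause \ref{en:new-Rabin-ranking-2b} is satisfied simply because $v$ is odd, so neither part of \ref{en:new-Rabin-ranking-2} says anything about the first coordinates.

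This last case is the main obstacle, and I expect it to be resolved not from Definition~\ref{def:new-Rabin-ranking} but from the provenance of $f$: the map $v\mapsto f(v)[1]$ is the very first $\muGC$ ranking $\gc_{1}$ assigned at stage $0$ of Procedure~\ref{pro:new-Rabin-assignment} (Step~\eqref{en:new-Rabin-surgery-1}), and at stage $0$ there is a single component, all of $\scrG_{0}=\scrG_{w}$. Hence $\gc_{1}$ is a genuine $\GC$ ranking of $\scrG_{w}$ (Procedure~\ref{pro:new-GC-assignment} is a restriction of Procedure~\ref{pro:GC-assignment}), so by Definition~\ref{def:GC-ranking}.\ref{en:GC-ranking-2} it is nonincreasing along every edge; in particular $f(v)[1]=\gc_{1}(v)\ge\gc_{1}(v')=f(v')[1]$, i.e., $f(v)\ge_{1}f(v')$, which settles the delicate case (and in fact reproves the other cases uniformly). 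Applying this edgewise along $\varrho$ yields $f(\varrho(i))\ge_{1}f(\varrho(i+1))$ for all $i\in\bbN$. The point requiring care when writing this up is the claim that the first column of a $\muR$ ranking is a $\GC$ ranking of the whole graph: Definition~\ref{def:new-Rabin-ranking} alone does not pin down the first coordinate of width-$1$ vertices, so one must appeal to how the ranking is produced (or first normalize the ranking accordingly).
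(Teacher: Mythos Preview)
Your case analysis mirrors the paper's proof exactly: the paper also splits on whether $\varrho(i)$ is odd, uses Property~\ref{en:new-Rabin-ranking-2b} when it is not (getting $f(\varrho(i))\ge_{m''+1}f(\varrho(i+1))$, hence $\ge_{1}$), and uses Property~\ref{en:new-Rabin-ranking-2a} when it is. The difference is that the paper dispatches your ``delicate'' configuration by simply asserting ``if $\varrho(i)$ is odd, then by definition $m'' \ge 1$'' and then invoking~\ref{en:new-Rabin-ranking-2a}. But oddness only forces $|\varrho(i)| > 1$, i.e.\ $m \ge 1$; it says nothing about $m' = |\varrho(i+1)| - 1$, so $m'' = \min(m,m')$ can still be $0$ when $|\varrho(i+1)|=1$. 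The paper's argument thus silently skips precisely the case you isolate.

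Your diagnosis of that case is correct: Definition~\ref{def:new-Rabin-ranking} alone does not constrain $f(v)[1]$ versus $f(v')[1]$ when $v$ is odd and $|v'| = 1$ (indeed one can write down a function satisfying Definition~\ref{def:new-Rabin-ranking} where the first coordinate strictly increases across such an edge). Your procedural fix---observing that the first column of any ranking produced by Procedure~\ref{pro:new-Rabin-assignment} is $\gc_{1}$, a genuine $\GC$ ranking of all of $\scrG_{0}$, hence nonincreasing along every edge by Definition~\ref{def:GC-ranking}.\ref{en:GC-ranking-2}---is the right repair, and your closing caveat is exactly on point. Technically this proves the lemma only for procedure-produced rankings rather than for arbitrary $\muR$ rankings as literally stated, but that is what the paper actually needs downstream and what its own one-line argument tacitly assumes.
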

\begin{proof}
Let $i \in \bbN$, $|\varrho(i)|=m+1$, $|\varrho(i+1)|=m'+1$ and $m''=\min(m,m')$. By Property~\eqref{en:new-Rabin-ranking-2b}, $f(\varrho(i)) \ge_{m''+1} f(\varrho(i+1))$ unless $\varrho(i)$ is odd. If $\varrho(i)$ is not odd, then $f(\varrho(i)) \ge_{m''+1} f(\varrho(i+1))$ implies $f(\varrho(i))[1] \ge f(\varrho(i+1))[1]$. But if $\varrho(i)$ is odd, then by definition $m'' \ge 1$. Now by Property~\eqref{en:new-Rabin-ranking-2a}, we still have $f(\varrho(i))[1] \ge f(\varrho(i+1))[1]$, that is $f(\varrho(i)) \ge_{1} f(\varrho(i+1))$.
\end{proof}

\begin{lemma}
\label{lem:finitely-even}
If $\scrG$ admits a $\muR$ ranking $f$, then any infinite path in $\scrG$ has only finitely many vertices of width $1$.
\end{lemma}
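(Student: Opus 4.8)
The plan is to watch the first coordinate of $f$ along an infinite path $\varrho$ and let its parity decide everything. By Lemma~\ref{lem:first-nonincreasing} we have $f(\varrho(0))\ge_{1}f(\varrho(1))\ge_{1}\cdots$, so the $r$-rank of $f(\varrho(i))[1]$ is a non-increasing sequence of naturals bounded by $2n$; hence it stabilises: there are $i_{0}\in\bbN$ and $r^{\ast}\in[2n+1]$ such that the $r$-rank of $f(\varrho(i))[1]$ equals $r^{\ast}$ for every $i\ge i_{0}$.

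The second ingredient is a structural remark about $\muR$ ranks. A $\muR$ rank is a tuple $\langle c_{1},\dots,c_{m},c_{m+1}\rangle$ in which $c_{1},\dots,c_{m}$ are odd $\GC$ ranks and $c_{m+1}$ is an even one: this is the shape on which $\Projh$ is defined, hence the one Definition~\ref{def:new-Rabin-ranking} implicitly assumes, and it is also what Property~\eqref{en:new-Rabin-ranking-1c} forces, since an odd $\GC$ rank carries an $h$-rank and the $h$-projection $\Projh f(v)$ has length $m=|v|-1$. Consequently $|v|=1$ iff $f(v)$ is a singleton whose entry is an \emph{even} $\GC$ rank, and $|v|\ge 2$ iff $f(v)[1]$ is an \emph{odd} $\GC$ rank. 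Since the order on $\GC$ ranks is read off $r$-ranks, and even (resp.\ odd) $\GC$ ranks have even (resp.\ odd) $r$-ranks, this yields the equivalence: the $r$-rank of $f(v)[1]$ is odd iff $|v|\ge 2$.

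Combining the two: if $r^{\ast}$ is odd, then $|\varrho(i)|\ge 2$ for all $i\ge i_{0}$, so every width-$1$ vertex of $\varrho$ occurs among $\varrho(0),\dots,\varrho(i_{0}-1)$, and there are only finitely many of them, which is the claim. If $r^{\ast}$ is even, then $|\varrho(i)|=1$ for all $i\ge i_{0}$; but a width-$1$ vertex is never odd, so $\varrho$ would visit no odd vertex beyond level $i_{0}$, contradicting that the $\muR$ ranking $f$ is odd. Hence $r^{\ast}$ is odd, and the lemma follows.

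I expect the even case to be the only point that needs care: it is precisely where oddness of $f$ enters, and it must be used here --- the constant assignment $f\equiv\langle 0\rangle$ is a perfectly valid (but non-odd) $\muR$ ranking for which every vertex has width $1$ --- so this statement is to be read with $f$ an odd $\muR$ ranking, which is exactly the setting in which Lemma~\ref{lem:new-Rabin-ranking} invokes it. The stabilisation step and the width/parity bookkeeping are routine.
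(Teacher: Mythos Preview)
Your argument is correct and essentially the same as the paper's: both use Lemma~\ref{lem:first-nonincreasing} to get a non-increasing first coordinate, the observation that width~$1$ corresponds to an even first $r$-rank while width~$>1$ corresponds to an odd one, and the oddness of $f$ to rule out an eventually-even tail. The only cosmetic difference is that you first stabilise and then split on the parity of $r^{\ast}$, whereas the paper argues by contradiction that infinitely many width-$1$ vertices together with infinitely many odd (hence width-$>1$) vertices would force infinitely many strict drops in the first $r$-rank; your remark that the lemma tacitly assumes $f$ is \emph{odd} (and fails for $f\equiv\langle 0\rangle$) is exactly right and matches how the paper uses the hypothesis in its proof.
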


\begin{proof}
Let $\varrho$ be an infinite path that contains infinitely many vertices of width $1$. By Lemma~\ref{lem:first-nonincreasing} we have
\begin{align}
\label{eq:nonincreasing-1}
f(\varrho(0)) \ge_{1} f(\varrho(1)) \ge_{1} f(\varrho(2)) \ge_{1} \ldots
\end{align}
Because $f$ is a $\muR$ ranking for $\scrG$, $\varrho$ has infinitely many odd vertices. Since odd vertices have width greater than $1$, $\varrho$ has to contain infinitely many edges between vertices of width $>1$ and vertices of width $1$. Then in the above sequence, infinitely many relations are $>_{1}$, a contradiction to well-foundedness.
\end{proof}

\begin{lemma}
\label{lem:new-Rabin-ranking-1}
If $\scrG$ admits a $\muR$ ranking $f$, then all paths of $\scrG$ satisfy $[G,B]_{I}$.
\end{lemma}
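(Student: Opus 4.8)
The plan is to fix an arbitrary infinite path $\varrho$ of $\scrG$ and exhibit an index $t_{0}\in I$ with $\Inf(\varrho)\cap G(t_{0})\neq\emptyset$ and $\Inf(\varrho)\cap B(t_{0})=\emptyset$; as $\varrho$ is arbitrary this shows all paths satisfy $[G,B]_{I}$. By Lemma~\ref{lem:finitely-even} there is $N_{0}$ with $|\varrho(i)|\ge 2$ for every $i\ge N_{0}$, so from level $N_{0}$ on each $f(\varrho(i))$ has at least two components and a genuine first $h$-rank.

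The core of the argument is a prefix-stabilization claim, proved by induction on $\ell\ge 1$: there are $N_{\ell}\ge N_{0}$ and a fixed tuple $\beta^{(\ell)}=\langle\langle r_{1},j_{1}\rangle,\dots,\langle r_{\ell},j_{\ell}\rangle\rangle$ of odd $\GC$ ranks such that for all $i\ge N_{\ell}$ we have $|\varrho(i)|\ge\ell+1$ and the first $\ell$ components of $f(\varrho(i))$ equal $\beta^{(\ell)}$. The base case $\ell=1$ is essentially Lemma~\ref{lem:first-nonincreasing}: the first numeric rank is $\ge_{1}$-non-increasing along $\varrho$, hence eventually constant, and once both endpoints have width $\ge 2$, Property~\eqref{en:new-Rabin-ranking-2a} gives $f(\varrho(i))\ge_{1}f(\varrho(i+1))$, so equal first numeric ranks force equal first $h$-ranks by the convention on the $\GC$-order. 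For the inductive step, past $N_{\ell}$ the widths fall into exactly three cases: (a) infinitely many $i$ have $|\varrho(i)|=\ell+1$ with $\varrho(i)$ odd; (b) only finitely many $i$ have $|\varrho(i)|=\ell+1$, so the width is eventually $\ge\ell+2$; (c) infinitely many $i$ have $|\varrho(i)|=\ell+1$ but only finitely many such $\varrho(i)$ are odd. In case (b), component $\ell+1$ is defined cofinitely and, by Property~\eqref{en:new-Rabin-ranking-2a} applied to edges with both endpoints of width $\ge\ell+2$, its numeric rank is non-increasing, hence eventually constant, and then its $h$-rank stabilizes too, so components $1,\dots,\ell+1$ are eventually fixed and we pass to level $\ell+1$; since widths are bounded by $\mu+1$ this cannot recur indefinitely.

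If case (a) occurs at some level $\ell$ we are done: for $i\ge N_{\ell}$ with $|\varrho(i)|=\ell+1$ and $\varrho(i)$ odd, oddness together with $\Projh f(\varrho(i))=\langle j_{1},\dots,j_{\ell}\rangle$ gives $\varrho(i)\in G(t)$ for some $t\in\Cover(\langle j_{1},\dots,j_{\ell}\rangle)$; this set is finite, so one fixed $t_{0}$ in it has $\varrho(i)\in G(t_{0})$ for infinitely many such $i$, whence $\Inf(\varrho)\cap G(t_{0})\neq\emptyset$; and Property~\eqref{en:new-Rabin-ranking-1b} forces every $\varrho(i)$ with $i\ge N_{\ell}$ to avoid $B(t)$ for all $t\in\Cover(\langle j_{1},\dots,j_{\ell}\rangle)$, so $\Inf(\varrho)\cap B(t_{0})=\emptyset$; thus $\varrho$ satisfies $[G(t_{0}),B(t_{0})]$, hence $[G,B]_{I}$.

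The main obstacle is therefore case (c): there the width oscillates between $\ell+1$ and larger values, all width-$(\ell+1)$ vertices in the tail are non-odd, yet $\varrho$ still visits odd vertices --- necessarily of width $\ge\ell+2$ --- infinitely often. I would track the numeric rank of component $\ell+1$: it is non-increasing across every edge except a ``descent'' leaving an odd vertex of width $\ge\ell+2$ for a width-$(\ell+1)$ vertex, and a parity comparison (the last component of a width-$(\ell+1)$ rank is even, while component $\ell+1$ of any wider rank is odd) shows that if such descents were only finitely many the numeric rank would eventually be a constant that is simultaneously even and odd; hence there are infinitely many such descents, i.e.\ infinitely many odd vertices of width $\ge\ell+2$. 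One then iterates the whole trichotomy at levels $\ell+1,\ell+2,\dots$, now relative to the maximal high-width excursions of $\varrho$ (finite sub-paths carrying $\muR$-style rankings with the stable prefix $\beta^{(\ell)}$ stripped off), a nested induction on $\mu-\ell$; the bound $\mu+1$ on widths together with well-foundedness of numeric ranks and the ``infinitely many odd vertices'' requirement eventually yields a contradiction, so case (c) never happens and case (a) is reached at some level $\ell^{*}\le\mu$, completing the proof. The delicate bookkeeping in this nested step --- reconciling the global oddness requirement with the purely local non-increase and parity constraints on rank tuples across fluctuating widths --- is what I expect to be the hardest part; everything else reduces to Lemmas~\ref{lem:first-nonincreasing} and~\ref{lem:finitely-even} and to reading off $t_{0}$ from the $\Cover$ function.
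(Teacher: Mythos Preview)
Your overall strategy matches the paper's, but the induction on $\ell$ is an unnecessary detour. Let $m''+1$ be the \emph{minimum} vertex width that appears infinitely often in $\varrho$ (this is exactly the parameter the paper introduces). For every $\ell<m''$ the width $\ell+1$ occurs only finitely often, so your case~(b) fires and the induction advances; consequently your case~(c) cannot arise before $\ell=m''$. In other words, your induction merely walks $\ell$ up to $m''$, at which point case~(b) becomes impossible and you face (a) or (c) exactly once. The paper skips the walk and sets $\ell:=m''$ from the outset.

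At that single level the paper disposes of case~(c) without any nested recursion: under~(c), eventually no width-$(m''+1)$ vertex is odd, and the paper argues via Property~\eqref{en:new-Rabin-ranking-2b} that $f(\varrho(i))\ge_{m''+1}f(\varrho(i+1))$ holds from some point on; but the $(m''+1)$-th coordinate is an even $\GC$ rank precisely at width-$(m''+1)$ vertices and an odd $\GC$ rank at wider vertices, and both recur infinitely often, contradicting well-foundedness. Hence case~(a) holds at $\ell=m''$, and your own case-(a) paragraph --- pigeonhole on the finite set $\Cover(\alpha[1..m''])\subseteq I$ to select $t_{0}$ with $G(t_{0})$ visited infinitely often, together with Property~\eqref{en:new-Rabin-ranking-1b} for the $B(t_{0})$ side --- finishes exactly as in the paper.

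Your proposed nested induction over finite ``high-width excursions'' is therefore unnecessary, and as sketched it is also problematic: the odd-ranking hypothesis (every infinite path visits odd vertices infinitely often) is a global property of the whole path and does not localize to its finite sub-paths, so the inner recursion has no driver. Once you jump to $\ell=m''$ directly, all of that bookkeeping disappears.
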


\begin{proof}
Let $f$ be a $\muR$ ranking for $\scrG$ and $\varrho$ an infinite path in $\scrG$. By Lemma~\ref{lem:finitely-even}, $\varrho$ contains a suffix in which all vertices have width $>1$. Let $m''$ be such that $m''+1$ is the minimum width of vertices in $\varrho$ that appears infinitely often. So $m'' \ge 1$.

We show that there exist infinitely many odd vertices with width $m''+1$. Suppose the opposite. Since there are infinitely many odd vertices and $m''+1$ is the minimum width of vertices occurring infinitely often, from some point on in $\varrho$, all odd vertices (there are infinitely many) have width $> m''+1$. By Property~\eqref{en:new-Rabin-ranking-2b}, for some $j_{0} \ge 0$, we have an infinite non-increasing sequence
\begin{align}
\label{eq:nonincreasing-2}
f(\varrho(j_{0})) \ge_{m''+1} f(\varrho(j_{0}+1)) \ge_{m''+1} f(\varrho(j_{0}+2)) \ge_{m''+1} \cdots \, .
\end{align}
However, the projection of $m''+1$-th positions of the sequence contains infinitely many even $\GC$ ranks (from vertices of width $m''+1$) as well as odd $\GC$ ranks (from vertices of width $> m''+1$), which together with~\eqref{eq:nonincreasing-3}, contradicts well-foundedness.

By Property~\eqref{en:new-Rabin-ranking-2a}, for some $j_{1} \ge 0$, we have an infinite non-increasing sequence
\begin{align}
\label{eq:nonincreasing-3}
f(\varrho(j_{1})) \ge_{m''} f(\varrho(j_{1}+1)) \ge_{m''} f(\varrho(j_{1}+2)) \ge_{m''} \cdots
\end{align}
which further implies that from some $j_{2} \ge j_{1}$, we have
\begin{align}
\label{eq:nonincreasing-4}
f(\varrho(j_{2})) =_{m''} f(\varrho(j_{2}+1)) =_{m''} f(\varrho(j_{2}+2)) =_{m''} \cdots \, .
\end{align}
By Property~\eqref{en:new-Rabin-ranking-1a}, for any vertex $v$ after level $j_{2}$, $v \not \in B(t)$ for any $t \in \Cover(\alpha[1..m''])$ where $\alpha=\Projh f(v)$. Because there exists infinitely many $j_{3} > j_{2}$ such that $\varrho(j_{3})$ is odd and $|\varrho(j_{3})|=m''+1$, by definition of oddness, we have infinitely many vertex $v'$ such that $v' \in G(t)$ for some $t \in \Cover(\alpha[1..m''])$. Because $\Cover(\alpha[1..m''])$ is finite, there must be some $t'$ such that $G(t')$ is visited infinitely often by $\varrho$. Thus, $\varrho$ satisfies $[G,B]_{I}$, in particular, the condition $[G(t'), B(t')]$.
\end{proof}

\begin{lemma}
\label{lem:new-Rabin-ranking-2}
Let $\scrG$ be a $\Delta$-graph that satisfy $[G,B]_{I}$. Then $\scrG$ admits a $\muR$ ranking.
\end{lemma}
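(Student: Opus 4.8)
The plan is to prove this (hard) direction constructively: apply Procedure~\ref{pro:new-Rabin-assignment} to $\scrG$, show it is well-defined and terminates within $\mu$ stages, and then verify that the resulting function $f$ is an odd $\muR$ ranking in the sense of Definition~\ref{def:new-Rabin-ranking}. Mirroring the splitting of Lemma~\ref{lem:new-Rabin-ranking}, this is the converse of Lemma~\ref{lem:new-Rabin-ranking-1}, and the content is that the construction actually produces a legitimate, odd ranking.

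First I would dispose of the inner Procedure~\ref{pro:new-GC-assignment} by proving a strengthening of Lemma~\ref{lem:GC-ranking}: if $\scrG_{0}$ is $\GC$ accepting for $[B]_{I\setminus\Cover(\alpha)}$ and every vertex of $\scrG_{0}$ is $B(t)$-free for all $t\in\Cover(\alpha)$, then Procedure~\ref{pro:new-GC-assignment} on input $\alpha$ outputs a $\GC$ ranking all of whose odd ranks carry $h$-ranks in $\Mini(\alpha)$ and every infinite path of which visits infinitely many odd vertices. The only new ingredient beyond Lemma~\ref{lem:GC-ranking} is that Step~\eqref{en:new-GC-surgery-2} never stalls: whenever the (finite-vertex-free) graph $\scrG_{2i+1}$ is nonempty, $\GC$ acceptance forces some index to admit $B(j)$-free vertices (otherwise one greedily builds an infinite path hitting every $B(j)$ infinitely often), and a descent through Conditions~\eqref{eq:mini-1} and~\eqref{eq:mini-2} moves such a $j$ into $\Mini(\alpha)$: because all vertices are already $B(t)$-free for $t\in\Cover(\alpha)$, the inclusion $B(j')\cup\Cover(\alpha)\subseteq B(j)\cup\Cover(\alpha)$ transfers $B(j)$-freeness to $B(j')$-freeness, so a free index may always be replaced by a $\subseteq$-smaller or index-smaller free index, and this process is well-founded.

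The heart of the argument is the inductive invariant for Procedure~\ref{pro:new-Rabin-assignment}: at the start of stage $i$, $\scrG_{i}$ is a disjoint union of components, and each component $\calC$ (i)~has all its vertices carrying a common tuple $\langle\gc_{1},\ldots,\gc_{i}\rangle$ of odd $\muGC$ ranks whose $h$-projection $\alpha$ satisfies $\alpha[t]\in\Mini(\alpha[1..t))$ for every $t\le i$, and (ii)~satisfies the reduced Rabin condition $[G,B]_{I\setminus\Cover(\alpha)}$, hence also $[B]_{I\setminus\Cover(\alpha)}$, which legitimizes the call to Procedure~\ref{pro:new-GC-assignment}. For the inductive step, Step~\eqref{en:new-Rabin-surgery-2} leaves only odd-ranked vertices; Step~\eqref{en:new-Rabin-surgery-3} together with the $\GC$-ranking monotonicity~\eqref{en:GC-ranking-2} makes every new component carry a single constant $\muGC$ rank $\langle 2t-1,j\rangle$; and Step~\eqref{en:new-Rabin-surgery-4}, by deleting all outgoing edges of $G(t)$-vertices for $t$ in the enlarged cover, destroys every infinite path satisfying $[G(t),B(t)]$ for such $t$, so—since $\scrG$ satisfies $[G,B]_{I}$—a surviving infinite path must satisfy the further-reduced condition. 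Termination follows at once: the chosen index $j\in\Mini(\alpha)$ has $B(j)\not\subseteq\bigcup_{t}B(\alpha[t])$, so the covered state set strictly grows each stage, and once $\Cover(\alpha)=I$ the condition $[G,B]_{\emptyset}$ is unsatisfiable by an infinite path, forcing $\calC$ to vanish; hence at most $\mu=\min(n,k)$ stages, and $|v|\le\mu+1$ for all $v$.

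It remains to verify that $f$ is an odd $\muR$ ranking. The local conditions are read off the invariant: \eqref{en:new-Rabin-ranking-1c} is exactly the recorded $\Mini$-membership; \eqref{en:new-Rabin-ranking-1b} holds because a vertex with $h$-tuple $\langle h_{1},\ldots,h_{m}\rangle$ is $B(h_{t})$-free for each $t$, so it is not a $B(s)$-vertex for any $s$ covered by a prefix of that tuple; \eqref{en:new-Rabin-ranking-1a} holds because surviving Step~\eqref{en:new-Rabin-surgery-4} at earlier stages forbids membership in the corresponding $G(\cdot)$'s; and \eqref{en:new-Rabin-ranking-2a}--\eqref{en:new-Rabin-ranking-2b} follow by isolating the stage at which the edge $\langle v,v'\rangle$ is deleted—up to that stage $v$ and $v'$ sit in one component with identical $\muGC$ ranks, giving $f(v)=_{m''}f(v')$, and the $\GC$-ranking inequality at that stage yields $f(v)\ge_{m''+1}f(v')$, with the escape clause ``$v$ is odd'' absorbing the case where the edge is cut at $v$ by Step~\eqref{en:new-Rabin-surgery-4}. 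The genuinely substantive point, and the main obstacle, is oddness of $f$: for an infinite path $\varrho$ of $\scrG$—which by hypothesis satisfies $[G(t_{0}),B(t_{0})]$ for some $t_{0}$—the procedure's termination means $\varrho$ cannot stay intact; following $\varrho$ stage by stage, its tail acquires a constant odd $\muGC$ rank at each stage (the $\muGC$ rankings being odd and $r$-non-increasing along $\varrho$), and one must argue that the infinitely many vertices at which $\varrho$'s tail is finally severed by Step~\eqref{en:new-Rabin-surgery-4} are $G(\cdot)$-vertices witnessing oddness; equivalently, a $\varrho$ with only finitely many odd vertices would, via~\eqref{en:new-Rabin-ranking-2a}--\eqref{en:new-Rabin-ranking-2b} exactly as in Lemmas~\ref{lem:first-nonincreasing} and~\ref{lem:finitely-even}, force an infinite strictly-decreasing chain of $\muGC$ ranks at a fixed position, contradicting well-foundedness. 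Getting this matching right requires careful bookkeeping of the interplay between a vertex's width, the stage at which it is removed, and $\Cover$ of the appropriate prefix of its $h$-projection; together with the $\Mini$-descent of Procedure~\ref{pro:new-GC-assignment}, this is where essentially all the work lies.
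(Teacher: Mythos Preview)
Your plan is the paper's: run Procedure~\ref{pro:new-Rabin-assignment}, verify Definition~\ref{def:new-Rabin-ranking}, then prove oddness. The $\Mini$-descent argument you give for Step~\eqref{en:new-GC-surgery-2} (that a $B(j)$-free index can always be pushed down into $\Mini(\alpha)$) is a nice addition the paper leaves implicit. Two places in your sketch are imprecise enough to need the paper's case analysis, though.

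For Property~\eqref{en:new-Rabin-ranking-2}, the assertion ``up to that stage $v$ and $v'$ sit in one component with identical $\muGC$ ranks, giving $f(v)=_{m''}f(v')$'' presumes the deletion stage $\theta$ satisfies $\theta\ge m''$, which need not hold. The paper splits on $\theta=m''$, $\theta=m''-1$, and $\theta<m''-1$; in the last case one only has equality through position $\theta<m''$, and must first rule out deletion by Step~\eqref{en:new-Rabin-surgery-4} (it would force $|v|=\theta+2\le m''$, contradicting $|v|=m+1\ge m''+1$), so the edge is cut by Step~\eqref{en:new-Rabin-surgery-3}, giving a strict drop at position $\theta+1$ and hence $f(v)>_{m''+1}f(v')$ lexicographically. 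A single ``$\GC$-inequality at that stage'' does not cover this.

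For oddness, the final contradiction is not ``an infinite strictly-decreasing chain of $\muGC$ ranks at a fixed position''. After you stabilize the first $m{+}1$ positions along a tail of $\varrho$ with no odd vertices (where $m{+}1$ is the minimum recurring width), the $(m{+}1)$-th coordinate is a constant \emph{even} $\GC$ rank. You then argue that every edge of this tail survives into $\scrG_{m}$---deletion by Step~\eqref{en:new-Rabin-surgery-3} would break the equality, and by Step~\eqref{en:new-Rabin-surgery-4} would make the source odd---so the tail lies inside a single component $\calC$ of $\scrG_{m}$ on which $\gc_{m+1}$ is an odd $\GC$ ranking; an infinite path with constant even $\GC$ rank then contradicts Lemma~\ref{lem:GC-ranking}. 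Your analogy to Lemmas~\ref{lem:first-nonincreasing}--\ref{lem:finitely-even} is misleading: those manufacture a strictly decreasing sequence, whereas here the punchline is a constant-even tail violating the oddness of the component's $\GC$ ranking.
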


\begin{proof}
Let $f$ be the function produced by Procedure~\ref{pro:new-Rabin-assignment}. We show that all properties in Definition~\ref{def:new-Rabin-ranking} are satisfied by this $f$ and any infinite path in $\scrG_{w}$ visits odd vertices infinitely often.

Property~\eqref{en:new-Rabin-ranking-1}.
Let $v \in V$ be a vertex with width $m+1$ and $f(v)=\langle \langle r_{1}, i_{1} \rangle \ldots, \langle r_{m}, i_{m} \rangle, r_{m+1} \rangle$. Let $\alpha=\Projh f(v)=\langle i_{1}, \ldots, i_{m} \rangle$. We have $m \ge 1$.

Property~\eqref{en:new-Rabin-ranking-1b}.
The fact that $v$ get odd $\GC$ ranks in stage $0, \ldots, m-1$ means that $v \not \in B(\alpha_{j})$ for any $j \in [1..m]$. This implies for any $j \in [1..m]$, $v \not \in B(t)$ for any $t \in \Cover(\alpha[1..j])$.

Property~\eqref{en:new-Rabin-ranking-1a}.
If $m=1$, then this property holds trivially. Suppose $m \ge 2$, and for some $j \in [1..m-1]$, $v \in G(t)$ for some $t \in \Cover(\alpha[1..j])$. Due to Property~\eqref{en:new-Rabin-ranking-1b}, we have $v \not \in B(t)$. Then Step~\eqref{en:new-Rabin-surgery-4} in stage $j-1$ (note that stage numbering starts at $0$ and index numbering starts at $1$) will remove all outgoing edges of $v$, rendering $v$ as a finite vertex in $\scrG_{j}$. So in stage $j$, $v$ will get an even $\GC$ ranks. Since $j \in [1..m-1]$, for some $j' \in [2..m]$, $\alpha[j']$ is an even $\GC$ rank, contradicting the definition of $\alpha$. Therefore, Property~\eqref{en:new-Rabin-ranking-1a} follows.

Property~\eqref{en:new-Rabin-ranking-1c}.
At stage $i$ (for $i < m$), $v$ obtains a $\GC$ ranking with respect to $[G,B]_{I\setminus J}$ where $J = \Cover(\alpha[1..i])$. Therefore, $\alpha_{i+1} \not \in J$ and $B(\alpha_{i+1}) \not \subseteq \cup_{t \in J} B(t)$, where $\cup_{t \in J} B(t)$ is equal to $\cup_{j=1}^{i} B(\alpha[j])=\Cover(\alpha[1..i])$. This says that $\Mini(\alpha[1..i]) \not = \emptyset$. The property then follows from the way we carry out minimal $\GC$ ranking (Step~\eqref{en:new-GC-surgery-2} in Procedure~\ref{pro:new-GC-assignment}).

Property~\eqref{en:new-Rabin-ranking-2}. Let $v, v' \in V$ and $\langle v, v' \rangle \in E$. Let $m, m'$ be such that $|v|=m+1$, $|v'|=m'+1$ and $m'' = \min(m,m')$. Property~\eqref{en:new-Rabin-ranking-2a} trivially holds for $m''=1$. Property~\eqref{en:new-Rabin-ranking-2b} holds for $m''=1$ due to Lemma~\ref{lem:first-nonincreasing}. Now assume that $m''>1$. Let $\theta$ be the maximum number such that $\langle v, v' \rangle$ exists in the same component in $\scrG_{\theta}$. By Definition~\ref{def:GC-ranking}, we have $f(v)[i] \ge f(v')[i]$ for all $i \in [1..\theta+1]$.

Case 1: $\theta=m''$. In this case Property~\eqref{en:new-Rabin-ranking-2} is immediate, because by Property~\eqref{en:GC-ranking-2}, $f(v)[i] \ge f(v')[i]$ (for any $i \in [1..m''+1]$), which implies $f(v) \ge_{m''+1} f(v')$.

Case 2: $\theta<m''-1$. Then $v$ and $v'$ are in different components in $\scrG_{\theta+1}$. So $\langle v, v' \rangle$ must be removed at stage $\theta$. Since $\theta < m''$, both $f(v)[\theta+1]$ and $f(v')[\theta+1]$ are odd, and therefore $\langle v, v' \rangle$ cannot be removed by Step~\eqref{en:new-Rabin-surgery-2}. Nor can it be removed by Step~\eqref{en:new-Rabin-surgery-4}, because otherwise $v$ becomes finite in $\scrG_{\theta+1}$ and removed at stage $\theta+1 < m''$, and hence $|f(v)|=\theta+2 \le m'' < m''+1$, a contradiction. The only possibility left is that $\langle v, v' \rangle$ is removed by Step~\eqref{en:new-Rabin-surgery-3}. Then we have $f(v)[\theta+1] > f(v')[\theta+1]$. We already have $f(v)[i] \ge f(v')[i]$ for all $i \in [1..\theta+1]$, and $\theta+1 < m''$. Therefore, $f(v) >_{m''} f(v')$. We are done with Property~\eqref{en:new-Rabin-ranking-2}.

Case 3: $\theta=m''-1$. In this case, Property~\eqref{en:new-Rabin-ranking-2a} is immediate, because $f(v)[i] \ge f(v')[i]$ ($i \in [1..m'']$) implies $f(v) \ge_{m''} f(v')$. Property~\eqref{en:new-Rabin-ranking-2b} follows immediately if $f(v) >_{m''} f(v')$. So we assume $f(v) =_{m''} f(v')$. As before, $\langle v, v' \rangle$ cannot be removed by Step~\eqref{en:new-Rabin-surgery-2} because both $f(v)[\theta]$ and $f(v')[\theta]$ are odd. But now $\langle v, v' \rangle$ cannot be removed by Step~\eqref{en:new-Rabin-surgery-3}, for otherwise we have $f(v)[\theta+1] > f(v')[\theta+1]$, contradicting the assumption $f(v) =_{m''} f(v')$. Therefore, $\langle v, v' \rangle$ has to be removed by Step~\eqref{en:new-Rabin-surgery-4}, which implies that $|v|=m''+1$ and $v \in G(t)$ for some $t \in \Cover(\alpha[1..m''])$ (where $\alpha=\Projh f(v)$), that is, $v$ is odd.

What is left is to show that any infinite path in $\scrG_{w}$ visits odd vertices infinitely often. Suppose the opposite. Then there must exist an infinite path that visits no odd vertices. Let $\varrho$ be such a path. Let $m$ be such that $m+1$ is the minimum width of vertices that appears infinitely often in $\varrho$. We have $m \ge 1$ because of Lemma~\ref{lem:finitely-even}. Also $\varrho$ must have a suffix in which all vertices have width no less than $m+1$. By Property~\eqref{en:new-Rabin-ranking-2b}, for some $j \ge 0$, we have \begin{align*}
f(\varrho(j)) \ge_{m+1} f(\varrho(j+1)) \ge_{m+1} f(\varrho(j+2)) \ge_{m+1} \cdots \, ,
\end{align*}
and hence for some $j' \ge j$, we have
\begin{align*}
f(\varrho(j')) =_{m+1} f(\varrho(j'+1)) =_{m+1} f(\varrho(j'+2)) =_{m+1} \cdots \, .
\end{align*}
Since vertices with width $m+1$ appears infinitely often in $\varrho$, $f(\varrho(j'))[m+1]$ must be an even $\GC$ rank, which means there is no vertex with width greater than $m+1$ from $\varrho(j')$ on, and for any vertex $j'' \ge j'$, $f(\varrho(j''))[m+1]$ is an even $\GC$ rank. Let $\varrho'$ denote this suffix starting from $\varrho(j')$. We claim that there must be infinitely many $i$ such that edges $\langle \varrho'(i), \varrho'(i+1) \rangle$ do not exist in the same component $\calC$ in $\scrG_{m}$. Suppose otherwise, let $\varrho''$ be the suffix of $\varrho'$ such that for any $i \ge 0$, $\langle \varrho''(i), \varrho''(i+1) \rangle$ appear in the same component in $\scrG_{m}$. Let $f[m+1]$ denote the $(m+1)$-th projection of $f$. We have
\begin{align*}
f[m+1](\varrho''(0)) = f[m+1](\varrho''(1)) = f[m+1](\varrho''(2)) = \cdots \, .
\end{align*}
Although $f[m+1]$ in general may not be a $\GC$ ranking for $\scrG_{m}$, $f[m+1]$, when restricted to a $\calC'$ component in $\scrG_{m}$, is indeed a $\GC$ ranking for $\calC'$. So if all $\varrho''(i)$ ($i \ge 0$) are in the same component $\calC$, then Lemma~\ref{lem:GC-ranking} is violated because $\varrho''$ is a path on which all vertices have even $\GC$ ranks.

Now let us assume that for infinitely many $i$, $\langle \varrho'(i), \varrho'(i+1) \rangle$ is removed at stage $m_{1}$ for some $m_{1} < m$. We have three cases to analyze.

Case 1: For infinitely many $i$, $\langle \varrho'(i), \varrho'(i+1) \rangle$ is removed at stage $m_{1}$ by Step~\eqref{en:new-Rabin-surgery-2}. This is impossible because both $f(\varrho'(i))[m_{1}]$ and $f(\varrho'(i+1))[m_{1}]$ are odd.

Case 2: For infinitely many $i$, $\langle \varrho'(i), \varrho'(i+1) \rangle$ is removed at stage $m_{1}$ by Step~\eqref{en:new-Rabin-surgery-3}. This is also impossible as $f(\varrho'(i))[m_{1}] > f(\varrho'(i+1))[m_{1}]$ contradicts $f(\varrho'(i)) =_{m} f(\varrho'(i+1))$.

Case 3: For infinitely many $i$, $\langle \varrho'(i), \varrho'(i+1) \rangle$ is removed at stage $m_{1}$ by Step~\eqref{en:new-Rabin-surgery-4}. If $m_{1} < m-1$. Then for some $i^{*}$ in those infinitely many $i$'s, $|\varrho'(i^{*})| = m_{1}+1 < m$, contradicting the assumption that all vertices in $\varrho'$ have width $m+1$. So $m_{1} = m-1$. The removal of $\langle \varrho'(i), \varrho'(i+1) \rangle$ by Step~\eqref{en:new-Rabin-surgery-4} is due to $\varrho'(i) \in G(t)$ for some $t \in \Cover(\alpha[1..m])$ where $\alpha=\Projh f(\varrho'(i))$. Recall that we already have $m \ge 1$ due to Lemma~\ref{lem:finitely-even}. They together just say that $\varrho'(i)$ is odd. Because we have infinitely many such $i$, we have infinitely many odd vertices in $\varrho'$, and therefore in $\varrho$.
\end{proof}

\begin{varlem}{\ref{lem:bound-of-TH}.}
$H(n,k) = 2^{O(k \lg k)}$ for $k = O(n)$ and $H(n,k) = 2^{O(n \lg n)}$ for $k = \omega(n)$.
\end{varlem}
\begin{proof}
Recall that for fixed $n$, $k$ and $B: I \to 2^{Q}$, $T(n,k,B)$ is uniquely determined. Also note that the height of $T(n,k,B)$ is bounded by $\mu=\min(n,k)$ and the maximum branching factor is bounded by $k$. We have two cases to consider.
\begin{enumerate}
\item $k = O(n)$. In this case we have $\mu = O(k)$. Therefore, we have
\begin{align*}
|T(n,k,B)| \le \sum_{i=1}^{\mu} k^{i} \le \mu k^{\mu} = 2^{\lg \mu + \mu \lg k} = 2^{O(k \lg k)} .
\end{align*}
Since $B$ is chosen arbitrarily, we have $H(n,k) = 2^{O(k \lg k)}$.
\item $k = \omega(n)$. Assume $k \ge n \ge 2$. Let $B': I' \to 2^{Q}$ where $I'=[1..k']$ be an extension of $B$ such that $\range(B')$ contains all singletons from $Q$. Formally,
\begin{align*}
& \forall i \in [1..k] \ B'(i) = B(i) \, ,
& \forall q \in Q      \ \{q\} \in \range(B') \, .
\end{align*}
By Lemma~\ref{lem:reshuffle-labels}, we assume without loss of generality that $B'(i)=\{q_{i-1}\}$ for $i \in [1..n]$. Due to existence of all singletons, the minimal extension at each node is always done by adding singletons, that is, for any index sequence $\alpha$, $\Mini(\alpha) \subseteq [1..n]$. Therefore, each nonempty path in $T(n,k',B')$ corresponds to a nonempty prefix of a permutation of $[1..n]$ and vice versa. All leaves of $T(n,k',B')$ are at height $n$, and $T(n,k',B')$ has exactly $n!$ leaves and exactly $n!$ internal nodes at height $n-1$. Also, each node at height $j < n-1$ has at least two children, which implies that the total number of nodes at height $j < n-1$ is bounded by $n!$. By Lemma~\ref{lem:add-singleton}, we have
\begin{align*}
|T(n,k,B)| \le |T(n,k',B')| \le 3n! = 2^{O(n \lg n)}.
\end{align*}
As $B$ is chosen arbitrarily, we have $H(n,k) = 2^{O(n \lg n)}$. \hfill \qedhere
\end{enumerate}
\end{proof}

\begin{lemma}
\label{lem:reshuffle-labels}
Let $B:I \to 2^{Q}$, $B': I \to 2^{Q}$ be two injective functions such that $\range(B)=\range(B')$. Then $|T(n,k,B)| = |T(n,k,B')|$.
\end{lemma}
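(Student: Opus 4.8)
The plan is to show that the tree $T(n,k,B)$, and hence its number of non-root nodes, is determined up to isomorphism by $\range(B)$ together with the state set $Q$; since $\range(B)=\range(B')$ and $I=[1..k]$ for both, this immediately yields $|T(n,k,B)|=|T(n,k,B')|$.

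First I would reformulate $\Mini$ in terms of sets of states. For a node $\alpha$ put $U_B(\alpha)=\bigcup_{i=1}^{|\alpha|}B(\alpha[i])$. Since each $B(\alpha[i])$ lies in this union, $U_B(\alpha)=\bigcup_{l\in\Cover(\alpha)}B(l)$ and dually $\Cover(\alpha)=\{\,l\in I\mid B(l)\subseteq U_B(\alpha)\,\}$, so $\Cover(\alpha)$ and $U_B(\alpha)$ determine one another. Reading off \eqref{eq:mini-1}--\eqref{eq:mini-2}, the set $\Mini(\alpha)$ depends only on $B$ and $\Cover(\alpha)$, hence only on $B$ and the coverable set $S:=U_B(\alpha)$; I will write it $\Mini_B(S)$. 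The content of \eqref{eq:mini-1} is that $B(j)\cup S$ is a minimal element of the family of one-step extensions $\{\,B(j')\cup S\mid j'\in I,\ B(j')\not\subseteq S\,\}$, and \eqref{eq:mini-2} keeps, among the indices realizing a given minimal extension, only the numerically least one. The crucial lemma to isolate is therefore: \emph{the map $j\mapsto B(j)\cup S$ restricts to a bijection from $\Mini_B(S)$ onto the set $\calE(S)$ of minimal elements of $\{\,S\cup R\mid R\in\range(B),\ R\not\subseteq S\,\}$} --- and $\calE(S)$ is visibly a function of $S$ and $\range(B)$ only. Surjectivity uses that, given a minimal extension $S'$, its least realizing index satisfies \eqref{eq:mini-1} (else $S'$ would not be minimal) and \eqref{eq:mini-2} (by minimality of the index); injectivity is exactly \eqref{eq:mini-2}.

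Next I would introduce the set-labelled tree $\widehat T$ whose non-root nodes are the strictly increasing sequences $\emptyset=S_0\subsetneq S_1\subsetneq\cdots\subsetneq S_m$ of coverable subsets of $Q$ with $S_i\in\calE(S_{i-1})$ for each $i$; by construction $\widehat T$ depends only on $Q$ and $\range(B)$. I then claim that $\Phi(\alpha):=\langle U_B(\alpha[1..1]),\dots,U_B(\alpha)\rangle$ is a bijection from the non-root nodes of $T(n,k,B)$ onto those of $\widehat T$. Well-definedness of $\Phi$ is Property~\eqref{en:ITS} read through the reformulation above (each prefix step $\alpha[i]\in\Mini_B(\alpha[1..i{-}1])$ makes $U_B(\alpha[1..i])$ a minimal, in particular proper, extension of $U_B(\alpha[1..i{-}1])$). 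Injectivity follows by examining the first position where two nodes differ and applying injectivity of $j\mapsto B(j)\cup S$; surjectivity follows by rebuilding a node of $T(n,k,B)$ from a sequence in $\widehat T$ step by step, each time taking the unique index of $\Mini_B(S_{i-1})$ realizing $S_i$. Both directions are straightforward inductions on sequence length once the bijection lemma is in hand.

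Finally, $|T(n,k,B)|$ equals the number of non-root nodes of $\widehat T$, an object depending only on $Q$ and $\range(B)$; the same computation applies to $B'$, and $\range(B)=\range(B')$ concludes. The main obstacle is the bijection lemma in the second paragraph --- verifying precisely that the tie-break \eqref{eq:mini-2} collapses each minimal one-step extension to exactly one representative index, so that the number of children of a node of $T(n,k,B)$ coincides with the purely set-theoretic quantity $|\calE(U_B(\alpha))|$; everything afterwards is routine bookkeeping by induction on depth.
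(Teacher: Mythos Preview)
Your proposal is correct and follows essentially the same idea as the paper's proof: the children of a node $\alpha$ in an $\ITS$ are determined entirely by the accumulated set $U_B(\alpha)$ together with $\range(B)$, so two trees built from $B,B'$ with the same range are isomorphic level by level. The paper argues this directly---observing that a node $\alpha$ in $T(n,k,B)$ and a node $\alpha'$ in $T(n,k,B')$ with $B(\alpha)=B'(\alpha')$ have the same number of children, whose accumulated sets are a permutation of one another---and concludes by induction on height; you instead factor through an explicit intermediate set-labelled tree $\widehat T$ and exhibit bijections $T(n,k,B)\to\widehat T\leftarrow T(n,k,B')$, which is a cleaner way to package the same observation (and your ``bijection lemma'' $\Mini_B(S)\leftrightarrow\calE(S)$ is exactly what the paper's permutation remark encodes).
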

\begin{proof}
The condition $\range(B)=\range(B')$ means that $B$ and $B'$ just name subsets of $Q$ differently. We extend $B$, $B'$ to functions from $I^{*}$ to $2^{Q}$ such that for $\alpha \in I^{*}$,
\begin{align*}
B(\alpha) &= \bigcup_{i=1}^{|\alpha|} B(\alpha[i]) \,, & B'(\alpha) &= \bigcup_{i=1}^{|\alpha|} B'(\alpha[i]) \,.
\end{align*}
By Definition~\ref{def:ITS}, children of a node $\alpha$ in an $\ITS$ is completely determined by $B(\alpha)$. So a node $\alpha$ in $T(n,k,B)$ has the same number of children as a node $\alpha'$ in $T(n,k,B')$ if $B(\alpha)=B'(\alpha')$. Moreover, if $\alpha_{1}, \ldots, \alpha_{j}$ are children of $\alpha$ and $\alpha'_{1}, \ldots, \alpha'_{j}$ are children of $\alpha'$, Then $B(\alpha_{1}), \ldots, B(\alpha_{j})$ are just a permutation of $B'(\alpha'_{1}), \ldots, B'(\alpha'_{j})$. By induction on tree height, there is a one-to-one correspondence between nodes in $T(n,k,B)$ and nodes in $T(n,k,B')$. Thus $|\calT(n,k,B)| = |\calT(n,k,B')|$.
\end{proof}

\begin{lemma}
\label{lem:add-singleton}
Let $T(n,k,B)$ and $T(n,k',B')$ be two $\ITS$ such that $B'$ extends $B$ by naming a singleton that $B$ does not name. Then $|T(n,k,B)| \le |T(n,k',B')|$.
\end{lemma}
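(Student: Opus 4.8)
The plan is to recast the size of an $\ITS$ as a recursion over state sets and then prove a stronger per-node-set inequality by induction. Observe, as in the proof of Lemma~\ref{lem:reshuffle-labels}, that the children of a node $\alpha$ in an $\ITS$ depend only on the set $S_{\alpha}:=\bigcup_{i}B(\alpha[i])$ of states it covers (in~\eqref{eq:mini-1}--\eqref{eq:mini-2}, $\Cover(\alpha)$ is to be read as this set); hence the whole subtree below $\alpha$ depends only on $S_{\alpha}$. Writing $\Mini_{B}(S)$ for the value of $\Mini(\alpha)$ at any $\alpha$ with $S_{\alpha}=S$, and $\tau_{B}(S)$ for the number of non-root nodes of the subtree whose root covers $S$, we get $|T(n,k,B)|=\tau_{B}(\emptyset)$ together with the well-founded recursion $\tau_{B}(S)=\sum_{j\in\Mini_{B}(S)}\bigl(1+\tau_{B}(S\cup B(j))\bigr)$ (well founded because $j\in\Mini_{B}(S)$ forces $S\cup B(j)\supsetneq S$). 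Let $\{q\}$ be the new singleton and $k'=k+1$ its index, so $B'(i)=B(i)$ for $i\le k$ and $B'(k')=\{q\}$ (any relabelling is harmless by Lemma~\ref{lem:reshuffle-labels}). It suffices to prove $\tau_{B}(S)\le\tau_{B'}(S)$ for every $S\subseteq Q$, which I do by induction on $|Q\setminus S|$; the base case $S=Q$ is $0\le 0$.

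First a bookkeeping step comparing $\Mini_{B}(S)$ and $\Mini_{B'}(S)$: the $B'$-candidates at $S$ are exactly the $B$-candidates together with $k'$ precisely when $q\notin S$; an index $j\le k$ in $\Mini_{B}(S)$ is dropped from $\Mini_{B'}(S)$ exactly when $q\in B(j)$ and $B(j)\not\subseteq S\cup\{q\}$ (nothing $\le k$ is ever added, since adding a candidate only makes minimality harder); and $k'\in\Mini_{B'}(S)$ iff $q\notin S$ and no $j\le k$ has $q\in B(j)\subseteq S\cup\{q\}$. If $q\in S$, then $k'$ is covered at $S$ and at every superset of $S$, so $\Mini_{B}(S)=\Mini_{B'}(S)$ and the claim follows by applying the induction hypothesis to each child $S\cup B(j)\supsetneq S$. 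So assume $q\notin S$ and put $M=S\cup\{q\}$; there are two subcases, according to whether some $j_{q}\le k$ satisfies $q\in B(j_{q})\subseteq M$.

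If such a $j_{q}$ exists, nothing is lost: (i) no $j\in\Mini_{B}(S)$ can have $q\in B(j)$ and $B(j)\not\subseteq M$, for then the non-covered index $j_{q}$ gives the strictly smaller extension $B(j_{q})\cup S=M\subsetneq B(j)\cup S$, contradicting~\eqref{eq:mini-1} for $j$; and (ii) $k'\notin\Mini_{B'}(S)$, since $j_{q}<k'$ with $B(j_{q})\cup S=\{q\}\cup S$ preempts $k'$ through~\eqref{eq:mini-2}. Hence $\Mini_{B}(S)=\Mini_{B'}(S)$ and the induction hypothesis on the children finishes this subcase. The interesting subcase is when no such $j_{q}$ exists: then $k'\in\Mini_{B'}(S)$ and $\Mini_{B'}(S)=\{k'\}\sqcup(\Mini_{B}(S)\setminus R)$ with $R=\{\,j\in\Mini_{B}(S):q\in B(j)\,\}$ (each such $j$ automatically having $B(j)\not\subseteq M$ here). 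The key point is that every $j\in R$ remains a minimal extension from $M$, i.e.\ $R\subseteq\Mini_{B'}(M)$: since $q\in B(j)$ we have $B(j)\cup M=B(j)\cup S$, and any purported competitor $j''$ for $j$ with base $M$ is handled by cases --- $j''=k'$ is impossible (it is covered at $M$); if $j''\le k$ and $q\in B(j'')$ then $B(j'')\cup M=B(j'')\cup S$, so $j''$ already contradicts minimality/index-minimality of $j$ over $S$; if $j''\le k$ and $q\notin B(j'')$ then $B(j'')\cup M=B(j'')\cup S\cup\{q\}$, and the strict (resp.\ equal) containment being assumed forces $B(j'')\cup S\subsetneq B(j)\cup S$, again contradicting minimality of $j$ over $S$. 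Granting $R\subseteq\Mini_{B'}(M)$, each $j\in R$ is a child of the $k'$-node carrying a subtree of size $\tau_{B'}(M\cup B(j))=\tau_{B'}(S\cup B(j))$ (again because $q\in B(j)$), so $\tau_{B'}(M)\ge\sum_{j\in R}\bigl(1+\tau_{B'}(S\cup B(j))\bigr)\ge\sum_{j\in R}\bigl(1+\tau_{B}(S\cup B(j))\bigr)$ by induction. Feeding this, together with the induction hypothesis on the children in $\Mini_{B}(S)\setminus R$, into $\tau_{B'}(S)=\bigl(1+\tau_{B'}(M)\bigr)+\sum_{j\in\Mini_{B}(S)\setminus R}\bigl(1+\tau_{B'}(S\cup B(j))\bigr)$ and comparing with $\tau_{B}(S)=\sum_{j\in\Mini_{B}(S)}\bigl(1+\tau_{B}(S\cup B(j))\bigr)$ gives $\tau_{B'}(S)\ge 1+\tau_{B}(S)$: the single new child $k'$ absorbs exactly the subtrees hanging from the dropped indices in $R$, with one node to spare.

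The main obstacle is the crux $R\subseteq\Mini_{B'}(M)$ --- that enlarging the base from $S$ to $S\cup\{q\}$ cannot create a new index, or promote an old one, that undercuts a formerly minimal $j\in R$. Everything else is the bookkeeping step and routine induction on $|Q\setminus S|$. I expect the crux to be a short but fiddly case split (on whether $j''=k'$ or $j''\le k$, and, in the latter case, whether $q\in B(j'')$), each branch collapsing to ``$j''$ already witnessed non-minimality of $j$ over $S$'', the two running identities being that the candidate set grows by at most $\{k'\}$ and that $B(j)\cup(S\cup\{q\})=B(j)\cup S$ whenever $q\in B(j)$.
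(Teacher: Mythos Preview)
Your proof is correct. The case analysis you carry out is essentially the same as the paper's, but the packaging is different and, in my view, cleaner. The paper constructs an explicit level-by-level tree transformation $\Theta$ and asserts that $\Theta(T(n,k,B))=T(n,k',B')$ with $|\Theta(T)|\ge|T|$; its cases 1, 2(a), 2(b) correspond exactly to your case $q\in S$, subcase~1, and subcase~2. You instead abstract the subtree size into the recursion $\tau_{B}(S)$ and prove the pointwise inequality $\tau_{B}(S)\le\tau_{B'}(S)$ by induction on $|Q\setminus S|$, which avoids having to verify that the transformed tree literally equals $T(n,k',B')$. The crux $R\subseteq\Mini_{B'}(M)$ that you isolate and prove is precisely what justifies step~(ii) of the paper's case 2(b) (moving the $q$-containing children under the new $k'$-node); the paper leaves this verification implicit. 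Your subcase~1 also makes explicit that $\Mini_{B}(S)=\Mini_{B'}(S)$ there, which is slightly sharper than the paper's relabelling description. What your approach buys is rigor and modularity; what the paper's buys is a concrete picture of how the tree grows.
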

\begin{proof}
It suffices to show that $|T(n,k,B)| \le |T(n,k',B')|$ when $k'=k+1$, $I'=[1..k']$, $B'(i)=B(i)$ for $i \in I$ and $B'(k+1)$ names a new singleton. Without loss of generality we assume $B'(k+1)=\{q_{0}\}$. We show a tree transformation $\Theta$ that turns $T(n,k,B)$ into $T(n,k+1,B')$.

Recall that labels on the path from the root to a node $\alpha$ is in this order: $\epsilon, \alpha[1], \ldots, \alpha[|\alpha|]$. Let $T_{\alpha}$ denote the subtree rooted at $\alpha$. We say that a state $q \in Q$ is named by $\alpha$ if $q \in B(\alpha)=\bigcup_{i=1}^{|\alpha|} B(\alpha[i])$. We define a tree transformation $\theta$ such that $\theta(T_{\alpha})$ is as follows.
\begin{enumerate}
\item $q_{0} \in B(\alpha)$. Then $\theta(T_{\alpha})=T_{\alpha}$.
\item $q_{0} \not \in B(\alpha)$. Let $\alpha_{1}, \ldots, \alpha_{l}$ list all children of $\alpha$, where each of $\alpha_{1}, \ldots, \alpha_{j}$ names $q_{0}$, but none of $\alpha_{j+1}, \ldots, \alpha_{l}$ does. Formally,
\begin{align*}
q_{0} &\in B(\alpha_{i})      &   (i \in [1..j], j \ge 0) \, , \\
q_{0} &\not \in B(\alpha_{i}) &   (i \in [j+1..l], j,l \ge 0) \, .
\end{align*}
\begin{enumerate}
\item $(\exists i \in [1..j])\, B(\alpha) \cup \{q_{0}\} = B(\alpha) \cup B(\alpha_{i})$. Then we must have $j=1$. Let $\theta(T_{\alpha})$ be the tree obtained from $T_{\alpha}$ by replacing the label of $\alpha_{1}$ by $k+1$.
\item $(\forall i \in [1..j])\, B(\alpha) \cup \{q_{0}\} \subset B(\alpha) \cup B(\alpha_{i})$. Let $\theta(T_{\alpha})$ be the tree obtained from $T_{\alpha}$ by the following procedure.
\begin{enumerate}
\item Add to $\alpha$ a new child $\beta$ labeled with $k+1$, i.e., $\beta = \alpha \cdot \langle k+1 \rangle$.
\item Remove subtrees $T_{\alpha_{1}}, \ldots, T_{\alpha_{j}}$ from $\alpha$ and make them children of $\beta$.
\item Add to $\beta$ $l-j$ new leaves labeled with $\alpha_{j+1}[|\alpha_{j+1}|], \ldots, \alpha_{l}[|\alpha_{l}|]$.
\item Grow every new leaf into a full $\ITS$ using $\Mini$ with respect to $n$, $k'$ and $B'$.
\end{enumerate}
\end{enumerate}
\end{enumerate}

In any case, $\theta(T_{\alpha})$ is an $\ITS$ with root labeled with $\alpha[|\alpha|]$. It is clear from the above procedure that $|\theta(T_{\alpha})| \ge |T_{\alpha}|$. Now we define $\Theta$ such that $\Theta(T)$ is the tree obtained from $T$ by applying $\theta$ on $T$ level by level, from top to bottom. Example~\ref{ex:ITS-trans} shows such a transformation. It is not hard to verify that $T(n,k',B')=\Theta(T(n,k,B))$. Therefore, we have $|T(n,k,B)| \le |T(n,k',B')|$.
\end{proof}

\begin{example}
\label{ex:ITS-trans}
Let us revisit Example~\ref{ex:ITS} and see how $T(3,3,B)$ is transformed to $T(3,4,B')$ via $\Theta$. Applying $\theta$ at the root of $T(3,3,B)$ we have $T_{1}$:
\begin{align*}
\begin{tikzpicture}[scale=0.7]
\Tree [.$\epsilon:\emptyset$
        [.$5:\{q_{1}\}$
          [.$2:\{q_{0}\}$
            [.$4:\{q_{2}\}$ ]
          ]
          [.$4:\{q_{2}\}$
            [.$2:\{q_{0}\}$ ]
          ]
        ]
        [.$2:\{q_{0}\}$
          [.$1:\{q_{0},q_{1}\}$
            [.$3:\{q_{1},q_{2}\}$ ]
          ]
          [.$4:\{q_{2}\}$
            [.$1:\{q_{0},q_{1}\}$ ]
          ]
        ]
        [.$4:\{q_{2}\}$
          [.$3:\{q_{1},q_{2}\}$
            [.$2:\{q_{0}\}$ ]
          ]
          [.$2:\{q_{0}\}$
            [.$1:\{q_{0},q_{1}\}$ ]
          ]
        ]
      ]
\end{tikzpicture}
\end{align*}
Applying $\theta$ at nodes $\langle 2 \rangle$ and $\langle 4 \rangle$ in $T_{1}$, we have $T_{2}$:
\begin{align*}
\begin{tikzpicture}[scale=0.7]
\Tree [.$\epsilon:\emptyset$
        [.$5:\{q_{1}\}$
          [.$2:\{q_{0}\}$
            [.$4:\{q_{2}\}$ ]
          ]
          [.$4:\{q_{2}\}$
            [.$2:\{q_{0}\}$ ]
          ]
        ]
        [.$2:\{q_{0}\}$
          [.$1:\{q_{1}\}$
            [.$3:\{q_{1},q_{2}\}$ ]
          ]
          [.$4:\{q_{2}\}$
            [.$1:\{q_{0},q_{1}\}$ ]
          ]
        ]
        [.$4:\{q_{2}\}$
          [.$3:\{q_{1}\}$
            [.$2:\{q_{0}\}$ ]
          ]
          [.$2:\{q_{0}\}$
            [.$1:\{q_{0},q_{1}\}$ ]
          ]
        ]
      ]
\end{tikzpicture}
\end{align*}
Applying $\theta$ at nodes $\langle 2,4 \rangle$ and $\langle 4,2 \rangle$ in $T_{2}$, we have $T_{3}$:
\begin{align*}
\begin{tikzpicture}[scale=0.7]
\Tree [.$\epsilon:\emptyset$
        [.$5:\{q_{1}\}$
          [.$2:\{q_{0}\}$
            [.$4:\{q_{2}\}$ ]
          ]
          [.$4:\{q_{2}\}$
            [.$2:\{q_{0}\}$ ]
          ]
        ]
        [.$2:\{q_{0}\}$
          [.$1:\{q_{1}\}$
            [.$3:\{q_{1},q_{2}\}$ ]
          ]
          [.$4:\{q_{2}\}$
            [.$1:\{q_{1}\}$ ]
          ]
        ]
        [.$4:\{q_{2}\}$
          [.$3:\{q_{1}\}$
            [.$2:\{q_{0}\}$ ]
          ]
          [.$2:\{q_{0}\}$
            [.$1:\{q_{1}\}$ ]
          ]
        ]
      ]
\end{tikzpicture}
\end{align*}
And no more application of $\theta$ is possible. It is not hard to verify that $T_{3}=T(3,4,B')$.
\end{example}

\begin{varlem}{\ref{lem:numeric-bound}~(Numeric Bound).}
$|\calT^{r}(n,k)| = 2^{O(n \lg n)}$.
\end{varlem}
\begin{proof}
Let $T(e,l)$ denote the number of ordered trees with $e$ edges and $l$ leaves. $T(e,l)$ are called \emph{Narayana numbers}, which for $e, l \ge 1$, assume the following closed form~\cite{Nar59,DZ80}:
\begin{align*}
T(e,l)=\frac{1}{e} \binom{e}{l} \binom{e}{l-1} .
\end{align*}
A $\TOP$ has at most $n$ leaves and at most $n \times \mu$ edges. The labels of internal nodes in a $\TOP$ are all determined by the labels of leaves. The number of labels on $l$ leaves is bounded by $l^{n}$, which corresponds to the number of functions from $Q$ to $[1..l]$. Therefore we have
\begin{align*}
|\calT^{r}(n,k)| \le \sum_{e=1}^{n \cdot \mu} \sum_{l=1}^{\min(n,e)}  T(e,l) \cdot l^{n}
               \le n^{3} \cdot T(n^{2}, n) \cdot n^{n}
               \le n \binom{n^{2}}{n} \binom{n^{2}}{n-1} \cdot n^{n}
               = 2^{O(n \lg n)} . \tag*{\qedhere}
\end{align*}
\end{proof} 

\end{document}